\documentclass[12pt,letterpaper]{article}
\usepackage{jheppub}
\usepackage[utf8]{inputenc}
\usepackage{amsthm}
\usepackage{dcolumn}
\usepackage{cancel}
\usepackage{booktabs}
\usepackage{multirow}
\usepackage{esvect}

\newcolumntype{C}[1]{>{\centering\let\newline\\\arraybackslash\hspace{0pt}}m{#1}}
\newtheorem{lemma}{Lemma}
\newtheorem*{corollary}{Corollary}
\def\be{\begin{equation}}
\def\ee{\end{equation}}
\def\ba#1\ea{\begin{align}#1\end{align}}
\def\bg#1\eg{\begin{gather}#1\end{gather}}
\def\bm#1\em{\begin{multline}#1\end{multline}}
\def\bmd#1\emd{\begin{multlined}#1\end{multlined}}
\def\rrb#1{\raisebox{0cm}{\rotatebox{-45}{#1}}}

\def\a{\alpha}
\def\b{\beta}

\def\s{\sigma}

\def\la{\label}

\def\re{\ref}
\def\er{\eqref}
\def\se{\section}
\def\sse{\subsection}

\def\td{\tilde}

\def\cd{\cdots}

\def\qu{\quad}

\def\({\left(}
\def\){\right)}
\def\[{\left[}
\def\]{\right]}
\def\<{\langle}
\def\>{\rangle}

\begin{document}

\title{Holographic Entropy Cone \mbox{with Time Dependence} in Two Dimensions}
\author[a]{Bart{\l}omiej Czech}
\emailAdd{bartlomiej.czech@gmail.com}
\affiliation[a]{Institute for Advanced Study, Tsinghua University, Beijing 100084, China}
\author[b]{and Xi Dong}
\emailAdd{xidong@ucsb.edu}
\affiliation[b]{Department of Physics, University of California, Santa Barbara, California 93106, USA}

\abstract{
In holographic duality, if a boundary state has a geometric description that realizes the Ryu-Takayanagi proposal then its entanglement entropies must obey certain inequalities that together define the so-called holographic entropy cone. A large family of such inequalities have been proven under the assumption that the bulk geometry is static, using a method involving contraction maps. By using kinematic space techniques, we show that in two boundary (three bulk) dimensions, all entropy inequalities that can be proven in the static case by contraction maps must also hold in holographic states with time dependence.
}

\maketitle

\se{Introduction}

\noindent
There is a widespread belief that the Ryu-Takayanagi (RT) proposal \cite{rt, rt2} is a powerful hint for understanding quantum gravity. It says that in holographic duality von Neumann entropies of boundary subregions are `geometrized' in the bulk: they are represented by extremal surfaces (when the bulk theory is Einstein gravity) or similar extremal, extended objects. What does this fact tell us about the fundamental theory of gravity?

Part of the message has already been decoded; examples include subregion-subregion duality \cite{dualofrho, subregionduality} and the quantum error-correcting property of the bulk \cite{errorcorr}. Yet we expect the RT formula---and its covariant generalization, the Hubeny-Rangamani-Takayanagi (HRT) formula \cite{hrt, maximin}---to usher further progress. One promising direction to explore is the following question: which theories and states admit classical bulk duals? This problem (and its converse \cite{inviolable}) has inspired several lines of ongoing research, for example \cite{jamiejoe, mellin, adsfromcft, erepr, universalspectrum, bulkpoint, changlin, entholonomies, kabatlifschytz}. The present paper is an amalgamation of two approaches, which investigate special properties of holographic states and the implications of the RT and HRT formulas.

The program \cite{hec} that directly begot this paper starts by asking: if entanglement entropies are `geometrized' as prescribed by Ryu and Takayanagi, what restrictions on the class of states does this impose? A hallmark example of extra restrictions levied by the geometrization of entanglement is the monogamy of mutual information \cite{monogamy}:
\begin{equation}
S({AB})+S({BC})+S({CA}) \ge S({ABC})+S(A)+S(B)+S(C).
\label{mmiintro}
\end{equation}
Here $A, B, C$ are disjoint boundary regions and $S(X)$ is the von Neumann entropy on $X$ which can be one of these regions and their various unions.
To formalize the problem, one considers the set of all possible tuples of entanglement entropies on these regions and their unions that can be achieved by a quantum state such that each individual entropy is an extremum of some functional such as a bulk area. A simple argument reveals that this object is a cone in the multi-dimensional `entropy space' whose axes parameterize the von Neumann entropies of subregions and their unions. The shape of the cone demarcates the divide between states with smooth geometric descriptions and other states, the latter being either absent or very exotic from the viewpoint of semiclassical gravity.

Any hypersurface in the entropy space that does not intersect the holographic entropy cone defines an inequality that is obeyed by all states with semiclassical bulk duals. Of particular interest are inequalities represented by hypersurfaces that are tangent to the cone; their joint envelope is the boundary of the entropy cone. If the cone is polyhedral then it is fully determined by a finite set of inequalities, which are identified with facets of the cone and which we now focus on. Inequality~(\ref{mmiintro}), the monogamy of mutual information, is an example of such a facet. Saturating monogamy or any other holographic inequality identified with a facet isolates an interesting class of states: ones whose entanglement entropies are only marginally amenable to an RT-style geometrization, in the sense that a small deformation of their entanglement entropies could make it impossible to view them as extremal values of some bulk functional. In this way, every facet of the holographic entropy cone (every inequality) presumably reflects some essential property of states that can describe a smooth geometry in quantum gravity.

Little is known for certain about the full holographic entropy cone. However, if we impose the additional condition that the conformal field theory (CFT) states and their dual bulk geometries be static\footnote{By `static' we mean that there is a time reflection symmetry with respect to a time slice on which the boundary regions lie.}, the \emph{static holographic entropy cone} has been characterized in great detail \cite{hec}. Its authors found infinitely many inequalities obeyed by static holographic states which, in particular, include \emph{all} such inequalities for up to $n=5$ named regions $A, B, C, D, E$. This means that the `$n \leq 5$ {static} holographic entropy cone' has been fully determined.\footnote{The $n=5$ inequalities proven in \cite{hec} were only later shown to form a complete set \cite{Cuenca:2019uzx}.} Our goal in this paper is to check whether time-dependent holographic states can violate any of the static entropy inequalities, including those proven in \cite{hec}. Restricted to the context of fewer than six named regions, our question is this: is the full holographic entropy cone larger than the static one?

There is a second lesson drawn from the RT proposal, which also undergirds the material in this paper: that understanding bulk physics can be simplified by using bi-local quantities on the boundary. Two examples of this are the bit thread prescription for computing entanglement entropies \cite{bitthreads, bitthreads2} and the kinematic program \cite{intgeometry}. The former converts the task of finding entanglement entropies into a problem of maximizing a certain flow; the integral curves of the flow have two endpoints on the boundary, na{\"\i}vely interpretable as the locations of degrees of freedom tied by a bi-local correlation.\footnote{\baselineskip=12pt It would be interesting to understand all holographic inequalities found in \cite{hec} in the language of bit threads. For the monogamy of mutual information (\ref{mmiintro}), this was carried out in \cite{bitthreadmonogamy, coopflows}.} The kinematic program, in turn, seeks to organize the data about the CFT and AdS using bi-local objects such as bulk geodesics or OPE expansions of pairs of operators \cite{diffentropy, entwinement, stereoscopic, robjanmichal} (see also \cite{geodesicwitten}). In the present paper, we use kinematic space to organize the data about entanglement entropies (computed by the HRT proposal) in time-dependent settings. 

By doing so, we find that in two boundary (three bulk) dimensions, all entropy inequalities that can be proven in the static case by contraction maps (including those proven in \cite{hec}) continue to hold in time-dependent states. In particular, this means that in two boundary dimensions time-dependent states do not make (the known part of) the holographic entropy cone larger than the static cone. On a technical level, going to kinematic space allows us to retain all the truly essential ingredients of the proof in \cite{hec} while disposing of the assumption of a static bulk. Where the authors of \cite{hec} cut and glued subregions of the bulk, we manipulate more abstract quantities defined in kinematic space, which generalize the static concept of the intersection number of two geodesics.\footnote{\baselineskip=12pt Ref.~\cite{sepintime} explains that our kinematic space `intersection numbers' can be related to ordinary, static intersection numbers by modular flow.} More conceptual comments on why kinematic space is helpful in generalizing the proof of \cite{hec}---and why this benefit is limited to two boundary dimensions---are contained in the Discussion.

The paper is organized as follows: In Sec.~\ref{prelims} we set up the problem and explain the necessary machinery from kinematic space. Sec.~\ref{staticreview} reviews the static proof of holographic entropy inequalities given in \cite{hec}. Sec.~\ref{strategy} proves the same inequalities in time-dependent pure states on a circle or line; this is the main result of our paper. Because our proof may be challenging to parse in a casual reading, we illustrate it with an informative example in Sec.~\ref{examples}. Sec.~\ref{nonpure} extends the proof of Sec.~\ref{strategy} to mixed states and CFTs on other (disconnected) topologies. We comment on the significance and outlook of our results in the Discussion.

\se{Preliminaries}
\label{prelims}
\subsection{Notation}

Between now and Sec.~\ref{nonpure} we will assume that the two-dimensional CFT is in a pure state and lives on a connected Lorentzian manifold---either Minkowski space $\mathbb{R}^{1,1}$ or a cylinder $S^1 \times {\rm time}$. The extension of our proof to CFTs living on disjoint unions of such manifolds and to mixed states---as is the case in holographic duals of multi-boundary black holes---is covered in Sec.~\ref{nonpure}.

We will be proving inequalities of the form
\begin{equation}
\sum_{l=1}^L \alpha_l S(I_l) 
\geq 
\sum_{r=1}^R \beta_r S(J_r)\,,
\label{templateineq}
\end{equation}
where $\alpha_l$ and $\beta_r$ are positive coefficients. The $I_l$ and $J_r$ are subregions on some space-like slice of the CFT and we do not assume that they are connected.

It is useful to set a notation for the connected components of the regions $I_l$ and $J_r$. We will refer to such connected components as $X_i$, with $i$ indexing the ordering of the intervals on the CFT slice. (The ordering is the reason why it is convenient to assume that the CFT lives on a connected manifold.) When the CFT lives on a circle, $i$ is understood modulo $N$, where $N$ is the total number of disjoint intervals that comprise the $I_l$s and $J_r$s. The interval which separates $X_{i-1}$ from $X_i$ will be called $Y_i$; when $X_{i-1}$ and $X_i$ are contiguous, $Y_i=\emptyset$. As a final piece of notation, we let:
\begin{equation}
Z_{2i} = X_i \qquad {\rm and} \qquad Z_{2i-1} = Y_i.
\label{defzxy}
\end{equation}
The indices of the $Z$s are valued modulo $2N$.

If we vary the relative sizes of the $Z$-intervals, the holographic entanglement entropies will undergo phase transitions. This happens when different collections of geodesics that connect interval endpoints exchange dominance and become minimal as stipulated by the RT proposal. A prototypical example of this phenomenon is the phase transition in the holographic entanglement entropy of two disjoint intervals, which was studied in \cite{phasetrans}. This entanglement entropy can be either in the connected or disconnected phase; see Fig.~\ref{fig:phases}. Our proof will require tracking the phases of the terms on the left hand side of (\ref{templateineq}) and adjusting the phases of terms on the right hand side. Thus, it is important to have an efficient vocabulary for identifying and distinguishing such phases. We will refer to distinct phases as colorings. 

\begin{figure}
        \centering
        \includegraphics[width=0.38\textwidth]{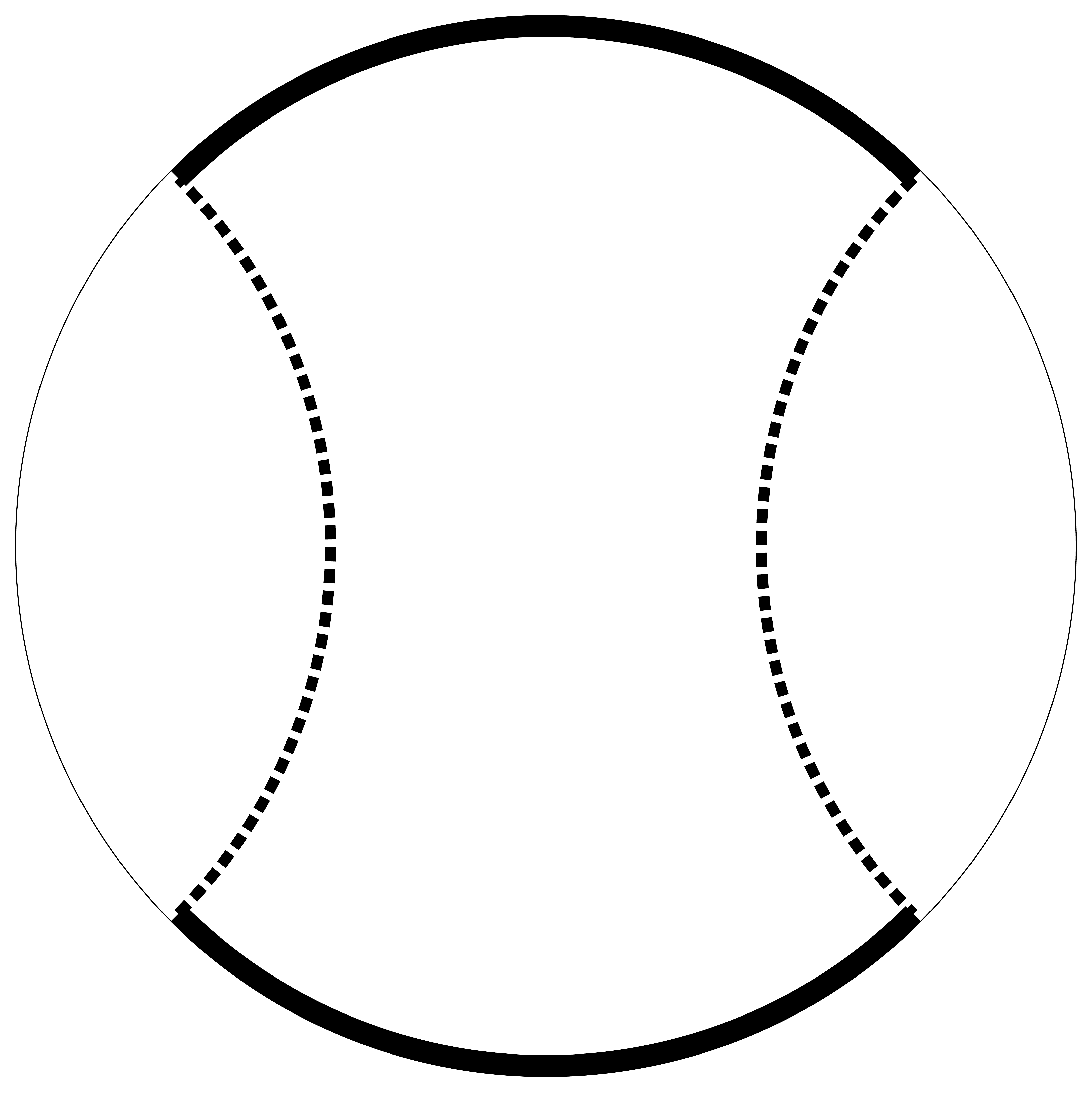}
        \hfill
        \includegraphics[width=0.38\textwidth]{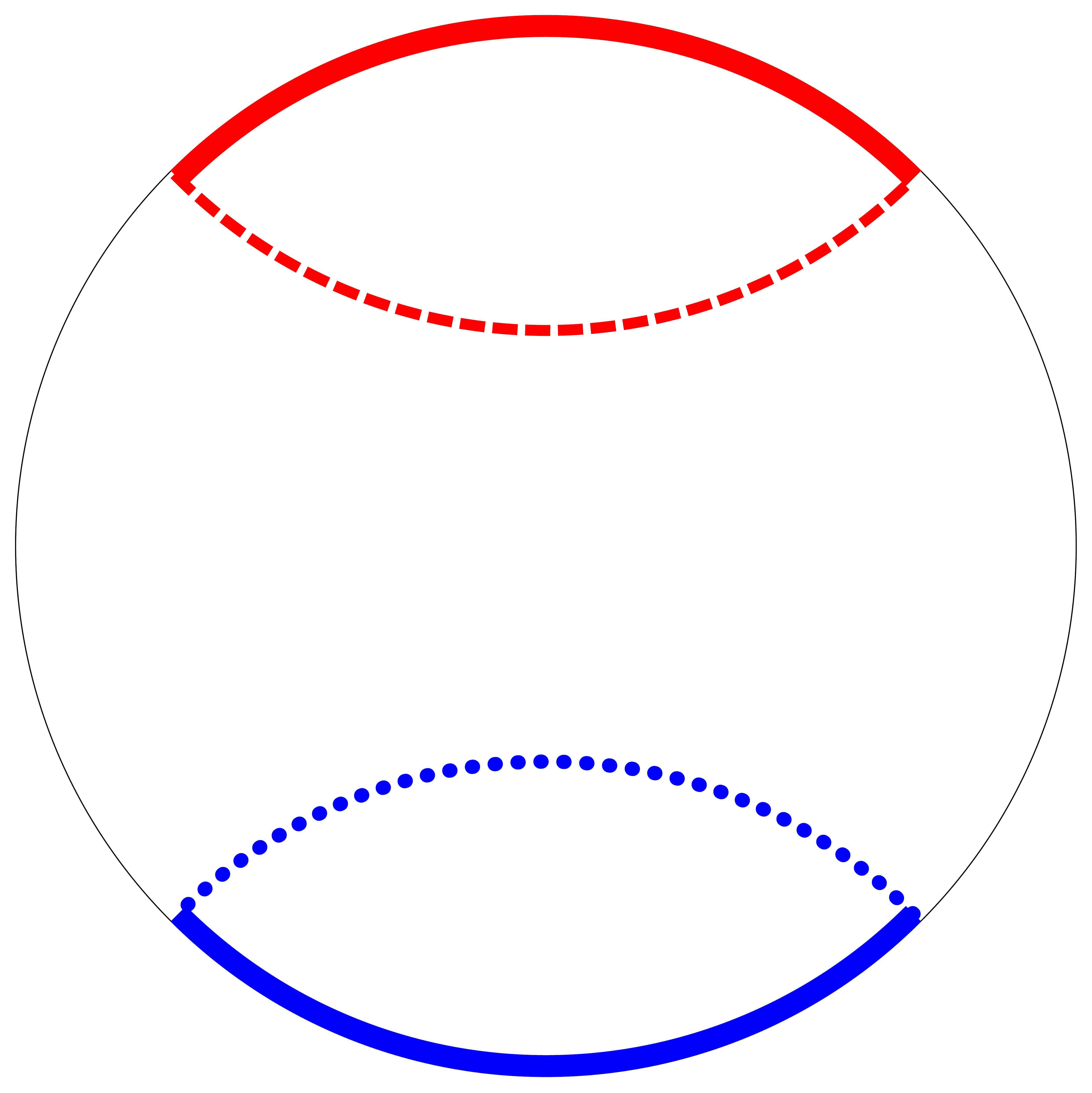}
        \caption{The connected (left) and disconnected (right) phases of the holographic entanglement entropy of two intervals. The coloring of the disconnected phase is explained in the text; see also Fig.~\ref{fig:example}.}
        \label{fig:phases}
\end{figure}

Consider some $S(I_l)$ (or $S(J_r)$), which is part of inequality~(\ref{templateineq}). Region $I_l$ is the union of some number of $X_i$s. We will partition these connected components of $I_l$ into colors. A coloring of the components of $I_l$ specifies a phase of $S(I_l)$ in that all intervals marked with the same color (and only they) are connected in that phase. For example, if $I_l$ comprises four intervals $X_1, X_2, X_3, X_4$, we have the following colorings:
\begin{align}
& (X_1 X_2 X_3 X_4)~{\rm or} 
\nonumber \\ &
(X_1) (X_2 X_3 X_4)~{\rm or}~(X_2) (X_1 X_3 X_4)~{\rm or}~(X_3) (X_1 X_2 X_4)~{\rm or}~(X_4) (X_1 X_2 X_3)
\nonumber \\ &
(X_1 X_2) (X_3 X_4)~{\rm or}~\xcancel{(X_1 X_3)(X_2 X_4)}~{\rm or}~(X_1 X_4) (X_2 X_3)~{\rm or}
\nonumber \\ &
(X_1) (X_2) (X_3 X_4)~\textrm{and 5 other permutations or}
\nonumber \\ &
(X_1) (X_2) (X_3) (X_4).
\label{phasescycles}
\end{align}
The top option is the completely connected phase; the bottom one is the completely disconnected phase. The left panel of Fig.~\ref{fig:example} depicts an example coloring (phase) of a seven-interval region on a circle.

One may object that a complete characterization of a phase should also tell us the ordering of intervals within one color. As an example, in addition to the completely connected phase of three intervals $(X_1 X_2 X_3)$, there might conceivably exist an alternative phase $(X_1 X_3 X_2)$, which is also `completely connected.' This turns out not to be the case: the minimal configuration is always the one where successive geodesics connect intervals according to their spatial ordering. We prove this intuitive fact in Appendix~\ref{apporder}.

A final caveat about phases of entanglement entropy is that not all colorings are valid: in the spatial ordering, the colors of intervals must never alternate. As an example, the coloring $(X_1 X_3) (X_2 X_4)$ is forbidden, which is why we crossed it out from the list above. This rule reflects the requirement that the minimal surface which computes $S(I_l)$ be homologous to $I_l$. In a static situation, it means that the minimal surface may not cross itself.

Although in the preceding discussion we colored the connected components of $I_l$, in a pure state any such coloring uniquely colors $\bar{I_l}$, the complementary region. In particular, if $X_i$ and $X_j$ are consecutive intervals with the same color in $I_l$ then $Y_{i+1}$ and $Y_j$ have the same color in $\bar{I_l}$. The coloring of $I_l$ and of $\bar{I_l}$ contains equivalent information because specifying the phase of $S(I_l)$ also specifies the phase of $S(\bar{I_l})$.
\begin{figure}
        \centering
        \includegraphics[width=0.45\textwidth]{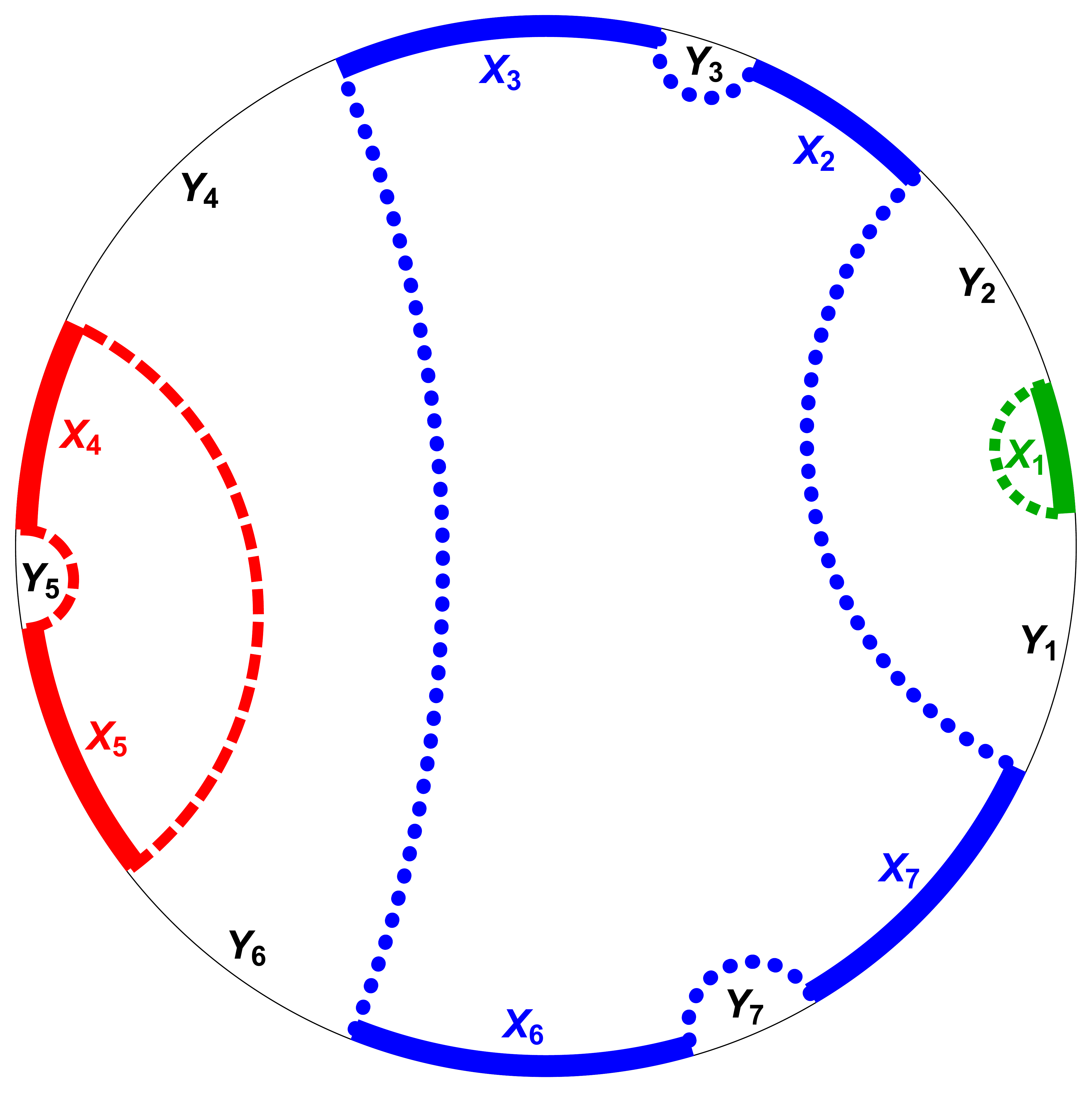}
        \hfill
        \raisebox{2.3cm}{\includegraphics[width=0.45\textwidth]{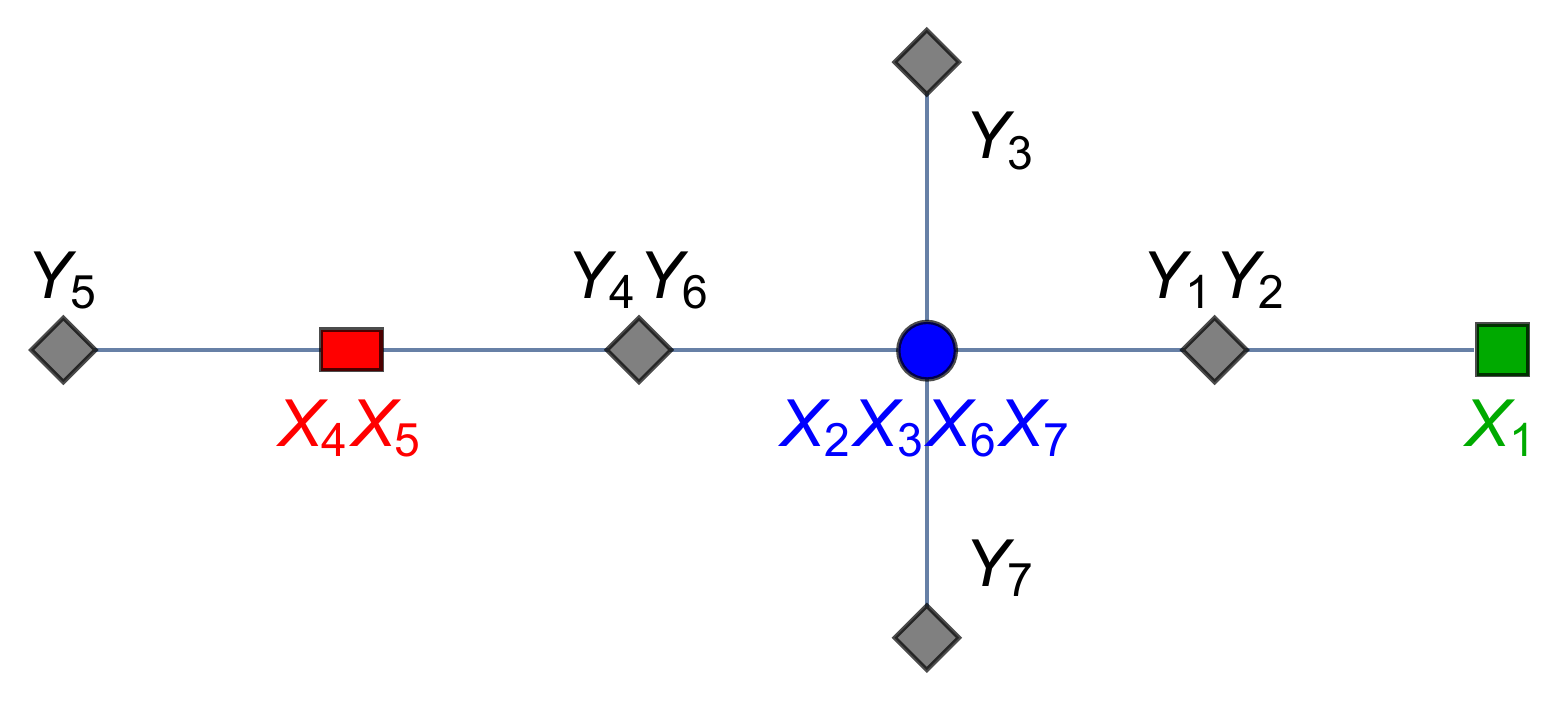}}
        \caption{The phase $(X_4 X_5) (X_2 X_3 X_6 X_7) (X_1)$ of a seven-interval region on a circle and its coloring tree (see text).}
        \label{fig:example}
\end{figure}

\subsection{Inequalities as adversarial games}

In every specific instance of the problem, verifying an inequality amounts to the following: given a coloring on the left, we must find one coloring on the right whose total geodesic length is no greater. Note that it is unnecessary to find the globally minimal coloring on the right; we only have to find one which is no greater than the left hand side. (If we succeed in finding one such coloring, the global minimum---if different---will be even smaller and the inequality will still hold.) Thus, we can recast the problem as an adversarial game: one player chooses a coloring for the left and her opponent's objective is to find an even smaller coloring for the right. To prove an inequality is to formulate a winning strategy for the second player.

The inequalities that we are mainly interested in were proven in \cite{hec} to hold in static configurations. Our goal is to show that in three bulk dimensions, every inequality proven in \cite{hec} also holds when the bulk space-time is time-dependent. To do so, we will reuse certain ingredients from the proof in \cite{hec} to formulate a winning strategy for the second player. Verifying that the resulting strategy guarantees a win will not involve the existence of a static bulk, but rely on purely boundary considerations by exploiting properties of kinematic space.

\subsection{Kinematic space}
\label{sec:ks}

In its most general form, kinematic space comprises arbitrary pairs of points from a CFT manifold \cite{intgeometry, stereoscopic}. Here we will draw the points from a spatial slice of the CFT on which the intervals $X_i$ and $Y_i$ live. In fact, even this notion of kinematic space is too detailed for our purposes: we will bin together points living in any one interval to form a discretized kinematic space whose coordinates are the $X_i$s and $Y_i$s themselves. Such a kinematic space can be represented as a symmetric matrix; it has the topology of $T^2/\mathbb{Z}_2$. We will denote our discretized kinematic space $\mathcal{K}$; see Fig.~\ref{fig:kspace}.

The coloring for the $l^{\rm th}$ term on the left produces a function $h_l$ on $\mathcal{K}$; we will call this function the overlap number. For an element $(Z_i, Z_j) \in \mathcal{K}$, $h_l(Z_i, Z_j)$ encodes how many colors separate intervals $Z_i$ from $Z_j$.
For instance, if $X_i, X_j \subset I_l$ have the same color, $h_l(X_i, X_j) = 0$ and, assuming $Y_j \neq \emptyset \neq Y_{j+1}$, $h_l(X_i, Y_j) = h_l(X_i, Y_{j+1}) = 1$. In order to formalize the definition of $h_l$, we have to introduce an auxiliary concept: the coloring tree (see the right panel of Fig.~\ref{fig:example}). 

\begin{figure}[t]
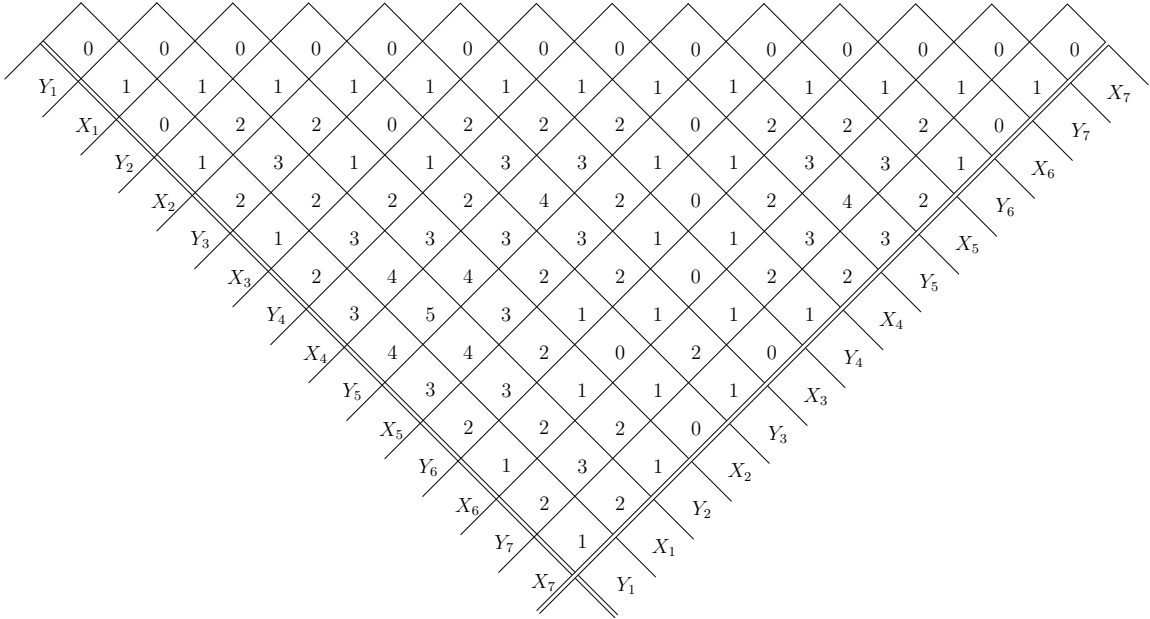

\vspace*{-.425\textwidth}
\renewcommand{\arraystretch}{1.795}
\rotatebox{45}{
\resizebox{0.7\textwidth}{!}{
\begin{tabular}{c||*{14}{C{.667cm}|}}
    \cline{1-2}
\rrb{$Y_1$} & \rrb{0} & \multicolumn{13}{r}{} \\
    \cline{1-3}
\rrb{$X_1$} & \rrb{1} & \rrb{0} & \multicolumn{12}{r}{} \\
     \cline{1-4}
\rrb{$Y_2$} & \rrb{0} & \rrb{1} & \rrb{0} & \multicolumn{11}{r}{}  \\
    \cline{1-5}
\rrb{$X_2$} & \rrb{1} & \rrb{2} & \rrb{1} & \rrb{0} & \multicolumn{10}{r}{} \\
        \cline{1-6}
\rrb{$Y_3$} & \rrb{2} & \rrb{3} & \rrb{2} & \rrb{1} & \rrb{0} & \multicolumn{9}{r}{} \\
    \cline{1-7}
\rrb{$X_3$} & \rrb{1} & \rrb{2} & \rrb{1} & \rrb{0} & \rrb{1} & 
    \rrb{0} & \multicolumn{8}{r}{} \\
    \cline{1-8}
\rrb{$Y_4$} & \rrb{2} & \rrb{3} & \rrb{2} & \rrb{1} & \rrb{2} & 
    \rrb{1} & \rrb{0} & \multicolumn{7}{r}{} \\
    \cline{1-9}
\rrb{$X_4$} & \rrb{3} & \rrb{4} & \rrb{3} & \rrb{2} & \rrb{3} & 
    \rrb{2} & \rrb{1} & \rrb{0} & \multicolumn{6}{r}{} \\
        \cline{1-10}
\rrb{$Y_5$} & \rrb{4} & \rrb{5} & \rrb{4} & \rrb{3} & \rrb{4} & 
    \rrb{3} & \rrb{2} & \rrb{1} & \rrb{0} & \multicolumn{5}{r}{} \\
    \cline{1-11}
\rrb{$X_5$} & \rrb{3} & \rrb{4} & \rrb{3} & \rrb{2} & \rrb{3} & 
    \rrb{2} & \rrb{1} & \rrb{0} & \rrb{1} & \rrb{0} & \multicolumn{4}{r}{} \\
    \cline{1-12}
\rrb{$Y_6$} & \rrb{2} & \rrb{3} & \rrb{2} & \rrb{1} & \rrb{2} & 
    \rrb{1} & \rrb{0} & \rrb{1} & \rrb{2} & \rrb{1} & 
    \rrb{0} & \multicolumn{3}{r}{} \\
    \cline{1-13}
\rrb{$X_6$} & \rrb{1} & \rrb{2} & \rrb{1} & \rrb{0} & \rrb{1} & 
    \rrb{0} & \rrb{1} & \rrb{2} & \rrb{3} & \rrb{2} & 
    \rrb{1} & \rrb{0} & \multicolumn{2}{r}{} \\
    \cline{1-14}
\rrb{$Y_7$} & \rrb{2} & \rrb{3} & \rrb{2} & \rrb{1} & \rrb{2} & 
    \rrb{1} & \rrb{2} & \rrb{3} & \rrb{4} & \rrb{3} & 
    \rrb{2} & \rrb{1} & \rrb{0} & \multicolumn{1}{r}{}  \\
    \cline{1-15}
\rrb{$X_7$} & \rrb{1} & \rrb{2} & \rrb{1} & \rrb{0} & \rrb{1} & 
\rrb{0} & \rrb{1} & \rrb{2} & \rrb{3} & \rrb{2} & 
\rrb{1} & \rrb{0} & \rrb{1} & \rrb{0} \\
    \hline
    \hline
& \rrb{$Y_1$} & \rrb{$X_1$} & \rrb{$Y_2$} & \rrb{$X_2$} & \rrb{$Y_3$}
 & \rrb{$X_3$} & \rrb{$Y_4$} & \rrb{$X_4$} & \rrb{$Y_5$} & \rrb{$X_5$}
  & \rrb{$Y_6$} & \rrb{$X_6$} & \rrb{$Y_7$}  & \rrb{$X_7$}
    \end{tabular}
}}
\caption{The kinematic space and overlap number $h(Z_i, Z_j)$ for the phase depicted in Fig.~\ref{fig:example}.}
\label{fig:kspace}
\end{figure}

To every color in the coloring of $I_l$ or of $\bar{I_l}$ we associate a vertex of a graph. Two vertices are connected if there exists a pair of contiguous intervals bearing their two colors. To see that a graph defined this way is necessarily a tree, consider one vertex, which corresponds to a color $c$ and intervals $(X_{i_1} X_{i_2} \ldots X_{i_k})$. Let us partition the remaining intervals into groups which fall in between the consecutive components of $c$:
\begin{align}
& \textrm{group 1:}~\{Y_{i_1+1}, X_{i_1+1}, \ldots, Y_{i_2}\} 
\nonumber \\
& \textrm{group 2:}~\{Y_{i_2+1}, X_{i_2+1}, \ldots, Y_{i_3}\} \\
& \ldots \nonumber
\end{align}
Each of these groups accounts for one edge adjacent to $c$ because $Y_{i_1+1}$ and $Y_{i_2}$ have the same color in $\bar{I_l}$. (When $Y_{i_1+1} = \emptyset$, substitute $X_{i_1+1}$ instead.) But in a valid coloring---one where colors do not alternate---no color can appear in more than one group. Thus, no two edges adjacent to $c$ can be part of a loop.

With recourse to the coloring tree, it is easy to define the overlap number. We set $h_l(Z_i, Z_j)$ to be the graph distance between the colors of $Z_i$ and $Z_j$ in the coloring tree. In static configurations, the overlap number has a more direct meaning: it takes a geodesic with endpoints in $Z_i$ and $Z_j$ and counts its intersections with the geodesics that compute $S(I_l)$. We stress, however, that the coloring tree and $h_l$ are purely boundary concepts and their definitions make no commitment to a static bulk. 
The overlap number for the phase depicted in Fig.~\ref{fig:example} is tabulated in Fig.~\ref{fig:kspace}.

We can combine the overlap numbers $h_l$ from different terms on the left hand side to form an aggregate quantity which we call the overlap function:
\begin{equation}
h_{LHS}(Z_i, Z_j) = \sum_{l=1}^L \alpha_l\, h_l(Z_i, Z_j). 
\label{defhlhs}
\end{equation}
This object counts the total overlap number of $(Z_i, Z_j)$ in a particular choice of colorings for all terms on the left hand side of (\ref{templateineq}). In our adversarial game, the right hand side player's choice of coloring can similarly be scored by an overlap function $h_{RHS}$. The importance of these functions is captured by the following two lemmas.

\subsection{Overlaps determine multiplicities of all geodesics}

\begin{lemma}
In every coloring of the left hand side, the overlap function $h_{LHS}$ completely characterizes the collection of geodesics that compute $\sum_l \alpha_l S(I_l)$.
Explicitly, let the minimal geodesic that connects the common endpoint of $Z_i$ and $Z_{i+1}$ to the common endpoint of $Z_j$ and $Z_{j+1}$ appear on the left hand side of (\ref{templateineq}) $k_{LHS}(i,j)$ times. This multiplicity of geodesics is a nonnegative integer determined by $h_{LHS}$ via:
\begin{equation}
k_{LHS}(i,j) = \frac{1}{2}\big(
  h_{LHS}(Z_i, Z_j) + h_{LHS}(Z_{i+1}, Z_{j+1})
- h_{LHS}(Z_i, Z_{j+1}) - h_{LHS}(Z_{i+1}, Z_j)
\big).
\label{htok}
\end{equation}
\end{lemma}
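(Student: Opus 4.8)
The plan is to reduce (\ref{htok}) to a purely combinatorial computation: first rewrite $h_{LHS}$ as a weighted sum of single-geodesic ``separation indicators,'' and then recognize the right-hand side of (\ref{htok}) as a discrete mixed second difference that annihilates every geodesic except the one joining the two endpoints of interest. Throughout I write $p_i$ for the common endpoint of $Z_i$ and $Z_{i+1}$, so that $k_{LHS}(i,j)$ is the multiplicity of the geodesic from $p_i$ to $p_j$, and I regard $Z_i$ as occupying the arc between $p_{i-1}$ and $p_i$.

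First I would pin down the geometric reading of the overlap number: for a single term, $h_l(Z_i,Z_j)$ equals the number of geodesics in the collection computing $S(I_l)$ that separate the arc $Z_i$ from the arc $Z_j$ on the circle. This is where the tree structure established above does the work. The geodesics of a valid (non-alternating) coloring form a non-crossing chord diagram, the colors are exactly its faces, and the coloring tree is the planar dual of that diagram; hence the graph distance between the colors of $Z_i$ and $Z_j$---the definition of $h_l$---counts the chords crossed along the unique tree path, and these are precisely the chords separating the two arcs. Summing over $l$ with weights $\alpha_l$ then gives
\begin{equation}
h_{LHS}(Z_i,Z_j) = \sum_g w_g\, \sigma_g(i,j), \qquad w_g = \sum_{l\,:\, g \in S(I_l)} \alpha_l,
\end{equation}
where the sum runs over the distinct geodesics $g$ on the left, $w_g$ is the total multiplicity of $g$, and $\sigma_g(i,j)\in\{0,1\}$ indicates whether $g$ separates $Z_i$ from $Z_j$.

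Next I would compute the mixed difference of a single $\sigma_g$. Fixing one of the two arcs cut out by $g=(p_a,p_b)$ as ``side $1$,'' let $\tau_g(i)\in\{0,1\}$ record which side $Z_i$ lies on; then $\sigma_g(i,j)=|\tau_g(i)-\tau_g(j)|=\tau_g(i)+\tau_g(j)-2\tau_g(i)\tau_g(j)$. The single-variable pieces drop out of the mixed difference, leaving
\begin{equation}
\frac{1}{2}\big(\sigma_g(i,j)+\sigma_g(i{+}1,j{+}1)-\sigma_g(i,j{+}1)-\sigma_g(i{+}1,j)\big)
= -\big(\tau_g(i)-\tau_g(i{+}1)\big)\big(\tau_g(j)-\tau_g(j{+}1)\big).
\end{equation}
The factor $\tau_g(i)-\tau_g(i{+}1)$ is nonzero exactly when $p_i$ is an endpoint of $g$, so the whole expression vanishes unless $g$ has one endpoint at $p_i$ and the other at $p_j$, i.e.\ $g=(p_i,p_j)$. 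Since $\tau_g$ is a $0/1$ step function on the circle it jumps exactly twice, at $p_a$ and $p_b$, with opposite signs; hence when $\{a,b\}=\{i,j\}$ the product of the two jumps is $-1$ and the mixed difference equals $+1$. Inserting this into the weighted sum yields $k_{LHS}(i,j)=w_{(p_i,p_j)}$, which is the multiplicity of the geodesic from $p_i$ to $p_j$. Nonnegativity is immediate from $\alpha_l>0$, and integrality follows because $w_{(p_i,p_j)}$ is a sum of the (integer) coefficients $\alpha_l$ over the terms whose collection contains that geodesic.

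I expect the main obstacle to lie in the first step rather than in the algebra: establishing ``graph distance $=$ number of separating chords'' rigorously, including the degenerate cases---empty separating intervals $Y_i=\emptyset$ (which make two index values $p_i$ collide), geodesics sharing an endpoint, and the wrap-around identifications modulo $2N$ on the circle. I would treat the collisions either by perturbing empty intervals to nonzero length and taking a limit, or by checking directly that the mixed-difference bookkeeping is insensitive to coincident endpoints; the circular cases are handled automatically once $\tau_g$ is regarded as a genuine function on $\mathbb{Z}/2N$.
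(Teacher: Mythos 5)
Your proof is correct, and it reaches (\ref{htok}) by a genuinely different route than the paper. The paper works directly with the tree-distance definition of $h_l$: it evaluates the two differences $h_l(Z_i,Z_j)-h_l(Z_{i+1},Z_j)$ and $h_l(Z_i,Z_{j+1})-h_l(Z_{i+1},Z_{j+1})$ by a case analysis on the coloring tree, and invokes acyclicity (equivalently, the no-alternating-colors rule) to exclude the one sign pattern that would spoil the formula; the geodesic is then identified as contributing precisely when $Z_i,Z_{j+1}$ share one color and $Z_{i+1},Z_j$ share another. You instead front-load the combinatorics into the identification of $h_l(Z_i,Z_j)$ with the number of separating chords --- the coloring tree as planar dual of the non-crossing chord diagram --- after which the mixed second difference becomes a uniform one-line computation, $-\bigl(\tau_g(i)-\tau_g(i{+}1)\bigr)\bigl(\tau_g(j)-\tau_g(j{+}1)\bigr)$, with no case analysis. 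The paper in fact states your separating-chord interpretation as a remark (``in static configurations, the overlap number ... counts its intersections with the geodesics that compute $S(I_l)$'') but deliberately avoids using it in the proof; your version shows it can be used without invoking a bulk, since separation is determined by the cyclic order of endpoints on the boundary alone. What your route buys is that nonnegativity and integrality of $k_{LHS}$ become manifest ($k_{LHS}(i,j)=w_{(p_i,p_j)}$ is by construction a sum of the multiplicities $\alpha_l$); what it costs is that the dual-tree lemma must be stated and proved cleanly, including the degenerate cases you correctly flag ($Y_i=\emptyset$, coincident chord endpoints), which the paper's tree-distance bookkeeping sidesteps by never leaving the abstract graph. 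Both proofs rest on the same essential input --- the non-crossing/tree structure of a valid coloring --- they just insert it at different points.
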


\begin{proof}
The geodesic under consideration contributes to a given term $S(I_l)$ if and only if on the coloring tree of $I_l$, $Z_i$ and $Z_{j+1}$ share one color, whereas $Z_{i+1}$ and $Z_j$ share a different color.  To diagnose this, consider the difference between
\begin{equation}\la{hdj}
h_l(Z_i, Z_j) - h_l(Z_{i+1}, Z_j) =
\begin{cases} 
\phantom{-}1 
& \begin{subarray}{l}\textrm{if $Z_i$ and $Z_{i+1}$ have different colors and} \\ \textrm{$Z_j$ is closer to $Z_{i+1}$ than to $Z_i$ on the coloring tree of $I_l$}\end{subarray} \\
-1 
& \begin{subarray}{l}\textrm{if $Z_i$ and $Z_{i+1}$ have different colors and} \\ \textrm{$Z_j$ is closer to $Z_{i}$ than to $Z_{i+1}$ on the coloring tree of $I_l$}\end{subarray} \\
\phantom{-}0 & \begin{subarray}{l}\textrm{otherwise (i.e., if $Z_i$ and $Z_{i+1}$ have the same color} \\ \textrm{on the coloring tree of $I_l$)}\end{subarray}
\end{cases}
\end{equation}
and
\begin{equation}\la{hdjp}
h_l(Z_i, Z_{j+1}) - h_l(Z_{i+1}, Z_{j+1}) =
\begin{cases} 
\phantom{-}1 
& \begin{subarray}{l}\textrm{if $Z_i$ and $Z_{i+1}$ have different colors and} \\ \textrm{$Z_{j+1}$ is closer to $Z_{i+1}$ than to $Z_i$ on the coloring tree of $I_l$}\end{subarray} \\
-1 
& \begin{subarray}{l}\textrm{if $Z_i$ and $Z_{i+1}$ have different colors and} \\ \textrm{$Z_{j+1}$ is closer to $Z_{i}$ than to $Z_{i+1}$ on the coloring tree of $I_l$}\end{subarray} \\
\phantom{-}0 & \begin{subarray}{l}\textrm{otherwise (i.e., if $Z_i$ and $Z_{i+1}$ have the same color} \\ \textrm{on the coloring tree of $I_l$)}\end{subarray}
\end{cases}
\end{equation}

First, in the case of
\be\la{gch}
h_l(Z_i, Z_j) - h_l(Z_{i+1}, Z_j) = 1,\qu
h_l(Z_i, Z_{j+1}) - h_l(Z_{i+1}, Z_{j+1}) = -1,
\ee
the geodesic under consideration must contribute to $S(I_l)$.  To see this, we use eqs.~\er{hdj} and \er{hdjp} to find that $Z_i$ and $Z_{i+1}$ have different colors, and $Z_j$ is closer to $Z_{i+1}$ than to $Z_i$, whereas $Z_{j+1}$ is closer to $Z_{i}$ than to $Z_{i+1}$.  Since by construction $Z_j$ and $Z_{j+1}$ share an edge on the coloring tree, going through $Z_j$, $Z_{i+1}$, $Z_i$, $Z_{j+1}$, and back to $Z_j$ would create a nontrivial loop, unless $Z_i$, $Z_{j+1}$ have the same color (corresponding to the same vertex) and $Z_{i+1}$, $Z_j$ have the same color.  This means that the geodesic under consideration contributes to $S(I_l)$.

Second, in all cases where \er{gch} is not satisfied, we must have
\be\la{hde}
h_l(Z_i, Z_j) - h_l(Z_{i+1}, Z_j) = h_l(Z_i, Z_{j+1}) - h_l(Z_{i+1}, Z_{j+1}).
\ee
To see this, we note that the left hand side of eq.~\er{hde} vanishes if and only if its right hand side vanishes.  Therefore, to show eq.~\er{hde} we only need to rule out the possibility of its left hand side being $-1$ and its right hand side being $1$.  An argument similar to the one in the previous paragraph (but with $Z_j$ and $Z_{j+1}$ exchanged) implies that $Z_i$, $Z_j$ share a color and $Z_{i+1}$, $Z_{j+1}$ share a different color, but this is an invalid coloring because the colors alternate.

Therefore, eq.~(\ref{htok}) holds because its right hand side simply collects contributions of a given geodesic from all terms in $\sum_l \alpha_l S(I_l)$.
\end{proof}

Of course, the same argument establishes that $h_{RHS}$ determines which geodesics (and with what multiplicities) comprise every phase of $\sum_r \beta_r S(J_r)$. In other words, $h_{LHS}$ and $h_{RHS}$ completely characterize the geodesics corresponding to the colorings of the $I_l$s and $J_r$s, which in turn specify phases of $\sum_l \alpha_l S(I_l)$ and $\sum_r \beta_r S(J_r)$. (We cannot, however, read off from $h_{LHS}$ or $h_{RHS}$ which phase is physically realized, i.e., which phase minimizes the total geodesic length.) Since $h_{LHS}$ and $h_{RHS}$ uniquely identify the geodesics that comprise each phase of $\sum_l \alpha_l S(I_l)$ and $\sum_r \beta_r S(J_r)$, it should be possible to recast the inequality in terms of overlap functions. This is accomplished by the next lemma.

\subsection{Inequality in terms of overlap functions}
\label{ineqoverlap}

\begin{lemma}
\label{ineqks}
Choose a coloring of each of the regions $I_l$ and $J_r$. If $h_{LHS} \geq h_{RHS}$ for all $(Z_i, Z_j) \in \mathcal{K}$ then inequality (\ref{templateineq}) holds {in this coloring}. 
\end{lemma}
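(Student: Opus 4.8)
The plan is to turn the chosen-coloring inequality into a statement about geodesic lengths and then to integrate by parts against the overlap functions. Fix the chosen colorings and let $\ell(i,j)$ denote the contribution to the entropy of the single extremal (HRT) geodesic joining the common endpoint of $Z_i,Z_{i+1}$ to the common endpoint of $Z_j,Z_{j+1}$; this is one well-defined number (its length over $4G_N$), the same regardless of which coloring the geodesic happens to appear in, since it depends only on the two endpoints. By the preceding lemma the multiplicities (\ref{htok}) count exactly how many times, weighted by the $\alpha_l$ or $\beta_r$, each such geodesic is summed, so in the chosen colorings
\begin{equation*}
\sum_{l=1}^L \alpha_l S(I_l) = \tfrac12\sum_{i,j} k_{LHS}(i,j)\,\ell(i,j), \qquad \sum_{r=1}^R \beta_r S(J_r) = \tfrac12\sum_{i,j} k_{RHS}(i,j)\,\ell(i,j),
\end{equation*}
the factor $\tfrac12$ compensating the double counting of the symmetric ordered pairs (the diagonal $\ell(i,i)=0$ is harmless). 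Writing $\Delta k \equiv k_{LHS}-k_{RHS}$ and $\Delta h \equiv h_{LHS}-h_{RHS}$, proving (\ref{templateineq}) in this coloring is equivalent to showing $\sum_{i,j}\Delta k(i,j)\,\ell(i,j)\geq 0$; by (\ref{htok}) the hypothesis $h_{LHS}\geq h_{RHS}$ says exactly that $\Delta h\geq 0$ pointwise on $\mathcal{K}$, with $\Delta k$ the discrete mixed second difference of $\Delta h$.

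Next I would integrate by parts. Substituting (\ref{htok}) and recollecting the terms according to the value of $\Delta h$ at each point of $\mathcal{K}$—the indices run mod $2N$ on the circle, so the cyclic sum produces no boundary terms—one finds
\begin{equation*}
\sum_{i,j}\Delta k(i,j)\,\ell(i,j) = \tfrac12\sum_{a,b}\Delta h(Z_a,Z_b)\,\big[\,\ell(a,b)+\ell(a-1,b-1)-\ell(a-1,b)-\ell(a,b-1)\,\big].
\end{equation*}
The bracket is the mixed second difference of $\ell$ in its two endpoint labels. Its four geodesics join the four boundary points labelled $a-1,\,a,\,b-1,\,b$, which occur in this cyclic order; the pair $\{\ell(a-1,b-1),\ell(a,b)\}$ are the ``crossing'' diagonals and $\{\ell(a-1,b),\ell(a,b-1)\}$ the non-crossing sides.

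It therefore remains to prove the crossing inequality $\ell(a-1,b-1)+\ell(a,b)\geq \ell(a-1,b)+\ell(a,b-1)$ for every cyclically ordered quadruple. This single geometric input is the step I expect to be the main obstacle, and it is exactly where time dependence threatens the argument: unlike the static case, the four geodesics need not share a bulk slice, so one cannot simply cut and reglue them. The way around this is to observe that each of the four lengths is, on its own, an unambiguous single-interval HRT entropy—a single connected interval in a pure state has no competing phases. Labelling the cyclic points $q_1,q_2,q_3,q_4$ and the intervening boundary arcs $P=[q_1,q_2]$, $Q=[q_2,q_3]$, $R=[q_3,q_4]$, the inequality reads $S(PQ)+S(QR)\geq S(Q)+S(PQR)$, which is strong subadditivity and so holds in any state, static or not. (Equivalently, the maximin characterization of HRT surfaces establishes the same crossing inequality intrinsically in the bulk.) Since every bracket is now nonnegative and $\Delta h\geq 0$ by hypothesis, the integrated-by-parts expression is a sum of nonnegative terms, which proves the claim.

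What remains is bookkeeping. For degenerate quadruples—where two of the four endpoints coincide, or an intervening $Y_i$ is empty—the bracket either vanishes or collapses to an instance of ordinary subadditivity $S(XY)\leq S(X)+S(Y)$, which again holds; and where a collapse forces a zero-size interval the accompanying $\Delta h$ vanishes, killing that term. On the line rather than the circle the integration by parts acquires would-be boundary terms, but these sit at the endpoints at infinity, where the relevant overlap differences vanish. Thus the entire difficulty is concentrated in the crossing inequality, which the single-interval/strong-subadditivity observation settles, while the rest is the combinatorics already furnished by the previous lemma.
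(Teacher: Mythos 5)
Your proof is correct and is essentially the paper's own argument: your summation-by-parts identity is precisely eq.~(\ref{ineqisssa}), with your ``crossing inequality'' being exactly the nonnegativity of the conditional mutual information $CMI(a,b)$ of contiguous intervals, which the paper likewise derives from strong subadditivity of HRT (valid in time-dependent settings via maximin). The only difference is presentational—the paper builds $RHS'$ by adding $CMI$ terms until the overlap functions match and then invokes Lemma~1, whereas you expand both sides into geodesic multiplicities via eq.~(\ref{htok}) and Abel-sum.
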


\begin{proof}
Define
\begin{equation}
CMI(i,j) = 
S\left( \cup_{k = i}^{j-1} Z_k \right) +
S\left( \cup_{k = i+1}^{j} Z_k \right)-
S\left( \cup_{k = i}^{j} Z_k \right)-
S\left( \cup_{k = i+1}^{j-1} Z_k \right) \geq 0.
\label{defcmi}
\end{equation}
This quantity, a conditional mutual information, is non-negative by virtue of the strong subadditivity of entanglement entropy \cite{ssaref}, which the RT and HRT proposals are known to obey \cite{holstaticssa, maximin}. 
Each individual term in (\ref{defcmi}) is the entanglement entropy of a single interval (a union of contiguous intervals), so there are no phase ambiguities about which geodesics compute (\ref{defcmi}). For example, for the first term it is the geodesic connecting the common endpoint of $Z_{i-1}$ and $Z_{i}$ to the common endpoint of $Z_{j-1}$ and $Z_{j}$. 

In analogy to eq.~(\ref{defhlhs}), we may define and compute an overlap function for (\ref{defcmi}):
\begin{equation}
h_{CMI(i,j)}(Z_m, Z_n) = 2 \delta^{ij, mn}
\end{equation}
where $\delta^{ij,mn} = \delta^{im}\delta^{jn} + \delta^{in}\delta^{jm}$. The ordering of $i,j$ and $m,n$ does not matter because elements of $\mathcal{K}$ are un-ordered pairs and $CMI(i,j) = CMI(j,i)$ in pure states. The only pair of intervals which has a net overlap with $CMI(i,j)$ is $(Z_i, Z_j)$.

Given $h_{LHS}$ and $h_{RHS}$ obtained from some coloring of the terms of (\ref{templateineq}), consider
\begin{equation}
RHS' \equiv 
\sum_{r=1}^R \beta_r S(J_r) + 
\frac{1}{2} \sum_{i\neq j} 
\big( h_{LHS}(Z_i, Z_j) - h_{RHS}(Z_i, Z_j) \big)\, CMI(i,j).
\end{equation}
By construction we have $h_{LHS} = h_{RHS'}$ and therefore, by Lemma~1, both $LHS$ and $RHS'$ are computed by the same geodesics with the same multiplicities:\footnote{This is not affected by the presence of terms with negative coefficients in $RHS'$, as we can always move them to $LHS$.}
\begin{equation}
\sum_{l=1}^L \alpha_l S(I_l) = \sum_{r=1}^R \beta_r S(J_r) + 
\frac{1}{2} \sum_{i\neq j} 
\big( h_{LHS}(Z_i, Z_j) - h_{RHS}(Z_i, Z_j) \big)\, CMI(i,j).
\label{ineqisssa}
\end{equation}
If $h_{LHS} \geq h_{RHS}$ for all $(Z_i, Z_j) \in \mathcal{K}$, inequality~(\ref{templateineq}) follows (in this coloring).
\end{proof}

Observe that our strategy effectively reduces inequality (\ref{templateineq}) to instances of strong subadditivity. Our argument thus shows that in three bulk dimensions (at least in pure states of CFTs on connected manifolds) the difference between (\ref{templateineq}) and strong subadditivity is combinatorial in character. The combinatorics identify the applications of strong subadditivity, which make a specific instance of (\ref{templateineq}) manifest. This is reminiscent of the conclusions of \cite{intgeometry}, which argued that lengths of all space-like curves in three-dimensional holographic geometries could be calculated by adding up a number of conditional mutual information quantities or their close analogues \cite{entwinement}.

\begin{corollary} A winning strategy is an algorithm which, given a coloring of $I_l$ (terms on the left hand side), finds a coloring of $J_r$ (terms on the right hand side) such that $h_{LHS} \geq h_{RHS}$ for all $(Z_i, Z_j) \in \mathcal{K}$.
\end{corollary}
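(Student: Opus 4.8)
The plan is to read off the corollary as an immediate consequence of Lemma~\ref{ineqks}, once we account for the asymmetric way the two sides of~(\ref{templateineq}) arise as minima over phases. The key preliminary observation is this: the physically realized value of $\sum_l \alpha_l S(I_l)$ is attained in a definite collection of phases---the minimizing one for each $I_l$---so for \emph{that} left coloring the sum $\sum_l \alpha_l S(I_l)$ equals the total weighted geodesic length. On the right, by contrast, each $S(J_r)$ is itself a minimum over the phases of $J_r$, so \emph{any} valid right coloring the second player elects furnishes an upper bound: $\sum_r \beta_r S(J_r)$ is no larger than the total weighted length of that chosen coloring. The second player is therefore free to overpay on the right, exactly as anticipated in the adversarial framing.

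With this in hand, I would simply chain Lemma~\ref{ineqks}. Feed the second player's algorithm the physically realized left coloring; by hypothesis it returns a valid right coloring (one whose colors do not alternate, so that the coloring trees and the function $h_{RHS}$ are well defined) satisfying $h_{LHS}(Z_i,Z_j) \ge h_{RHS}(Z_i,Z_j)$ for every $(Z_i,Z_j)\in\mathcal{K}$. Lemma~\ref{ineqks} then asserts that~(\ref{templateineq}) holds in these two colorings, i.e.\ the left length (in the physical coloring) is at least the right length (in the chosen coloring). Combining with the two observations above gives
\[
\sum_{l=1}^L \alpha_l S(I_l) \;=\; (\text{left length in the physical coloring}) \;\ge\; (\text{right length in the chosen coloring}) \;\ge\; \sum_{r=1}^R \beta_r S(J_r),
\]
which is precisely inequality~(\ref{templateineq}).

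The one point that deserves care---and which I would flag rather than any computation---is the quantifier over left colorings. We are establishing~(\ref{templateineq}) for \emph{all} (possibly time-dependent) bulk geometries, and which phase of $\sum_l \alpha_l S(I_l)$ is minimal depends on the geometry; since we cannot know in advance which left coloring is physically realized, a genuine winning strategy must succeed against an adversary who selects any valid left coloring whatsoever. Demanding that the algorithm produce a dominating right coloring for every left coloring is thus sufficient: it certainly covers the physical one, and the chain above then yields the inequality. I expect no substantive obstacle in the corollary itself, since all the analytic content is already discharged by Lemma~\ref{ineqks} (which rests only on strong subadditivity and the combinatorial identity~(\ref{htok}), not on a static bulk). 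The real work---exhibiting such an algorithm by importing the contraction-map ingredients of \cite{hec}---is deferred to the later sections, and the corollary merely certifies that this is exactly what it takes to win the game and hence to prove the inequality in the time-dependent setting.
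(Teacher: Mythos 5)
Your proposal is correct and matches the paper's (largely implicit) justification: the corollary is exactly the combination of the adversarial-game framing of Sec.~2.2 (the left length in the physical coloring equals $\sum_l\alpha_l S(I_l)$, while any valid right coloring only overestimates $\sum_r\beta_r S(J_r)$) with Lemma~\ref{ineqks}, quantified over all left colorings since the physical one is not known in advance. Nothing further is needed.
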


We will formulate such a strategy in Sec.~\ref{strategy}. Luckily, we will not have to start from scratch. Instead, we will take advantage of properties of \emph{contraction maps}, which allowed the authors of \cite{hec} to constrain (and determine for up to $n = 5$ named regions~\cite{Cuenca:2019uzx}) the static holographic entropy cone.

\se{Contraction map and the static proof}
\label{staticreview}

This section reviews relevant aspects of \cite{hec}. Here (only in this section) the setup is static: we consider a time-independent asymptotically AdS geometry that is holographically dual to some CFT state. The inequalities of interest now concern minimal surfaces, which are all taken from a common static slice of the bulk, as dictated by the RT proposal. The proof of \cite{hec} involves partitioning the static bulk slice into regions; this is the element we will remove in Sec.~\ref{strategy}.

We are interested in inequalities of the general form
\begin{equation}
\sum_{l=1}^L \alpha_l S(I_l) 
\geq 
\sum_{r=1}^R \beta_r S(J_r)\,,
\label{templateineq2}
\end{equation}
where $\alpha_l$ and $\beta_r$ are all positive. The sets $I_l$ and $J_r$ are unions of regions, which themselves can comprise multiple basic intervals $X_i$. For example, the monogamy of holographic mutual information
\begin{equation}
S(AB) + S(BC) + S(AC) \geq S(A) + S(B) + S(C) + S(ABC)
\label{monogamy0}
\end{equation}
has $I_1 = AB$, $I_2 = BC$, $I_3 = AC$ and $J_1 = A$, $J_2 = B$, $J_3 = C$, $J_4 = ABC$. (The regions $A, B, C$, which are assumed to be disjoint, can consist of an arbitrary number of basic intervals that we call $X_i$.) We have been calling the number of distinct regions involved in inequality~(\ref{templateineq2}) $n$, in keeping with the notation of \cite{hec}. Thus, for (\ref{monogamy0}) we have $n = 3$ ($A, B, C$), $l = 3$, and $r = 4$, while $N$, the number of \emph{basic} intervals, can be arbitrary.

The authors of \cite{hec} represented the composition of $I_l$ and $J_r$ in terms of regions $A, B, C, \ldots$ using $n+1$ pairs of \emph{occurrence vectors}. These are vectors in $L$- and $R$-dimensional spaces whose entries---zeroes and ones---encode whether or not the given region (say, $A$) is contained inside $I_l$ (respectively $J_r$):
\begin{equation}
\vec{x}_A := (A \subset I_l)^L_{l = 1} \in \{0,1\}^L
\qquad {\rm and} \qquad
\vec{y}_A := (A \subset J_r)^R_{r = 1} \in \{0,1\}^R.
\label{occurr}
\end{equation}
In addition to $n$ such vectors that correspond to $A, B, C, \ldots$ we also define $\vec{x}_O \equiv 0$ and $\vec{y}_O \equiv 0$, which represent the purifying region (the complement of the union of $A, B, C, \ldots$).

Our argument will make use of another concept defined in \cite{hec}: the contraction map. This is a function $f: \{0,1\}^L \to \{0,1\}^R$ which satisfies two properties:
\begin{align}
\sum_{l=1}^L \alpha_l |x_l - x'_l| \geq \sum_{r=1}^R \beta_r |f(\vec{x})_r - f(\vec{x\,}')_r|~~~ & \textrm{for all}~\vec{x}, \vec{x\,}' \in \{0,1\}^L, \label{defcontraction} \\
f(\vec{x}_S) = \vec{y}_S~~~ & \textrm{for}~ S = O, A, B, C, \ldots \quad (n+1~{\rm conditions}). \label{defic}
\end{align}
Eq.~(\ref{defcontraction}) requires that $f$ shorten distances between two vectors with respect to a norm, which is set by the coefficients in (\ref{templateineq2}). Eq.~(\ref{defic}) contains the `initial data' based on the composition of the regions $I_l$ and $J_r$.

\begin{table}
\centering
\begin{tabular}{l cc p{0.1cm} cc}
  \hline
  & \multicolumn{2}{c}{$\vec{x}$} & & \multicolumn{2}{c}{$\vec{y} = f(\vec{x})$} \\
  \cline{2-3} \cline{5-6}
      & ~$BC$~ & ~$AC$~ & & ~$ABC$~ & ~$C$~ \\
  $O$ & 0 & 0 & & 0 & 0  \\
  $A$    & 0 & 1 & & 1 & 0  \\
  $B$   & 1 & 0 & & 1 & 0 \\
  $C$ & 1 & 1 & & 1 & 1 \\
  \hline
\end{tabular}
\caption{The unique contraction map for the strong subadditivity inequality.}
\label{tab:ssa-contraction}
\end{table}

One of the main results of \cite{hec} is that the existence of a contraction map guarantees the correctness of inequality (\ref{templateineq2}), assuming that the bulk geometry is static. The authors of \cite{hec} then found explicit contraction maps for multiple inequalities, which for $n\leq 5$ fully characterize the static holographic entropy cone \cite{Cuenca:2019uzx}. For example, the contraction map that proves the strong subadditivity inequality $S(BC) + S(AC) \geq S(ABC) + S(C)$ is shown in Table~\ref{tab:ssa-contraction}.

In Sec.~\ref{strategy} we explain how to use a contraction map to define a winning strategy in our adversarial game. This demonstrates that every inequality proven in \cite{hec} also holds in a time-dependent geometry in three bulk dimensions. Importantly, we will not need to find any new contraction maps. Instead, we explain how to devise a winning strategy \emph{given} that a contraction map exists.

\subsection{How a contraction map works in the static case}
Before returning to time-dependent settings, it is illustrative to review how contraction maps prove inequalities when the bulk is static. They do this by partitioning the static slice of the bulk.

Observe that the minimal surface that computes $S(I_l)$ divides the bulk static slice into two parts. One of them, which \cite{hec} denotes $W_l$, sits between the minimal surface and $I_l$ on the asymptotic boundary. (The requirement that the minimal surface which computes $S(I_l)$ be homologous to $I_l$ boils down to demanding that $W_l$ exist and be uniquely defined.) The other part of the bulk slice is its complement, $\overline{W_l}$. We draw two examples of $W_l$-type regions in Fig.~\ref{wlsbw}.

\begin{figure}
        \centering
        \includegraphics[width=0.31\textwidth]{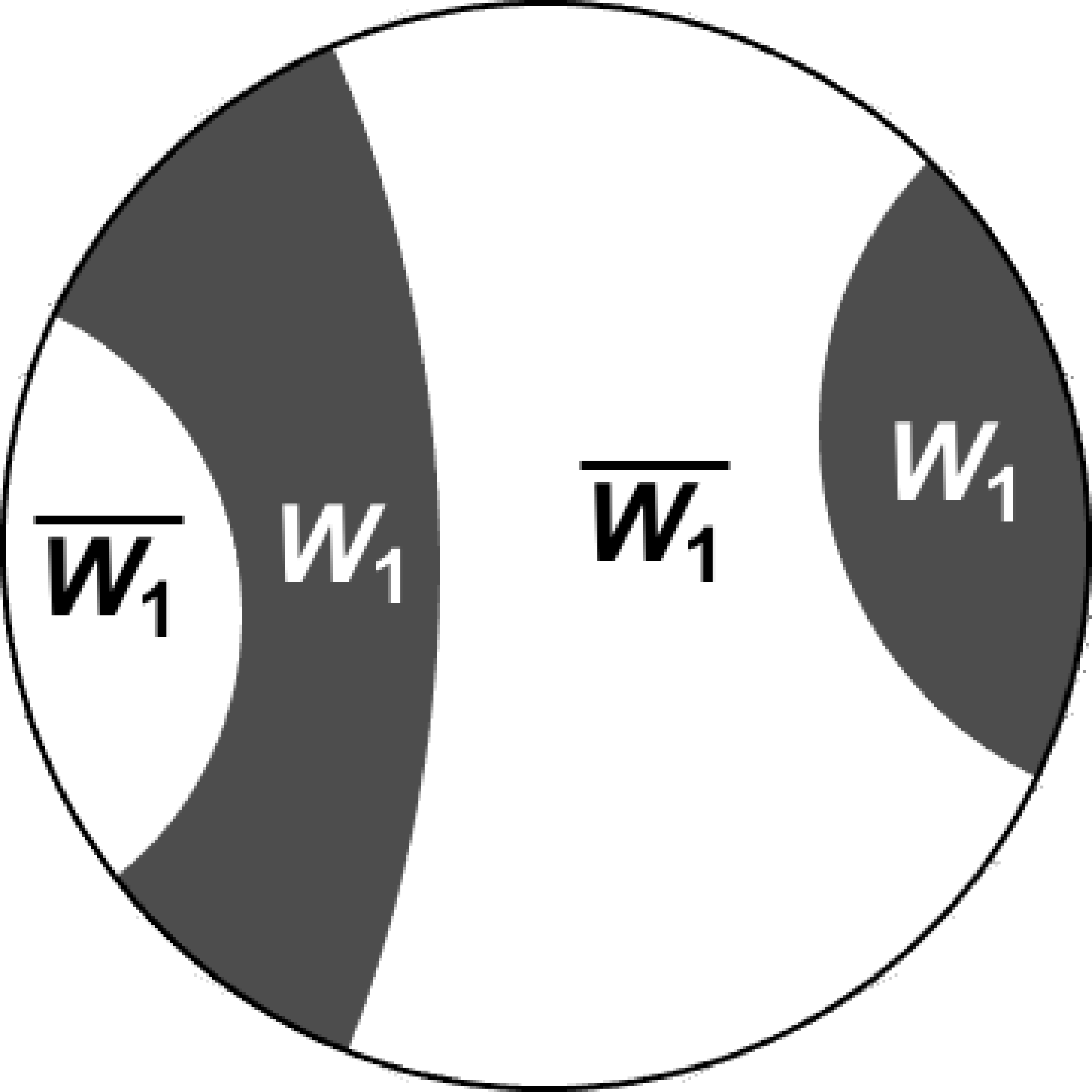}
     \hfill
        \includegraphics[width=0.31\textwidth]{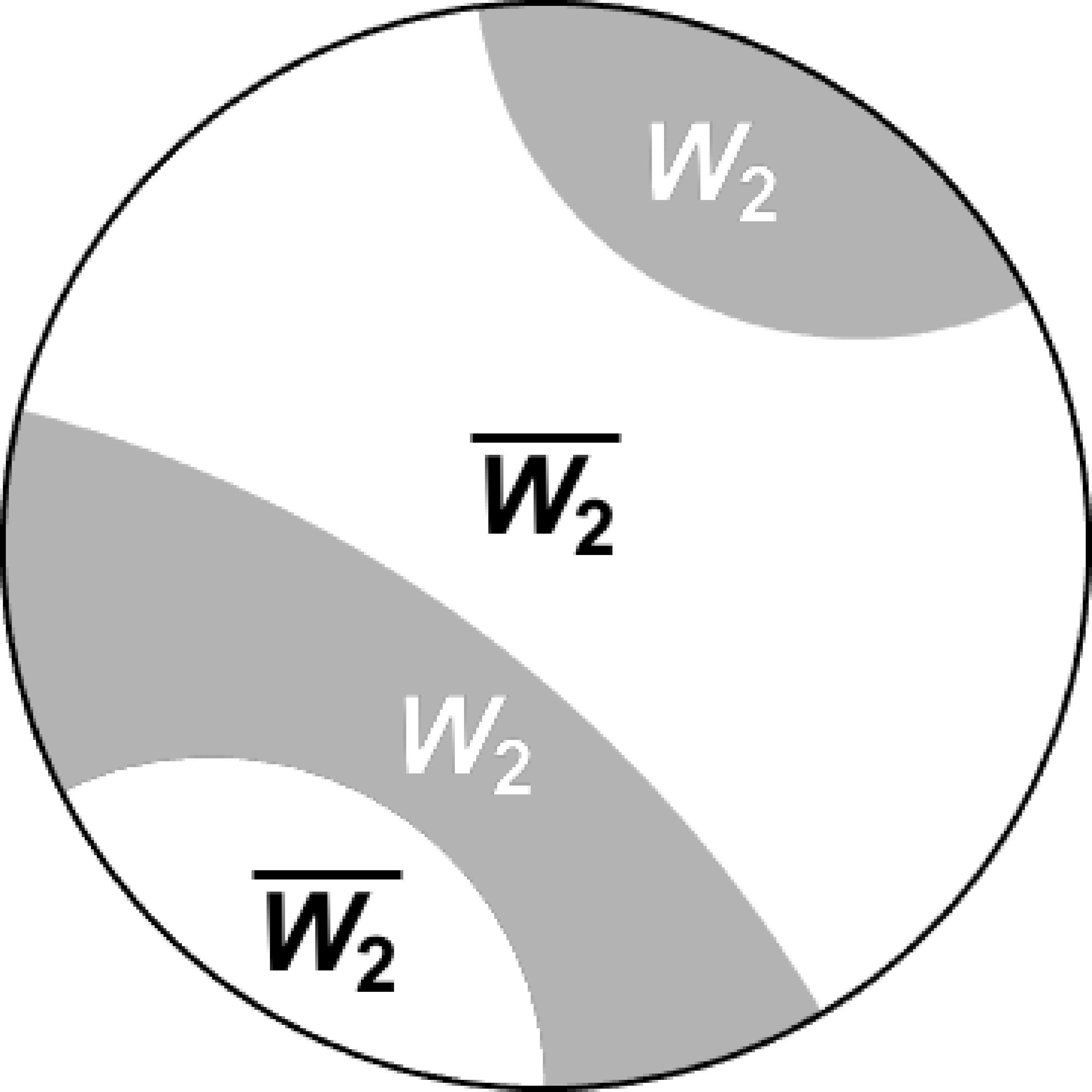}
        \hfill
        \includegraphics[width=0.31\textwidth]{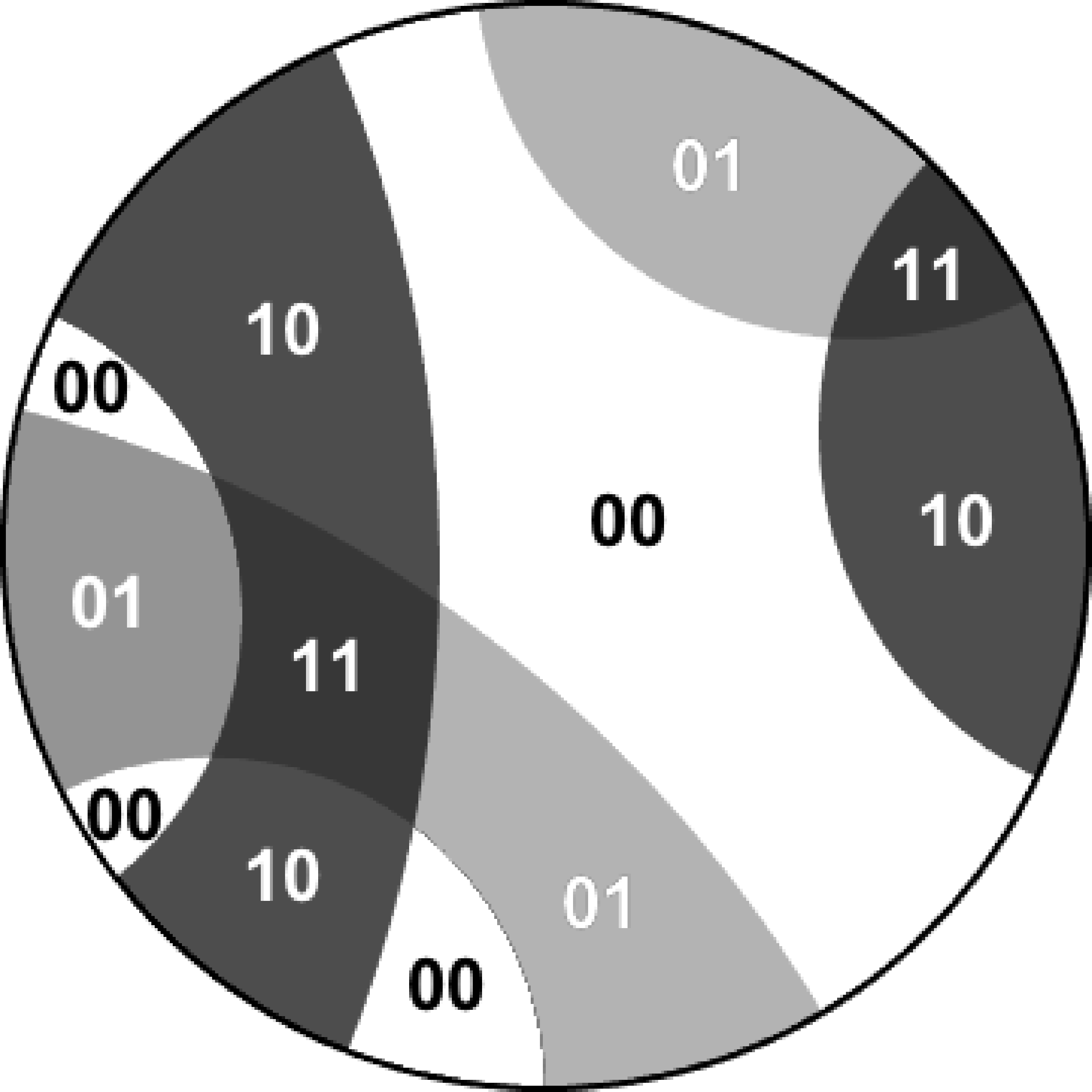} 
   \caption{Two examples of regions $W_l$ (left and center) and the regions $W(\vec{x})$ labeled with the components of $\vec{x}$ (right).}
   \label{wlsbw}
\end{figure}

By taking intersections of $W_l$s and $\overline{W_l}$s, we can partition the bulk static slice into $2^L$ regions; see the right panel of Fig.~\ref{wlsbw}. In easily constructible examples, many of them may be empty and many others---disconnected. In a general case, however, we have defined $2^L$ disjoint regions whose union is the static bulk slice. We will associate these subregions of the bulk slice with vectors in $\{0, 1\}^L$ and call them $W(\vec{x})$:
\be
W(\vec{x}) := \bigcap_{l \, | \, x_l = 1} W_l \, \cap \bigcap_{l \, | \, x_l = 0} \overline{W_l}
\qquad \textrm{for}~~\vec{x} \in \{0, 1\}^L.
\ee
An observation that will be useful momentarily is that among the $2^L$ $W({\vec{x}})$s, exactly $n+1$ reach the asymptotic boundary; these are the $\vec{x}_S$s which we saw in eqs.~(\ref{occurr}) and (\ref{defic}).

In a given phase (coloring) of the left hand side, the authors of \cite{hec} give the following prescription for selecting a `winning' right hand side. Let
\begin{equation}
U_r\,  = \bigcup_{\vec{x}\, | \, f(\vec{x})_r = 1} W({\vec{x}}).
\label{defuj}
\end{equation}
In other words, take all vectors $\vec{x} \in \{0,1\}^L$ whose image under the contraction map has one in its $r^{\rm th}$ component and form a union over all of their $W({\vec{x}})$s. 
Every $U_r$ is now a region on the static bulk slice whose boundary has two parts: one on the asymptotic boundary and the other in the bulk. On the asymptotic boundary, the boundary of $U_r$ is $J_r$. This is a consequence of the $n+1$ `initial data' in eq.~(\ref{defic}), which guarantee that amongst the $n+1$ $W({\vec{x}})$s that reach the boundary, only those which border the components of $J_r$ are included in union (\ref{defuj}). 

The bulk part of the boundary of $U_r$ is some bulk surface; call it $\mathcal{A}_r$. By construction, $\mathcal{A}_r$ is homologous to $J_r$ and therefore its area is greater than or equal to $S(J_r)$ (because the latter is obtained by minimizing over all bulk surfaces homologous to $J_r$, including $\mathcal{A}_r$). Thus, inequality~(\ref{templateineq2}) will follow if one can show
\begin{equation}
\sum_{l=1}^L \alpha_l S(I_l) \geq \sum_{r=1}^R \beta_r \mathcal{A}_r.
\label{staticsandwich}
\end{equation}
This is guaranteed by the contraction condition~(\ref{defcontraction}). 

To see this, think of $\mathcal{A}_r$ as a collection of component surfaces $\mathcal{A}_{\vec{x},\vec{x}'}$, each of which separates one $W({\vec{x}})$ (such that $f(\vec{x})_r = 1$) from some other $W({\vec{x}'})$ (such that $f(\vec{x}')_r = 0$). A component surface $\mathcal{A}_{\vec{x},\vec{x}'}$ also appears on the left hand side of (\ref{staticsandwich}) and contributes with multiplicity $\sum_{l=1}^L \alpha_l |x_l - x'_l|$ because every non-zero entry of $\vec{x} - \vec{x}'$ identifies one $S(I_l)$ to which the common border of $W({\vec{x}})$ and $W({\vec{x}'})$ belongs. With these observations, inequality~(\ref{staticsandwich}) becomes
\begin{equation}
\frac{1}{2} \sum_{\vec{x}, \vec{x}'} \mathcal{A}_{\vec{x},\vec{x}'} \sum_{l=1}^L \alpha_l |x_l - x'_l| 
\geq 
\frac{1}{2} \sum_{\vec{x}, \vec{x}'} \mathcal{A}_{\vec{x},\vec{x}'} \sum_{r=1}^R \beta_r |f(\vec{x})_r - f(\vec{x\,}')_r|\,,
\label{staticsandwich2}
\end{equation}
which is guaranteed by the contraction property. The factors of $1/2$ in (\ref{staticsandwich2}) correct a double-counting under the exchange $\vec{x} \leftrightarrow \vec{x\,}'$.

This construction appears to heavily depend on the existence of a static bulk. In particular, the atomic objects that the proof manipulates are the components $\mathcal{A}_{\vec{x},\vec{x}'}$ of minimal surfaces, which cannot even be defined in a non-static setup. As we presently explain, the proof in fact carries over to the dynamical context if, instead of components of minimal areas, we manipulate the overlap numbers in kinematic space. 


\se{The strategy}
\label{strategy}

We now return to the general, time-dependent setup. As a first step, even though the bulk geometry contains no privileged spatial slice, we will invent one. In other words, we choose an auxiliary spatial geometry with the topology of a disk whose boundary is identified with the CFT time slice that contains the regions $X_i$ and $Y_j$.

At this point, it is probably important to appease the reader and emphasize that the role of the chosen geometry is purely auxiliary. Our proof will not draw from the choice of geometry any quantitative inputs such as surface areas; instead, the auxiliary geometry will only help us make certain arbitrary discrete choices in our coloring game. Moreover, any choice of auxiliary geometry will work for conducting the proof. In this sense, the freedom in choosing the auxiliary geometry parameterizes the flexibility and/or redundancy in proving inequalities~(\ref{templateineq}) using contraction maps. 

\sse{Auxiliary geometry}
\la{auxiliary}

\begin{figure}
        \centering
        \includegraphics[width=0.31\textwidth]{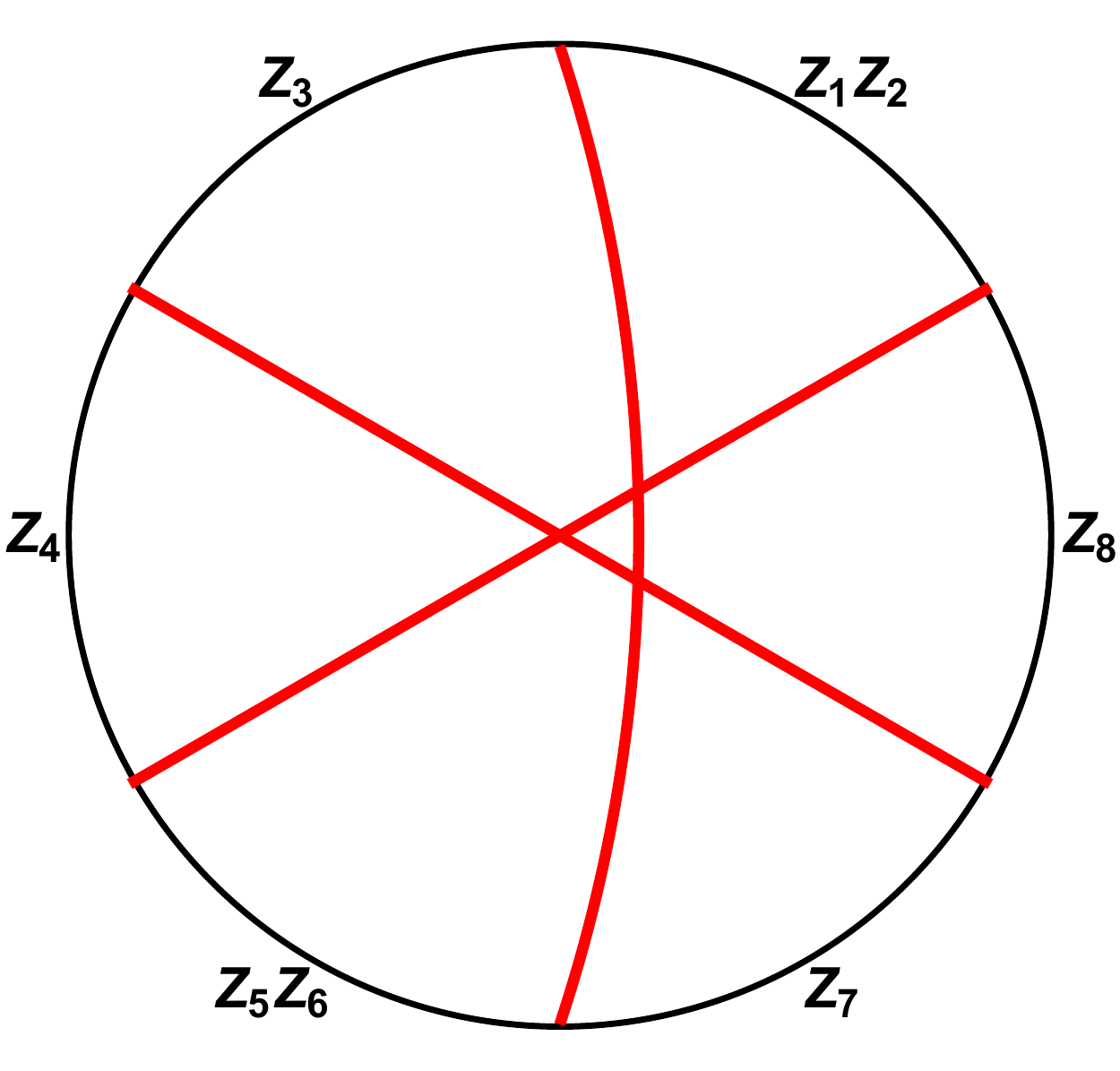}
     \hfill
        \includegraphics[width=0.31\textwidth]{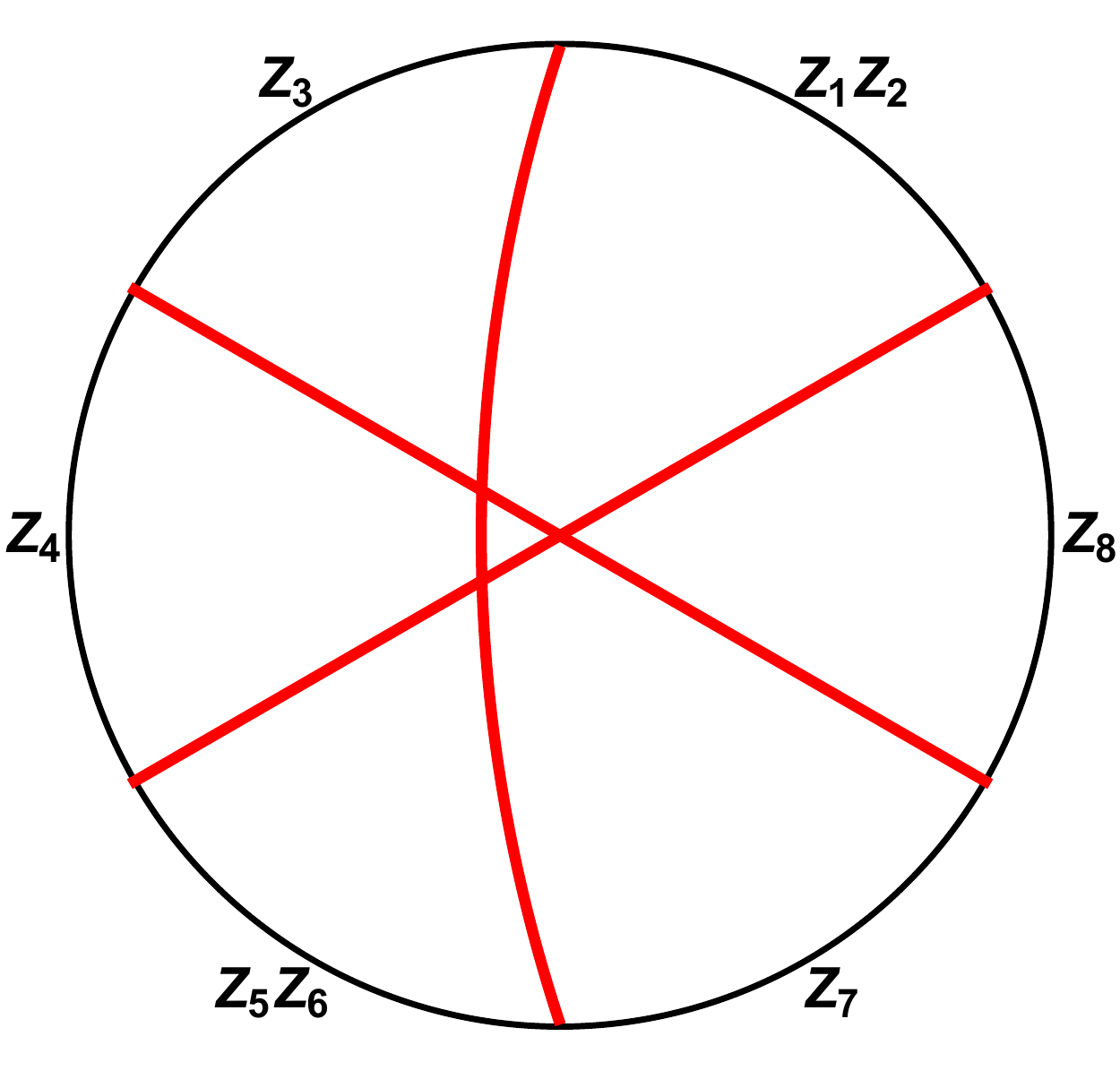}
        \hfill
        \includegraphics[width=0.31\textwidth]{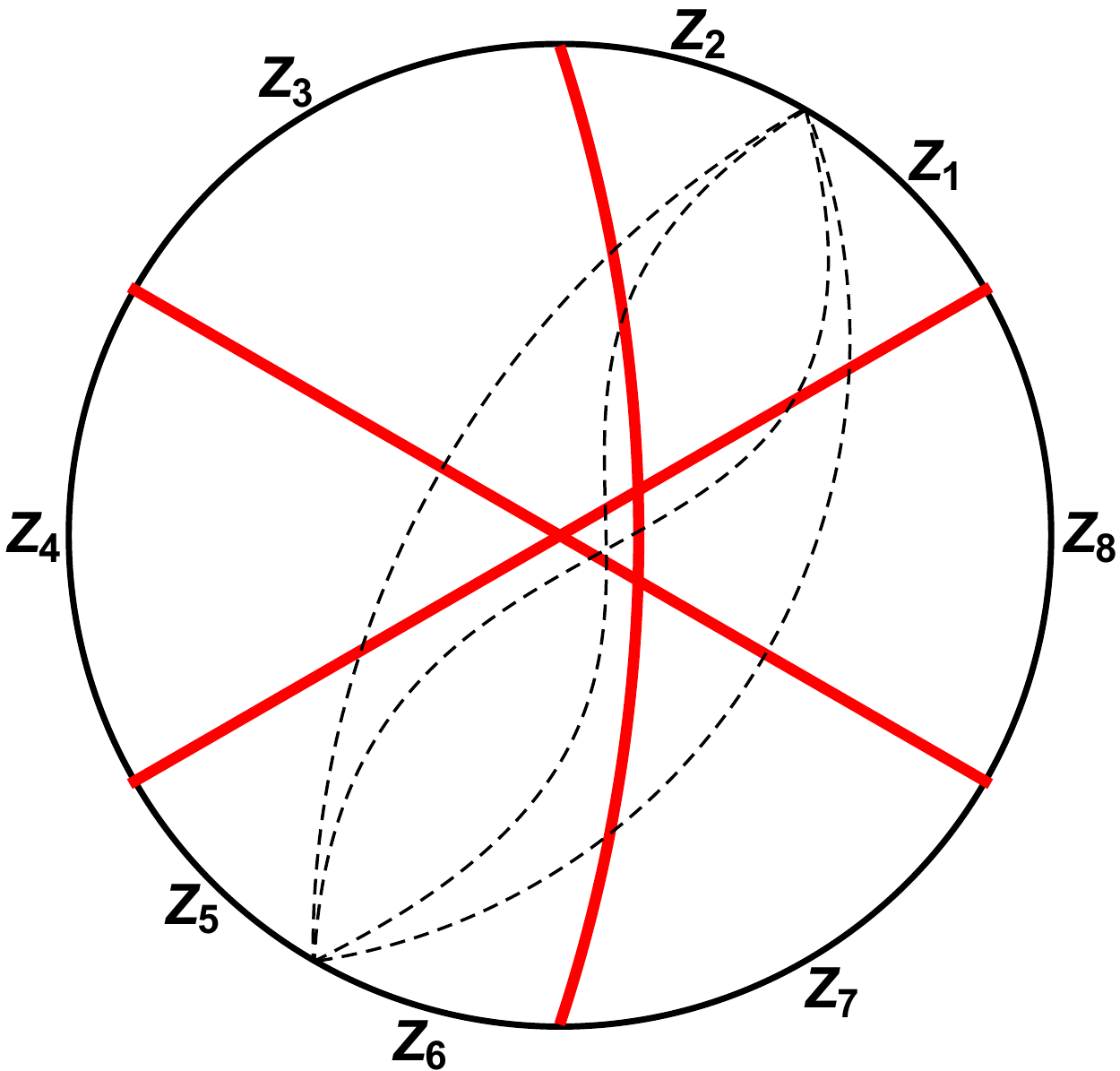} 
   \caption{Left and center: the two auxiliary geometries for three mutually intersecting geodesics. Right: Eight auxiliary geometries can be chosen for four mutually intersecting geodesics; here we draw the four possible ways to pass the fourth geodesic given the three-geodesic choice on the left.}
   \label{auxs}
\end{figure}

The important point in choosing an auxiliary geometry is to decide which geodesics intersect on which side of other geodesics. To illustrate this in more detail, consider three geodesics $\gamma_{1,2,3}$ such that none of their subtended intervals contains another. In a static geometry, this would mean that the geodesics intersect pairwise. Now, even though we do not have a physical geometry, we will make a choice and \emph{declare} geodesics $\gamma_1$ and $\gamma_2$ to `intersect' either on one side or on the other side of $\gamma_3$. We have two possible choices, which are shown in the left two panels in Fig.~\ref{auxs}. 

When choosing an auxiliary geometry, we make such a choice for every triple of geodesics whose subtended intervals partially overlap pairwise. In order not to have to repeat this clunky phrase, from now on we will simply call them `mutually intersecting geodesics,'\footnote{In kinematic space, this means a triple of geodesics which are pairwise \emph{spacelike separated} \cite{intgeometry}.} with the understanding that their `intersection' happens in the auxiliary space, not in the physical bulk geometry (where they will in general bypass one another, separated in time). These choices are not all independent; for example, given four mutually intersecting geodesics $\gamma_{1,2,3,4}$, we would na{\"\i}vely have $2^{\binom{4}{3}}=16$ possible configurations but only 8 of them can be realized; see the right panel in Fig.~\ref{auxs}. 

Proceeding in analogy to the static case, we now define $W_l$ and $\overline{W_l}$ in the auxiliary geometry. A given coloring of $I_l$ selects a set of geodesics, which subtend components of $I_l$. Those geodesics and $I_l$ form the boundary of a submanifold in the auxiliary geometry, which is $W_l$; $\overline{W_l}$ is its complement. Thus far, we are using the same definitions as in the static case, except that they are applied to the auxiliary geometry. 

\begin{figure}
        \centering
        \includegraphics[width=0.32\textwidth]{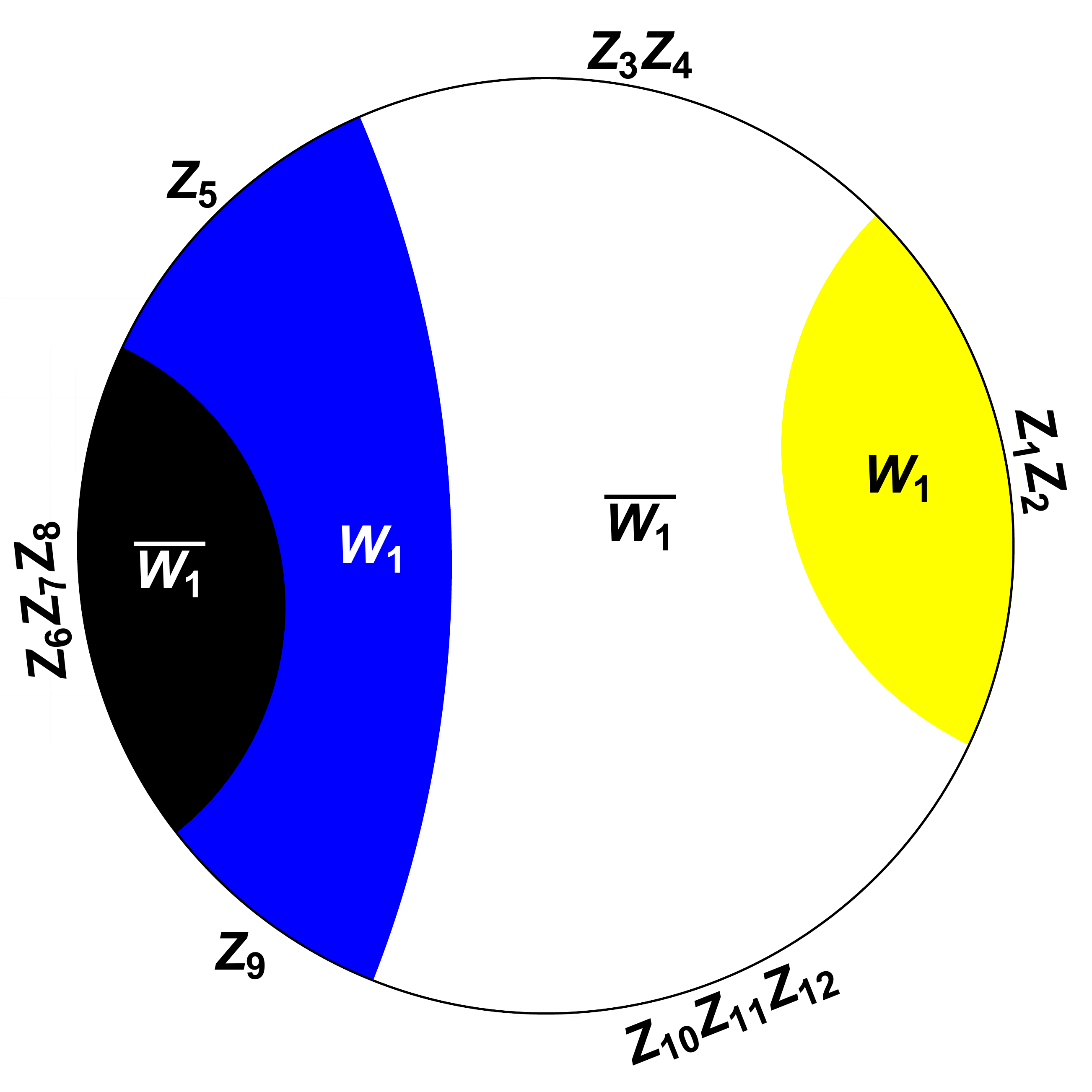}
     \hfill
        \includegraphics[width=0.32\textwidth]{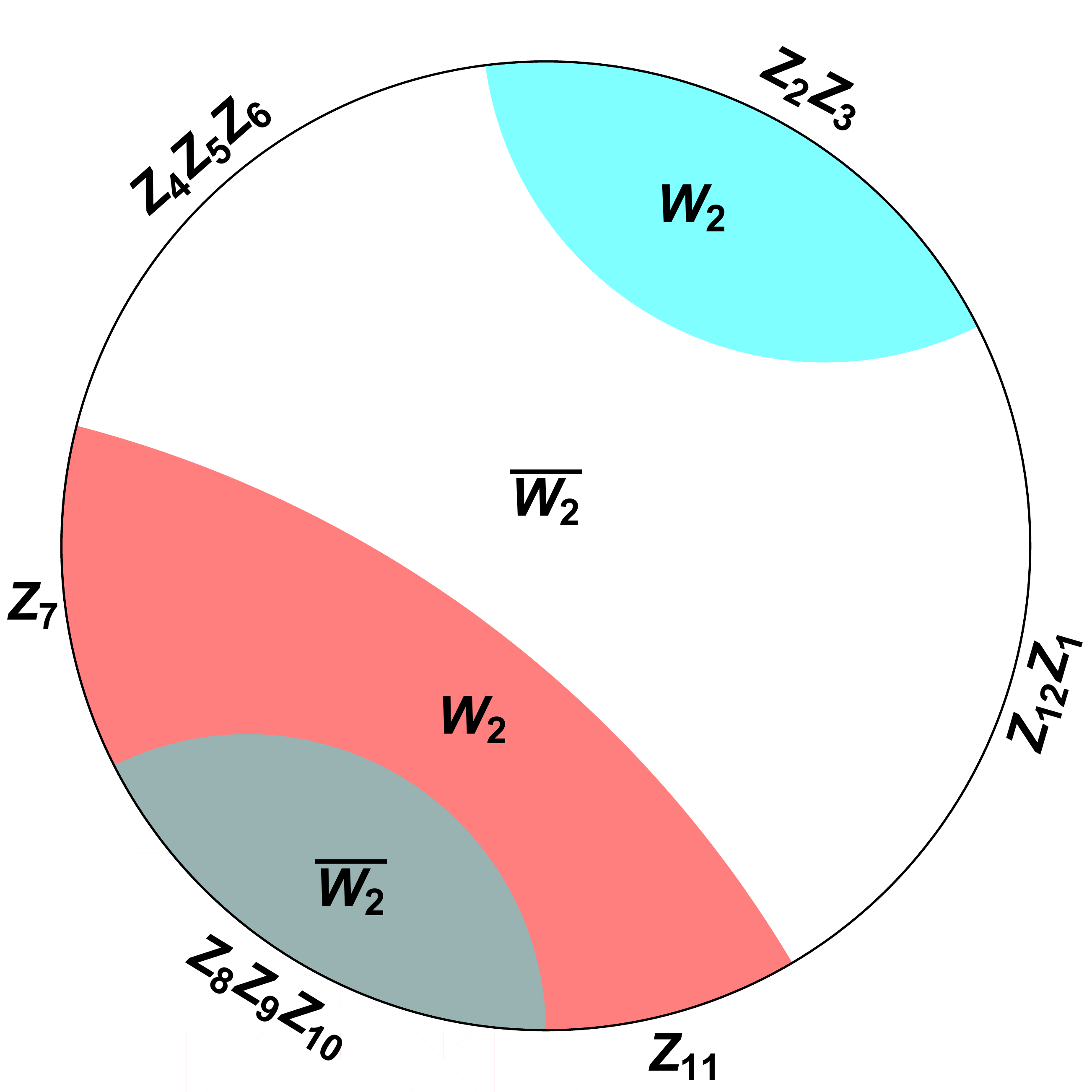}
        \hfill
        \includegraphics[width=0.32\textwidth]{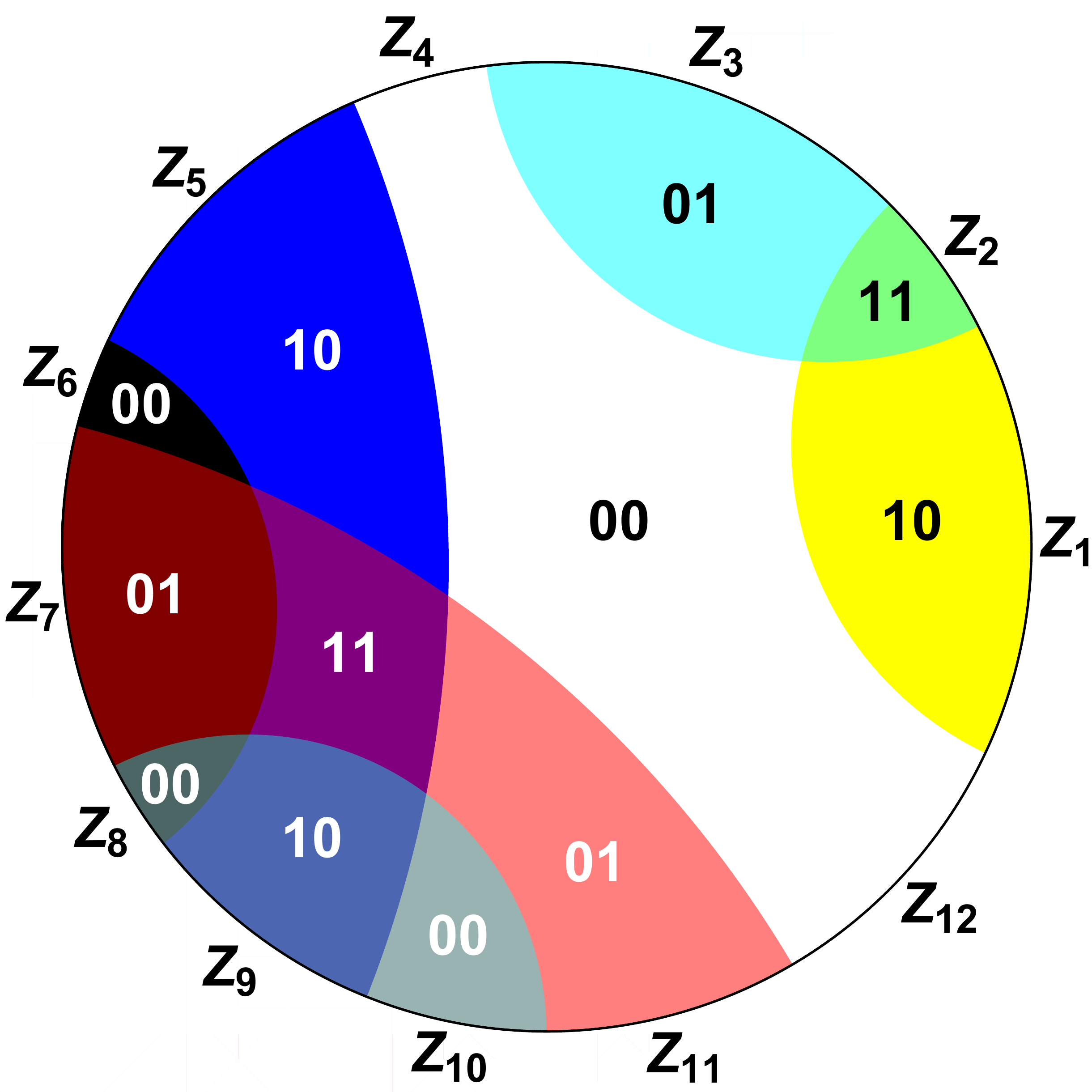} 
   \caption{Left and center panels: the division of regions $W_l$ and $\overline{W_l}$ from Fig.~\ref{wlsbw} into connected components. We have $W_1 = W_1^{\rm blue} \sqcup W_1^{\rm yellow}$, $\overline{W_1} = \overline{W_1}^{\rm white} \sqcup \overline{W_1}^{\rm black}$, \mbox{$W_2 = W_2^{\rm cyan} \sqcup W_2^{\rm red}$}, and $\overline{W_2} = \overline{W_2}^{\rm white} \sqcup \overline{W_2}^{\rm gray}$. Right panel: Regions $W(\vec{x})^{\vec{c}}$, i.e., the various intersections of $W_l^{c_l}$ and $\overline{W_l}^{c_l}$s. For example, the purple interior region is $W((1,1))^{({\rm blue},\, {\rm red})} = W_1^{\rm blue} \cap W_2^{\rm red}$ while the cyan region adjacent to interval $Z_3$ is $W((0,1))^{({\rm white},\, {\rm cyan})} = \overline{W_1}^{\rm white} \cap W_2^{\rm cyan}$. In this example, $I_1 = Z_1 Z_2 Z_5 Z_9$ and $I_2 = Z_2 Z_3 Z_7 Z_{11}$.}
   \label{wlscolor}
\end{figure}

In fact, we will need slightly finer objects than the $W_l$s: their connected components. We shall denote the connected components of $W_l$ with $W_l^c$, where $c$ labels the color\footnote{The superscript $c$ has nothing to do with complements (which are denoted by bars).} in the given coloring (phase) of $I_l$. For example, a phase of $I_l$ depicted in Fig.~\ref{fig:example} (now we interpret the figure as the auxiliary geometry) has three connected components: $W_l^{\rm red} = W_l^{(X_4 X_5)}$, $W_l^{\rm blue} = W_l^{(X_2 X_5 X_6 X_7)}$, and $W_l^{\rm green} = W_l^{(X_1)}$. Another example of colored $W_l^c$ regions, which is based on Fig.~\ref{wlsbw}, is displayed in the left and center panels of Fig.~\ref{wlscolor}. Because we have assumed to work in a pure state on a circle or line, the coloring of $I_l$ automatically colors $\overline{I_l}$, so we will have analogously defined regions $\overline{W_l}^c$. 

Proceeding in parallel with the static case, we now take all possible intersections of the regions $W_l^c$ and $\overline{W_l}^c$. The resulting subregions, which form a partition of the auxiliary geometry, are defined as
\begin{equation}
W(\vec{x})^{\vec{c}} = \bigcap_{l \, | \, x_l = 1} W_l^{c_l} \, \cap \bigcap_{l \, | \, x_l = 0} \overline{W_l}^{c_l}
\qquad \textrm{for}~~\vec{x} \in \{0, 1\}^L.
\label{wxcolor}
\end{equation}
Here $\vec{c}$ is a vector of colors of length $L$: if $x_l = 1$, $c_l$ labels a color in $I_l$; otherwise it labels a color in $\bar{I_l}$. An example of regions $W(\vec{x})^{\vec{c}}$ is shown in the right panel of Fig.~\ref{wlscolor}. 
Among them, there are $2N$ special ones, which are contiguous to the boundary of the auxiliary geometry. They correspond to the basic intervals $X_i$ and $Y_j$ on the CFT slice. We will call them $W(\vec{x}(X_i))^{\vec{c}(X_i)}$ (and similarly for $Y_j$).\footnote{Some of these $2N$ regions could be identical.  For example, $W(\vec{x}(X_i))^{\vec{c}(X_i)} = W(\vec{x}(X_j))^{\vec{c}(X_j)}$ if for all $l$, $X_i$ and $X_j$ share a color in the coloring of $I_l$ or of $\bar{I_l}$.\la{fid}} As an illustration, we tabulate them for the example of Fig.~\ref{wlscolor} in Table~\ref{tab:wxxcx}.

\begin{table}
\centering
\begin{tabular}{l c p{0.1cm} c p{0.1cm} l}
  \hline
$Z_i$ & $\vec{x}(Z_i)$ & & $\vec{c}(Z_i)$ & & \quad\qquad\qquad$W(\vec{x}(Z_i))^{\vec{c}(Z_i)}$ \\
  \hline
$Z_1$ & $(1,0)$ & & $({\rm yellow},\,{\rm white})$ & & $W((1,0))^{({\rm yellow},\, {\rm white})} = 
{W_1}^{\rm yellow} \cap \overline{W_2}^{\rm white}$ \phantom{\raisebox{.1cm}{\Big|}}\\
$Z_2$ & $(1,1)$ & & $({\rm yellow},\,{\rm cyan})$ & & $W((1,1))^{({\rm yellow},\, {\rm cyan})} = 
{W_1}^{\rm yellow} \cap W_2^{\rm cyan}$ \\
$Z_3$ & $(0,1)$ & & $({\rm white},\,{\rm cyan})$ & & $W((0,1))^{({\rm white},\, {\rm cyan})} = \overline{W_1}^{\rm white} \cap {W_2}^{\rm cyan}$ \\
$Z_4$ & $(0,0)$ & & $({\rm white},\,{\rm white})$ & & $W((0,0))^{({\rm white},\, {\rm white})} = \overline{W_1}^{\rm white} \cap \overline{W_2}^{\rm white}$ \\
$Z_5$ & $(1,0)$ & & $({\rm blue},\,{\rm white})$ & & $W((1,0))^{({\rm blue},\, {\rm white})} = 
{W_1}^{\rm blue} \cap \overline{W_2}^{\rm white}$ \\
$Z_6$ & $(0,0)$ & & $({\rm black},\,{\rm white})$ & & $W((0,0))^{({\rm black},\, {\rm white})} = \overline{W_1}^{\rm black} \cap \overline{W_2}^{\rm white}$ \\
$Z_7$ & $(0,1)$ & & $({\rm black},\,{\rm red})$ & & $W((0,1))^{({\rm black},\, {\rm red})} = \overline{W_1}^{\rm black} \cap W_2^{\rm red}$ \\
$Z_8$ & $(0,0)$ & & $({\rm black},\,{\rm gray})$ & & $W((0,0))^{({\rm black},\, {\rm gray})} = \overline{W_1}^{\rm black} \cap \overline{W_2}^{\rm gray}$ \\
$Z_9$ & $(1,0)$ & & $({\rm blue},\,{\rm gray})$ & & $W((1,0))^{({\rm blue},\, {\rm gray})} = 
{W_1}^{\rm blue} \cap \overline{W_2}^{\rm gray}$ \\
$Z_{10}$ & $(0,0)$ & & $({\rm white},\,{\rm gray})$ & & $W((0,0))^{({\rm white},\, {\rm gray})} = \overline{W_1}^{\rm white} \cap \overline{W_2}^{\rm gray}$ \\
$Z_{11}$ & $(0,1)$ & & $({\rm white},\,{\rm red})$ & & $W((0,1))^{({\rm white},\, {\rm red})} = \overline{W_1}^{\rm white} \cap W_2^{\rm red}$ \\
$Z_{12}$ & $(0,0)$ & & $({\rm white},\,{\rm white})$ & & $W((0,0))^{({\rm white},\, {\rm white})} = \overline{W_1}^{\rm white} \cap \overline{W_2}^{\rm white}$ \\
  \hline
\end{tabular}
\caption{The regions $W(\vec{x}(Z_i))^{\vec{c}(Z_i)}$, which are contiguous to the boundary, tabulated for the example from Fig.~\ref{wlscolor}. The instance of $Z_4$ and $Z_{12}$ illustrates that they do not have to be distinct.}
\label{tab:wxxcx}
\end{table}

Taking the union of $W(\vec{x})^{\vec{c}}$ over all colors gives us a region, which is directly analogous to $W(\vec{x})$ in  the static case:
\begin{equation}\la{defwx}
W(\vec{x}) = \bigcup_{\vec{c}}\, W(\vec{x})^{\vec{c}} = 
\bigcap_{l \, | \, x_l = 1} \! W_l \, \cap \bigcap_{l \, | \, x_l = 0} \!\overline{W_l}
\qquad \textrm{for}~~\vec{x} \in \{0, 1\}^L.
\end{equation}
We have seen an illustration of regions $W(\vec{x})$ in Fig.~\ref{wlsbw}. Among the $W(\vec{x})$s, $n+1$ are special in that they are contiguous to the boundary of the auxiliary disk. These special ones correspond to the $n$ regions $A, B, \ldots$ that populate the inequality in question, plus an $(n+1)^{\rm st}$ one that corresponds to the purifying region.

As one final object, we define for every $1 \leq r \leq R$
\begin{equation}
U_r = \bigcup_{\vec{x}\, | \, f(\vec{x})_r = 1} W({\vec{x}}).
\label{defujaux}
\end{equation}
This is the same definition as in the static case, except that it is applied to the auxiliary geometry. 

\sse{The winning coloring}
\label{winningstr}

The following strategy for coloring every $J_r$ on the right hand side of inequality~\er{templateineq} wins the game:  two basic intervals $X_j, X_k \subset J_r$ have the same color in $J_r$ if and only if their corresponding regions $W(\vec{x})^{\vec{c}}$ belong to the same connected component of $U_r$. 

Let us explain this rule in more detail. As we remarked below eq.~(\ref{wxcolor}), a unique $W(\vec{x})^{\vec{c}}$ meets the asymptotic boundary of the auxiliary geometry at $X_j \subset J_r$; we call it $W(\vec{x}(X_j))^{\vec{c}(X_j)}$. Of course, interval $X_k \subset J_r$ is also contiguous to a unique $W(\vec{x}(X_k))^{\vec{c}(X_k)}$. The two subregions are contained in $U_r$ defined in (\ref{defujaux}) because by construction $U_r$ meets the boundary at $J_r$ and $X_j, X_k \subset J_r$. However, $W(\vec{x}(X_j))^{\vec{c}(X_j)}$ and $W(\vec{x}(X_k))^{\vec{c}(X_k)}$ may or may not belong to the same connected component of $U_r$. 
In our coloring game, the winning strategy is to assign the same color to $X_j$ and $X_k$ in $J_r$ if and only if $W(\vec{x}(X_j))^{\vec{c}(X_j)}$ and $W(\vec{x}(X_k))^{\vec{c}(X_k)}$ belong to the same connected component of $U_r$.\footnote{If $W(\vec{x}(X_j))^{\vec{c}(X_j)}$ and $W(\vec{x}(X_k))^{\vec{c}(X_k)}$ are identical (as pointed out in footnote~\re{fid}), they automatically belong to the same connected component of $U_r$.} 

This prescription is well-defined because belonging to the same connected component of $U_r$ is an equivalence relation. 

\sse{Proof: why the strategy wins}
\label{winproof}

The strategy defined above picks a coloring of every $J_r$ on the right hand side of inequality~(\ref{templateineq}). As such, it defines an overlap number $h_r(Z_j, Z_k)$ on kinematic space. 
Our task is to prove that $h_{LHS}(Z_j,Z_k)\ge h_{RHS}(Z_j,Z_k)$, i.e.,
\begin{equation}
\sum_{l=1}^L \alpha_l\, h_l(Z_j, Z_k) \geq \sum_{r=1}^R \beta_r\, h_r(Z_j, Z_k)
\qquad \textrm{for all}~(Z_j, Z_k) \in \mathcal{K}\,.
\label{whattoprove}
\end{equation}

Consider a sequence $P$ of regions $W(\vec{x}(i))^{\vec{c}(i)}$ ($i = 1, 2, \ldots$) in our partition of the auxiliary geometry. The sequence is such that every two consecutive regions are neighbors across one geodesic in the auxiliary geometry. In this way, $P$ is a path in the auxiliary geometry, which involves a sequence of jumps across geodesics---from  $W(\vec{x}(i))^{\vec{c}(i)}$ to $W(\vec{x}(i+1))^{\vec{c}(i+1)}$.

For any region $V$ that is a union of some of the $W(\vec{x})^{\vec{c}}$s in the auxiliary geometry, let $h(P, V)$ count the number of times $P$ enters or exits $V$. For example, if $P$ consists only of $W(\vec{x})^{\vec{c}}$s that are part of $V$, $h(P,V) = 0$. If $P$ consists only of $W(\vec{x})^{\vec{c}}$s that are outside $V$, $h(P,V)$ also vanishes.

Our proof relies on the following two lemmas:
\begin{lemma}
\label{colorregionlemma}
Let $V$ be the union of some of the regions $W(\vec{x})^{\vec{c}}$ in the auxiliary geometry. Say that it meets the asymptotic boundary at $J_V$, which is a union of basic intervals $Z_{i_1} \cup Z_{i_2} \cup \ldots$.
We define a coloring of $J_V$ by the rule that two intervals $Z_j$ and $Z_k$ are in the same color if and only if $W(\vec{x}(Z_j))^{\vec{c}(Z_j)}$ and $W(\vec{x}(Z_k))^{\vec{c}(Z_k)}$ are in the same connected component of $V$. This coloring defines an overlap number $h_V(Z_j, Z_k)$ on kinematic space. 

Let $P$ be a path in the auxiliary space that begins at $W(\vec{x}(Z_j))^{\vec{c}(Z_j)}$ and ends at $W(\vec{x}(Z_k))^{\vec{c}(Z_k)}$. Then the following holds
\begin{equation}
h(P, V) \geq h_V(Z_j, Z_k)\qquad \textrm{for all}~(Z_j, Z_k) \in \mathcal{K}\,.
\label{colorfromregion}
\end{equation}
\end{lemma}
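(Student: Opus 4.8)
The plan is to show that $h_V(Z_j,Z_k)$ counts a collection of ``walls'' in the auxiliary geometry that any path $P$ is forced to cross, with each crossing contributing to $h(P,V)$. First I would unpack the two sides of \er{colorfromregion} into comparable objects. On the right, $h_V(Z_j,Z_k)$ is by definition the graph distance in the coloring tree of $J_V$ between the colors of $Z_j$ and $Z_k$; since the coloring tree is a tree, this equals the number of edges on the unique path joining the two colors. On the left, I would record that $P$ is a walk in the adjacency graph of the cells $W(\vec x)^{\vec c}$, that each step crosses exactly one geodesic, and that $h(P,V)$ counts precisely those steps whose geodesic lies on $\partial V$ (equivalently, the steps passing between a cell in $V$ and a cell in $\bar V$). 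The goal is thus to match each of the $h_V(Z_j,Z_k)$ tree edges to at least one distinct $\partial V$-crossing of $P$.

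The central step is a cut argument. Each edge $e$ on the coloring-tree path between the colors of $Z_j$ and $Z_k$ splits the colors into two groups, and---because the coloring is valid, i.e.\ non-alternating---this split is realized on the boundary circle as a pair of arcs, one containing $Z_j$ and the other $Z_k$. I would argue that the locus separating these two arcs can be taken to lie entirely within $\partial V$: the two arcs carry colors lying in different connected components of $V$ or of $\bar V$; adjacent components are necessarily of opposite type (two same-type components that touched would be a single component); and hence every segment of the separating locus borders a $V$-cell on one side and a $\bar V$-cell on the other, so crossing it is an enter/exit of $V$. As $e$ ranges over the $h_V(Z_j,Z_k)$ edges of the tree path, the corresponding cuts are nested (the $Z_k$-side subtrees decrease along the path), so the separating loci are disjoint, and any walk from $W(\vec x(Z_j))^{\vec c(Z_j)}$ to $W(\vec x(Z_k))^{\vec c(Z_k)}$ must cross each of them an odd---hence nonzero---number of times, at distinct steps. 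Summing over edges gives $h(P,V)\ge h_V(Z_j,Z_k)$, which is \er{colorfromregion}.

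I expect the main obstacle to be making the edge-to-wall correspondence rigorous in the presence of crossing geodesics and interior cells. Concretely: (i) the raw adjacency graph of cells, or even of connected components, need not be a tree---three mutually intersecting geodesics already produce a four-cycle of cells around their central triangle---so I cannot simply identify the coloring tree with a region-adjacency tree; I must instead lean on the non-alternation property, which forces the color walls to form a laminar (nested) family whose nerve is exactly the coloring tree. (ii) Interior components of $V$ or $\bar V$, which touch no boundary interval and so correspond to no color, must be shown not to spoil either the claim that each wall lies in $\partial V$ or the nesting of the cuts; the cleanest route I see is to define each cut directly as the minimal $\partial V$-separator between the two boundary arcs and to assign each interior component to whichever side it falls on, then check that this choice does not affect the count of forced crossings of $P$. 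Verifying these two points is where the real work lies; once they are in place, the nesting-and-parity conclusion is routine.
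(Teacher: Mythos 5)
Your proposal is correct and follows essentially the same route as the paper's own (much terser) proof: each edge on the coloring-tree path between the colors of $Z_j$ and $Z_k$ is identified with a separating locus lying in $\partial V$ that any path $P$ must cross, and distinct edges give distinct crossings, yielding $h(P,V)\ge h_V(Z_j,Z_k)$. The paper compresses this to ``every tree edge corresponds to a geodesic separating a color in $J_V$ from one in $\bar J_V$,'' so the laminarity/nesting of the cuts and the treatment of interior components that you single out as the real work are exactly the points the published argument leaves implicit.
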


\begin{proof}
Consider the coloring tree of $J_V$, viz.\ Sec.~\ref{sec:ks} and Fig.~\ref{fig:example}. The overlap number $h_V(Z_j, Z_k)$ counts the edges of the coloring tree that must be crossed in going from the color of $Z_j$ to the color of $Z_k$.  Every edge in the coloring tree corresponds to a geodesic in the auxiliary geometry, which separates a color in $J_V$ from a color in $\bar{J}_V$. Thus, $h_V(Z_j, Z_k)$ identifies and counts certain geodesics that must be crossed at least once by any path from $W(\vec{x}(Z_j))^{\vec{c}(Z_j)}$ to $W(\vec{x}(Z_k))^{\vec{c}(Z_k)}$ in our auxiliary space. This is precisely the content of inequality~(\ref{colorfromregion}). The inequality is saturated if and only if the path $P$ makes no unnecessary geodesic crossings, i.e., if it never backtracks in the coloring tree.
\end{proof}

Note that Lemma~\ref{colorregionlemma} applies directly to functions $h_l(Z_j, Z_k)$ and $h_r(Z_j, Z_k)$ because by construction
\begin{equation}
h_l(Z_j, Z_k) = h_{W_l}(Z_j, Z_k) \quad {\rm and} \quad h_r(Z_j, Z_k) = h_{U_r}(Z_j, Z_k) \,.
\end{equation}
Therefore, for any path $P$ from $W(\vec{x}(Z_j))^{\vec{c}(Z_j)}$ to $W(\vec{x}(Z_k))^{\vec{c}(Z_k)}$ in the auxiliary geometry, we have:
\begin{eqnarray}
h(P, W_l) & \geq & h_l(Z_j, Z_k)\label{hpwlineq} \\
h(P, U_r) & \geq & h_r(Z_j, Z_k).\label{hpurineq}
\end{eqnarray}
The second lemma below states that for a given $(Z_j, Z_k) \in \mathcal{K}$, there is a path $P$ such that inequality (\ref{hpwlineq}) is simultaneously saturated for all $l$.

\begin{lemma}
\label{shortpaths}
There exists a path $P$ in the auxiliary space that begins at $W(\vec{x}(Z_j))^{\vec{c}(Z_j)}$, ends at $W(\vec{x}(Z_k))^{\vec{c}(Z_k)}$, and satisfies
\begin{equation}
h(P, W_l) = h_l(Z_j, Z_k) \qquad \textrm{for all}~~1 \leq l \leq L. \label{phl}
\end{equation}
\end{lemma}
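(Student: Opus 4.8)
The plan is to produce $P$ explicitly as the path traced by a single taut arc in the auxiliary disk. First I would pick a point in the interior of the boundary interval $Z_j$ and a point in the interior of $Z_k$, and join them by a simple arc $\gamma_{jk}$ drawn in the auxiliary geometry, chosen taut (isotopic to the straight chord) so that it avoids all declared multi-geodesic intersection points and meets every geodesic transversally. Reading off the ordered sequence of cells $W(\vec{x})^{\vec{c}}$ that $\gamma_{jk}$ visits defines the path $P$; since $\gamma_{jk}$ emanates from $Z_j$ and terminates at $Z_k$, the path begins at $W(\vec{x}(Z_j))^{\vec{c}(Z_j)}$ and ends at $W(\vec{x}(Z_k))^{\vec{c}(Z_k)}$, exactly as the statement requires.

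Next I would reduce \eqref{phl} to a statement about a single $l$ at a time. Each jump of $P$ across a geodesic toggles membership in $W_l$ precisely when that geodesic lies on $\partial W_l$, so $h(P,W_l)$ equals the number of $\partial W_l$ geodesics that $\gamma_{jk}$ crosses. The key structural input is that in a fixed coloring of $I_l$ the geodesics bounding $W_l$ are the HRT surfaces of the \emph{single} region $I_l$ and are therefore mutually non-crossing chords of the disk. Each such chord $g$ separates the disk into two components, so a taut $\gamma_{jk}$ crosses $g$ exactly once when its two endpoints lie in different components and not at all otherwise; in particular every $\partial W_l$ geodesic is crossed at most once, and $h(P,W_l)$ counts distinct separating chords with no cancellation.

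The heart of the argument is to identify this count with $h_l(Z_j,Z_k)$. A family of mutually non-crossing chords partitions the disk into cells whose adjacency graph is exactly the coloring tree of Sec.~\ref{sec:ks}: each cell is one color and each chord is one edge (this is the tree whose tree-ness was already established there). A taut arc joining a point of $Z_j$ to a point of $Z_k$ crosses precisely those chords that separate $Z_j$ from $Z_k$, i.e.\ those whose endpoints interleave $Z_j$ and $Z_k$ on the boundary circle, and the number of such chords is the graph distance between the cell at $Z_j$ and the cell at $Z_k$ in this tree. By definition that distance is $h_l(Z_j,Z_k)$, so combining with the previous paragraph gives $h(P,W_l)=h_l(Z_j,Z_k)$ for each $l$.

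The subtle point---and the reason one path can saturate all $L$ relations at once---is that $\gamma_{jk}$ crosses every individual geodesic of the auxiliary geometry at most once. Since each edge of each coloring tree is realized by one such geodesic, this single geometric fact forces $P$ never to backtrack in \emph{any} coloring tree, so the saturation criterion of Lemma~\ref{colorregionlemma} holds for all $l$ simultaneously rather than one at a time. The main obstacle I expect is bookkeeping rather than conceptual: I must confirm that a taut generic $\gamma_{jk}$ always exists (avoiding the finitely many declared intersection points), and that the identification of the coloring tree with the dual tree of the non-crossing chord arrangement remains faithful even when some cells $W(\vec{x})^{\vec{c}}$ coincide or appear disconnected from the point of view of the $l$-th family (cf.\ footnote~\ref{fid}). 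Once these planar-topology details are pinned down, \eqref{phl} follows immediately, and together with \eqref{hpurineq} and Lemma~\ref{colorregionlemma} it yields \eqref{whattoprove}.
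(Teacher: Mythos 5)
Your proposal is correct and follows essentially the same route as the paper: take a single arc from the interior of $Z_j$ to the interior of $Z_k$ that crosses every geodesic of the auxiliary arrangement at most once, so that the induced cell path never backtracks in any coloring tree and saturates all $L$ relations simultaneously. The only cosmetic difference is that you justify the at-most-once crossing by tautness/minimal position, whereas the paper simply takes $\gamma$ to be a geodesic of the auxiliary geometry and invokes the fact that two geodesics in a pure-state (disk-topology) geometry intersect at most once.
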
  

\begin{proof}
Saturating (\ref{hpwlineq}) means that $P$ cannot backtrack in the coloring tree of $I_l$. Any type of backtracking means crossing and re-crossing a geodesic in the auxiliary geometry.

Draw a geodesic $\gamma$ in the auxiliary geometry that starts at some point in the interior of interval $Z_j$ and ends at some point in $Z_k$. By virtue of being a geodesic, $\gamma$ cannot cross any other geodesic more than once. (Here we rely on the assumption that the geometry represents a pure state, so that two minimal geodesics can cross at most once.) Let $P$ be the sequence of regions $W(\vec{x}(i))^{\vec{c}(i)}$ traversed by $\gamma$. By construction, $P$ is a path that never backtracks in the coloring tree of any $I_l$, so (\ref{phl}) holds.
\end{proof}

\begin{proof}[Proof that the strategy of Sec.~\ref{winningstr} wins.]
For every $(Z_j, Z_k) \in \mathcal{K}$, take a path $P$ stipulated in Lemma~\ref{shortpaths}. Using (\ref{hpurineq}) and (\ref{phl}), to prove~\er{whattoprove} it suffices to show
\begin{equation}
\sum_{l=1}^L \a_l\, h(P,W_l) \ge \sum_{r=1}^R \b_r\, h(P,U_r).
\label{pathintersects}
\end{equation}
In fact, instead of considering the entire path $P$ in one go, we can split $P$ into individual jumps from $W(\vec{x}(i))^{\vec{c}(i)}$ to $W(\vec{x}(i+1))^{\vec{c}(i+1)}$. Let $P_i$ be a single jump in $P$, i.e., a one-step path from $W(\vec{x}(i))^{\vec{c}(i)}$ to $W(\vec{x}(i+1))^{\vec{c}(i+1)}$. Since
\begin{equation}
h(P, W_l) = \sum_i h(P_i, W_l) \qquad {\rm and} \qquad h(P, U_r) = \sum_i h(P_i, U_r)\,,
\label{pstepwise}
\end{equation}
it is enough to prove
\begin{equation}
\sum_{l=1}^L \a_l\, h(P_i,W_l) \ge \sum_{r=1}^R \b_r\, h(P_i,U_r).
\label{finalsimp}
\end{equation}
for every step $P_i: W(\vec{x}(i))^{\vec{c}(i)} \to W(\vec{x}(i+1))^{\vec{c}(i+1)}$ in path $P$. 

But the contraction property~(\ref{defcontraction}) guarantees that inequality~(\ref{finalsimp}) holds whenever we jump from any $W(\vec{x})^{\vec{c}}$ to a neighboring $W(\vec{x}')^{\vec{c}'}$. To see this, we recognize from the definition of $W_l$ that $h(P_i, W_l)$ is simply
\begin{equation}
h(P_i, W_l) = |x(i+1)_l - x(i)_l|.
\end{equation}
In other words, the jump $W(\vec{x}(i))^{\vec{c}(i)} \to W(\vec{x}(i+1))^{\vec{c}(i+1)}$ enters or leaves $W_l$ if and only if the vectors $\vec{x}(i)$ and $\vec{x}(i+1)$ differ in their $l^{\rm th}$ entries.
Similarly,
\begin{equation}
h(P_i, U_r) = |f(\vec{x}(i+1))_r - f(\vec{x}(i))_r|\,,
\end{equation}
that is, in jumping from $W(\vec{x}(i))^{\vec{c}(i)}$ to $W(\vec{x}(i+1))^{\vec{c}(i+1)}$ we enter or leave $U_r$ if and only if the vectors $f(\vec{x}(i))$ and $f(\vec{x}(i+1))$ differ in their $r^{\rm th}$ coordinate.

In this way, inequality~(\ref{finalsimp}) becomes
\begin{equation}
\sum_{l=1}^L \a_l\, |x(i+1)_l - x(i)_l| \ge \sum_{r=1}^R \b_r\, |f(\vec{x}(i+1))_r - f(\vec{x}(i))_r|\,,
\end{equation}
which follows from the contraction property~(\ref{defcontraction}).
\end{proof}

Therefore, we have proved inequality~\er{templateineq} in general, time-dependent settings using the existence of a contraction map.


\se{Example}
\label{examples}

We illustrate the winning strategy defined in the previous section using the following inequality:
\begin{equation}
S(AB) + S(BC) + S(AC) \geq S(AB) + S(ABC) + S(C).
\label{exineq}
\end{equation}
This is simply the strong subadditivity inequality $S(BC) + S(AC) \geq S(ABC) + S(C)$ plus the tautology $S(AB) \geq S(AB)$. 
This inequality showcases several noteworthy features of our proof, but it is sufficiently transparent so the reader will not be distracted by adventitious features of the example. We will use the contraction map $f(\vec{x})$ displayed in Table~\ref{tab:example}. 

\begin{table}
\centering
\begin{tabular}{l ccc p{0.1cm} ccc}
  \hline
  & \multicolumn{3}{c}{$\vec{x}$} & & \multicolumn{3}{c}{$\vec{y} = f(\vec{x})$} \\
  \cline{2-4} \cline{6-8}
      & ~$AB$~ & ~$BC$~ & ~$AC$~ & & ~$AB$~ & ~$ABC$~ & ~$C$~ \\
  $O$~~ & 0 & 0 & 0 & & 0 & 0 & 0 \\
      & 0 & 0 & 1 & & 0 & 1 & 0 \\
      & 0 & 1 & 0 & & 0 & 1 & 0 \\
  $C$ & 0 & 1 & 1 & & 0 & 1 & 1 \\
      & 1 & 0 & 0 & & \framebox{0} & \framebox{1} & 0 \\
  $A$  & 1 & 0 & 1 & & 1 & 1 & 0 \\
  $B$  & 1 & 1 & 0 & & 1 & 1 & 0 \\
      & 1 & 1 & 1 & & 1 & 1 & 1 \\
  \hline
\end{tabular}
\caption{A non-trivial contraction map for a linear combination of strong subadditivity and $S(AB) \geq S(AB)$, inequality~(\ref{exineq}). The $AB$ column in the image of $f(\vec{x})$ differs from the $AB$ column on the domain side of $f$ in the boxed entry. The $ABC$-column of $f(\vec{x})$ differs from the $ABC$-column in Table~\ref{tab:ssa-contraction} (the strong subadditivity contraction map) in the boxed entry.}
\label{tab:example}
\end{table}

\paragraph{Comment 1:} Because we did not cancel the tautology $S(AB) \geq S(AB)$ out of inequality~(\ref{exineq}), region $AB$ has a column on both the domain and the range side of $f(\vec{x})$. The simplest contraction map for (\ref{exineq}) would have set both $AB$-columns equal to one another ($f(\vec{x})_{AB}=x_{AB}$) and copied the entries of the strong subadditivity contraction map (Table~\ref{tab:ssa-contraction}) to the columns $f(\vec{x})_{ABC}$ and $f(\vec{x})_{C}$. The resulting strategy in the coloring game would be to color $ABC$ and $C$ in the same way as in proving strong subadditivity, and to color $AB$ as it was given on the left hand side. Such a strategy effectively cancels $S(AB)$ out of both sides of inequality~(\ref{exineq}). 

\begin{figure}
        \centering
        \includegraphics[width=0.39\textwidth]{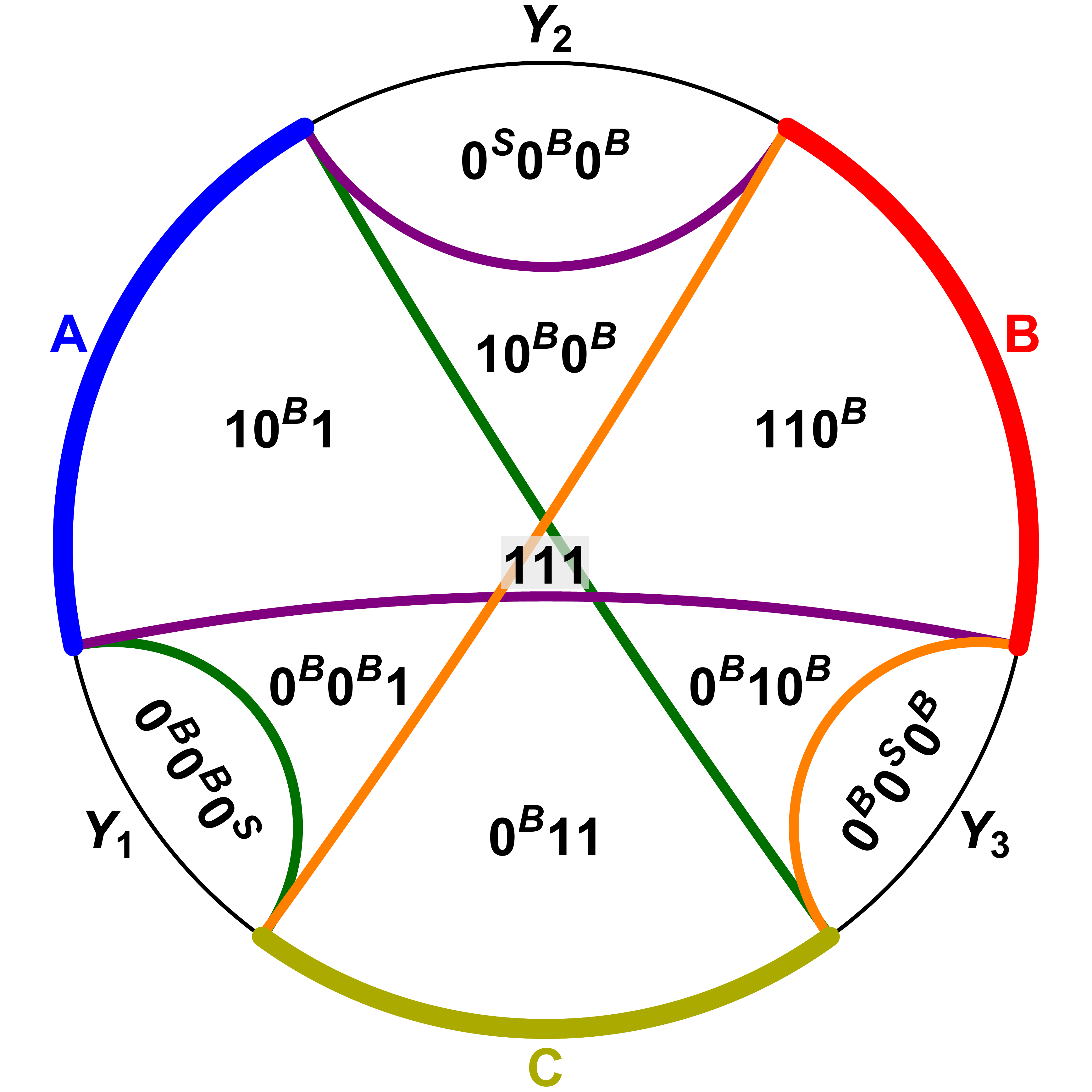}
        \hfill
        \includegraphics[width=0.39\textwidth]{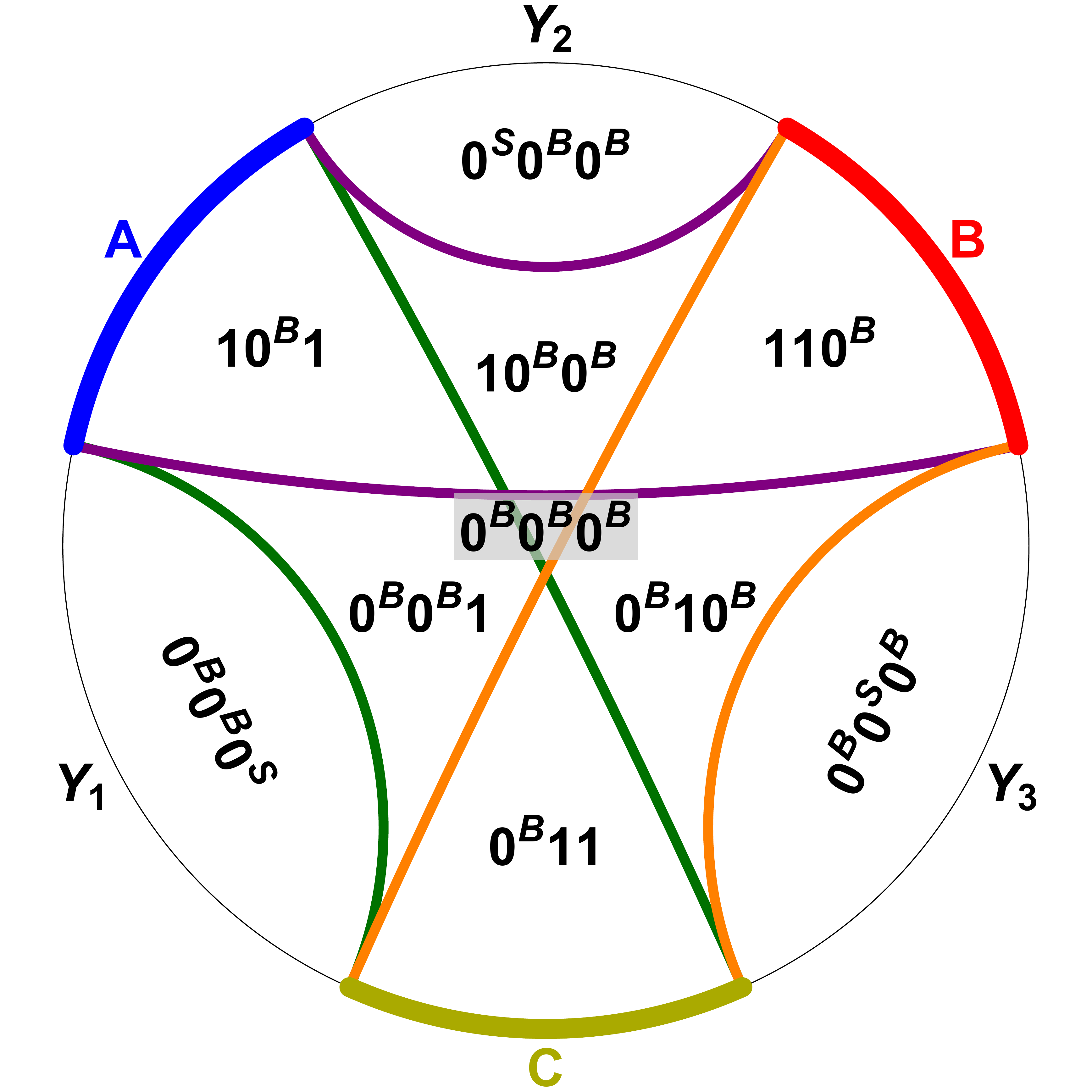}
        \caption{Auxiliary geometries for inequality~(\ref{exineq}): Geometry I (left) and Geometry II (right). The boundaries of region $W_{AB}$ are marked in purple, the boundaries of $W_{AC}$ are marked in green, and the boundaries of $W_{BC}$ are marked in orange. We label the regions $W(\vec{x})^{\vec{c}}$ with their $\vec{x}$-vectors, e.g., $101$ stands for $W_{AB} \cap \overline{W_{BC}} \cap W_{AC}$. Because each $\overline{W_{I_l}}$ is in the disconnected phase, every 0-entry in $\vec{x}$ carries a superscript that labels the color component of $\overline{W_{I_l}}$: B for the `bigger' component and S for the `smaller' component (a single $Y$-interval). No color label is given for 1-entries in $\vec{x}$ because each $W_{I_l}$ is connected (and therefore has a single color). 
        }
        \label{auxgeoms}
\end{figure}

\begin{figure}[h]
        \centering
        \includegraphics[width=0.39\textwidth]{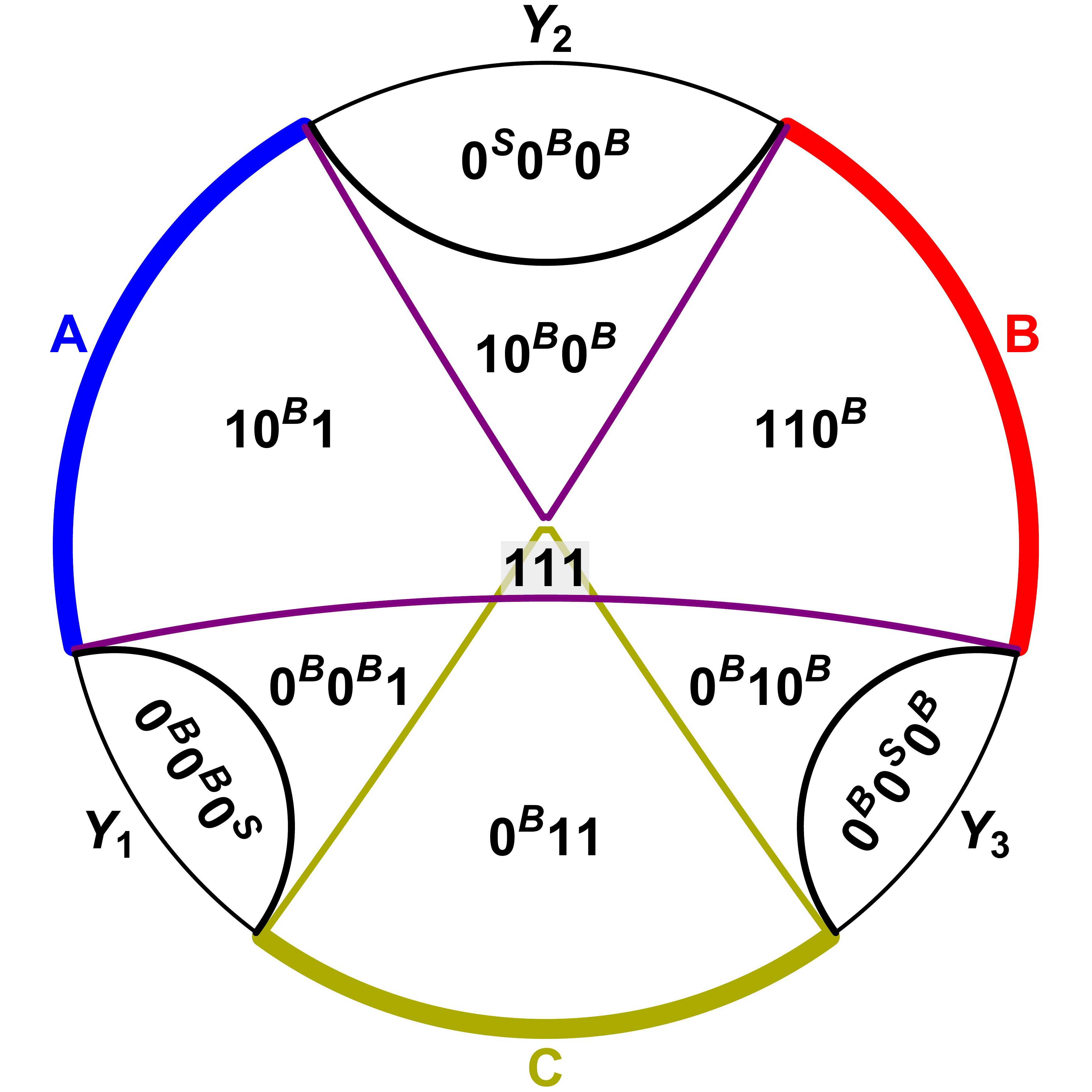}
        \hfill
        \includegraphics[width=0.39\textwidth]{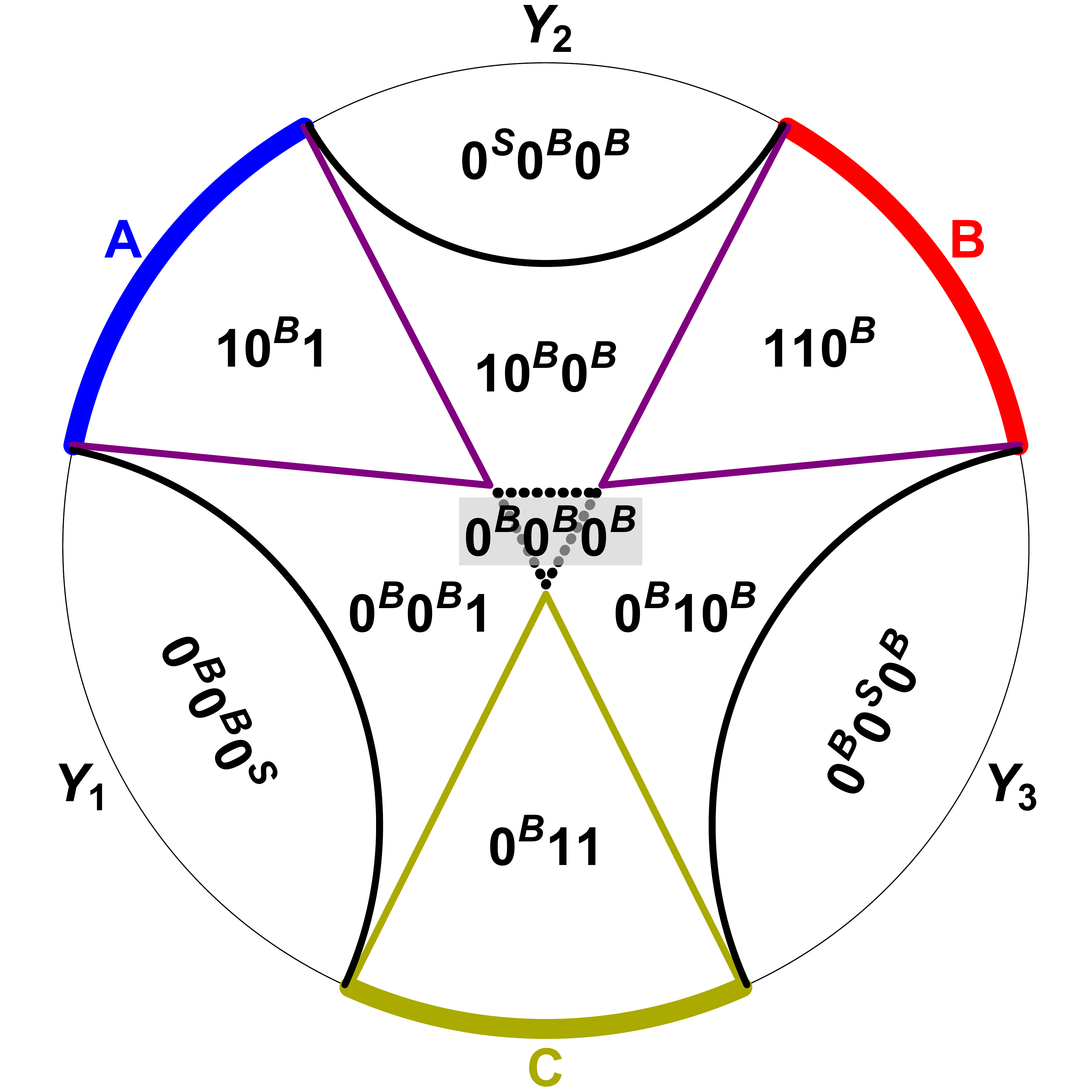}
        \caption{The regions $U_r$ in both auxiliary geometries. The boundaries of $U_{AB}$ are marked in purple, the boundaries of $U_{ABC}$ in black, and the boundaries of $U_C$ in yellow. In Geometry II, $U_{ABC}$ excludes the little central region $0^B0^B0^B$, but this does not affect the coloring of $ABC$; we mark this inconsequential boundary of $U_{ABC}$ in dotted black. Collectively, the boundaries of $U_{AB}$, $U_{ABC}$, and $U_C$ coincide with the boundaries of $W_{AB}$, $W_{BC}$, and $W_{AC}$ from Fig.~\ref{auxgeoms}.}
        \label{auxstatic}
\end{figure}

\paragraph{Comment 2:} Instead, Table~\ref{tab:example} presents a less trivial contraction map, where the strong subadditivity and the tautology `interact.' The fact that the $AB$-column on the range side of $f(\vec{x})$ differs from the $AB$-column on the domain side means that the player can deliberately choose a `suboptimal' coloring for $AB$ and still win the game.

\bigskip

Consider an instance of (\ref{exineq}) where $A, B, C$ are intervals on a circle and $AB, BC, AC$ are each in connected phases. Before using Table~\ref{tab:example} to find a winning strategy, we need to choose an auxiliary geometry. There are two distinct choices, which we call Geometry~I and Geometry~II; see Fig.~\ref{auxgeoms}.

Next, we use Table~\ref{tab:example} to find the regions $U_r$ defined in eq.~(\ref{defujaux}). 
We have drawn an example in Fig.~\ref{auxstatic}. For example, for $U_{AB}$ we have
\begin{align}
U_{AB} 
& = 
(W_{AB} \cap \overline{W_{BC}} \cap W_{AC}) \cup 
(W_{AB} \cap {W_{BC}} \cap \overline{W_{AC}}) \cup 
(W_{AB} \cap {W_{BC}} \cap W_{AC}) 
\nonumber \\
& = W((1,0,1)) \cup W((1,1,0)) \cup W((1,1,1)).
\end{align}

\paragraph{Comment 3:} In Geometry~I, $U_{AB}$ is connected via a bridge region $W((1,1,1))$ so the winning strategy sets $AB$ to the connected phase. But in Geometry~II, the bridge $W((1,1,1))$ is missing and $U_{AB}$ is disconnected. Working with this auxiliary geometry, the player will color $A$ and $B$ in $AB$ with different colors and still win the game---even though she knows the minimal surface for $AB$ is the connected one (as the coloring of $AB$ on the left hand side of the inequality indicates). In the contraction map, this deliberately suboptimal gambit originates from choosing the $AB$-column in $f(\vec{x})$ different from the $AB$-column on the $\vec{x}$ side of the table. 

Importantly, the winning strategies derived from the two auxiliary geometries are different! This illustrates that the winning strategy depends not only on the contraction map $f$, but also on the auxiliary geometry chosen.

\paragraph{Comment 4:} Imagine that our auxiliary geometry (either one) is the spatial slice of a static bulk spacetime. In such a case, drawing the boundaries of the regions $U_r$ on the geometry makes the inequality manifestly true; see Fig.~\ref{auxstatic}. This is because each geodesic segment appears the same\footnote{\ldots or greater, though this does not happen in the present case.} number of times on the left hand side (Fig.~\ref{auxgeoms}) as on the right hand side (Fig.~\ref{auxstatic}). This is the essence of the static proof of Ref.~\cite{hec}. 

In our proof, however, Fig.~\ref{auxstatic} functions only as an auxiliary geometry and it would be sloppy to conclude that inequality~(\ref{exineq}) holds simply by beholding the two figures. Instead, we must interpret the multiplicities of the geodesic segments in Fig.~\ref{auxstatic} as contributions, via eq.~(\ref{pstepwise}), to quantities $h(P, W_l)$ and $h(P, U_r)$ defined in eq.~(\ref{colorfromregion}). 

\bigskip

Before discussing how this works, we remind the reader that in eq.~(\ref{wxcolor}) we have subdivided the regions $W(\vec{x})$ of the auxiliary geometry into color-wise components. Since $W_{AB}$, $W_{BC}$, and $W_{AC}$ are all connected, they only comprise a single color; we omit it in our labeling of Figs.~\ref{auxgeoms} and \ref{auxstatic}. But for their complements, $\overline{W_{AB}}$, $\overline{W_{BC}}$, and $\overline{W_{AC}}$, each comprises two colors, which we label B (bigger) and S (smaller, i.e., a single $Y$-interval). Two examples of colored subregions $W(\vec{x})^{\vec{c}}$ are $W((1,0,1))^{(-,B,-)}$ and $W((0,0,1))^{(B,B,-)}$, which in Figs.~\ref{auxgeoms} and \ref{auxstatic} are labeled $10^{\rm B}1$ and $0^{\rm B} 0^{\rm B}1$.

To examine the logic of our proof, let us look at the single-step path $P_i$ in the auxiliary geometry, which jumps from $W((1,0,1))^{(-,B,-)}$ to $W((0,0,1))^{(B,B,-)}$. We have encountered such single-step paths in eqs.~(\ref{pstepwise}) and (\ref{finalsimp}) in Sec.~\ref{winproof}. The geodesic segment that separates $W((1,0,1))^{(-,B,-)}$ from $W((0,0,1))^{(B,B,-)}$ in the auxiliary geometry is part of the boundary of $W_{AB}$, but not part of the boundary of $W_{BC}$ or $W_{AC}$:
\begin{align}
h(P_i, W_{AB}) & = 1 = \big|(1,0,1) - (0,0,1)\big|_{AB}, \\
h(P_i, W_{BC}) & = 0 = \big|(1,0,1) - (0,0,1)\big|_{BC}, \\
h(P_i, W_{AC}) & = 0 = \big|(1,0,1) - (0,0,1)\big|_{AC}.
\end{align}
We can observe this in Fig.~\ref{auxgeoms}; in Table~\ref{tab:example}, this fact is encoded in the component-wise differences between the $\vec{x}$-vectors $(1,0,1)$ and $(0,0,1)$. For the right-hand-side terms of the inequality, we similarly have:
\begin{align}
h(P_i, U_{AB}) & = 1 = \big|f((1,0,1)) - f((0,0,1))\big|_{AB}, \\
h(P_i, U_{ABC}) & = 0 = \big|f((1,0,1)) - f((0,0,1))\big|_{ABC}, \\
h(P_i, U_{C}) & = 0 = \big|f((1,0,1)) - f((0,0,1))\big|_{C}
\end{align}
These quantities are encoded in the component-wise differences between the respective $f(\vec{x})$ vectors in the contraction map. In Fig.~\ref{auxstatic}, they encode whether or not our geodesic segment is part of the boundary of $U_{AB}$ (respectively $U_{ABC}$ and $U_C$). The number of times the said geodesic segment is swept on Fig.~\ref{auxgeoms} minus the number of times it is swept on Fig.~\ref{auxstatic} is
\begin{equation}
\label{exverification}
\sum_{l=1}^L \alpha_l\, h(P_i, W_l) - \sum_{r=1}^R \beta_r\, h(P_i, U_r),
\end{equation}
which is non-negative by the contraction property. Because the above facts can be read off from Table~\ref{tab:example}, they hold irrespective of which auxiliary geometry we choose.

In Lemma~\ref{ineqks} of Sec.~\ref{ineqoverlap} we established that inequality~(\ref{exineq}) will follow if the overlap functions on kinematic space can be shown to satisfy
\begin{equation}
h_{LHS}(Z_j, Z_k) \equiv \sum_{l=1}^L \alpha_l\, h_l(Z_j, Z_k) \geq 
\sum_{r=1}^R \beta_r\, h_r(Z_j, Z_k) \equiv  h_{RHS}(Z_j, Z_k)
\end{equation}
for all $(Z_j, Z_k) \in \mathcal{K}$. In Lemma~\ref{shortpaths} of Sec.~\ref{winproof}, we saw that for a given $(Z_j, Z_k) \in \mathcal{K}$ a path $P$ in the auxiliary geometry can be found such that 
\begin{eqnarray}
h(P, W_l) = & h_l(Z_j, Z_k) & \quad \textrm{for all}~1 \leq l \leq L \qquad {\rm and} 
\label{hpwlhlzjzk} \\
h(P, U_r) \geq & h_r(Z_j, Z_k) & \quad \textrm{for all}~1 \leq r \leq R.
\end{eqnarray}
This reduced the proof to showing
\begin{equation}
\sum_{l=1}^L \alpha_l\, h(P, W_l) - \sum_{r=1}^R \beta_r\, h(P, U_r) \geq 0
\end{equation}
for some collection of paths $P$. Observing that each such path consists of single-jump paths $P_i$ further reduced the proof to verifying
\begin{equation}
\sum_i \left(\sum_{l=1}^L \alpha_l\, h(P_i, W_l) - \sum_{r=1}^R \beta_r\, h(P_i, U_r)\right) \geq 0.
\label{sumhpiwl}
\end{equation}
The quantity in parentheses is non-negative by the contraction property. This is exactly what we verified in (\ref{exverification}) for the path $P_i$, which jumps from $W((1,0,1))^{(-,B,-)}$ to $W((0,0,1))^{(B,B,-)}$.

\begin{figure}[t]
        \centering
        \includegraphics[width=0.39\textwidth]{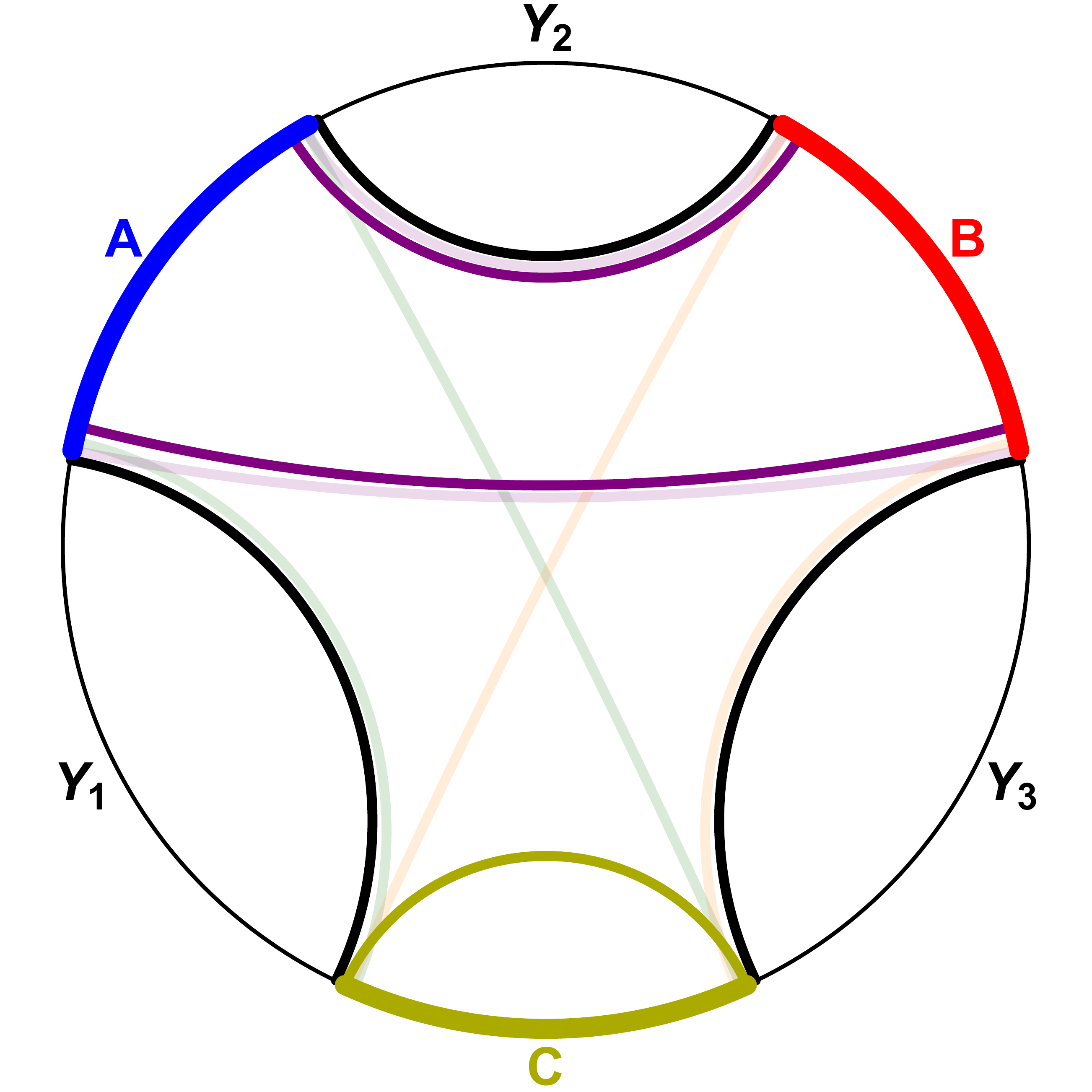}
        \hfill
        \includegraphics[width=0.39\textwidth]{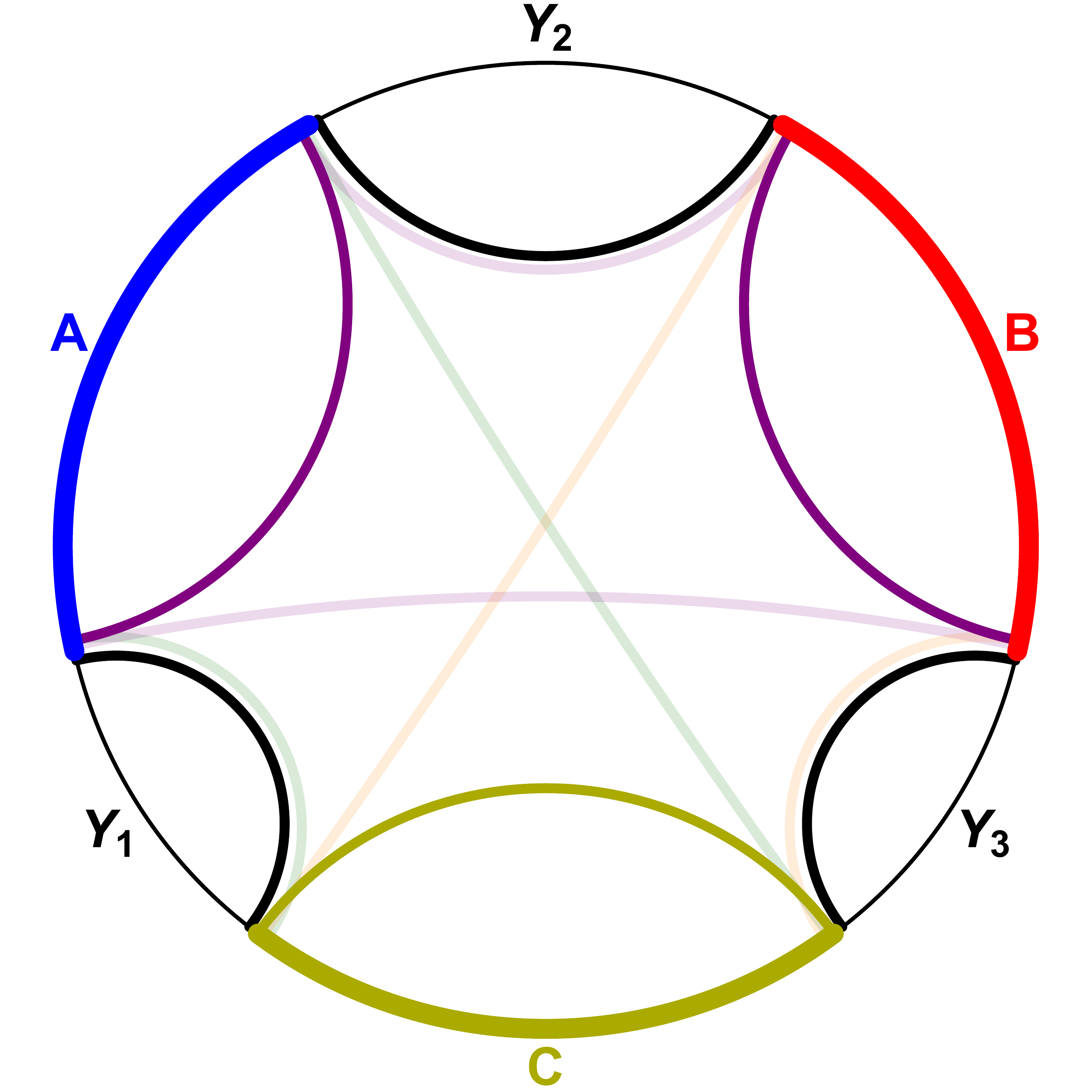}
        \caption{The winning strategies that arise from auxiliary Geometry~I (left) and Geometry~II (right). We superpose them here on top of Geometry~II and Geometry~I, respectively (see Comment~5 for explanation). The color scheme is the same as in Fig.~\ref{auxstatic}.}
        \label{auxfunny}
\end{figure}

\paragraph{Comment 5:} Some readers may still feel puzzled by the freedom to choose an auxiliary geometry. This freedom makes it look like the time-dependent problem is easier than the static problem: not only does the former reduce to the latter but, moreover, the former reduces to an arbitrarily chosen instance of the latter.

To see how this works, let us exercise our freedom to choose an auxiliary geometry in the most contrarian fashion. Suppose that Geometry~I is the physical, spatial slice of a static spacetime, but choose Geometry~II as the auxiliary geometry. We will also look at the opposite situation. We display the winning choices of right hand side geodesics in Fig.~\ref{auxfunny}. 

In contrast to Comment~4, we now cannot verify inequality~(\ref{exineq}) simply by comparing Fig.~\ref{auxfunny} with Fig.~\ref{auxgeoms} because the geodesic segments that appear in them do not coincide. Instead, we must go to Lemma~\ref{ineqoverlap} and verify the equivalent inequality
\begin{equation}
\sum_{l=1}^L \alpha_l\, h_l(Z_j, Z_k) \geq 
\sum_{r=1}^R \beta_r\, h_r(Z_j, Z_k) 
\quad \textrm{for all}~(Z_j, Z_k) \in \mathcal{K}\,.
\label{ourineqagain}
\end{equation}
Of course, we know it will hold by following the argument between eqs.~(\ref{hpwlhlzjzk}) and (\ref{sumhpiwl}) in the auxiliary geometry. In that line of reasoning, drawing Fig.~\ref{auxfunny} is superfluous. In order to develop some spacetime intuition, however, we may read off $h_l(Z_j, Z_k)$ from Fig.~\ref{auxgeoms} and $h_r(Z_j, Z_k)$ from Fig.~\ref{auxfunny}. This is an exercise in counting how many geodesics that participate in the left hand side (Fig.~\ref{auxgeoms}) versus the right hand side (Fig.~\ref{auxfunny}) of inequality~(\ref{exineq}) are crossed by a straight line from $Z_j$ to $Z_k$. 

As an example, we write down the quantities $h_l(Y_2, C)$ and $h_r(Y_2, C)$ in Table~\ref{statichls}. Not surprisingly, inequality~(\ref{ourineqagain}) holds for $(Y_2, C) \in \mathcal{K}$, though to explain why this had to be true we would have to abandon Fig.~\ref{auxfunny} and go back to the correct auxiliary geometry. The exercise works out very similarly for other elements of $\mathcal{K}$.

\begin{table}
\centering
\begin{tabular}{p{1.6cm} r p{0.1cm} c p{0.1cm} c}
  \cline{2-6}
  & & & auxiliary Geometry~I  & & auxiliary Geometry~II \\
  \cline{4-4} \cline{6-6}
  \\[-1em]
  &  $h_{AB(l)}(Y_2, C)$ & & 2 & & 2 \\
  & $h_{BC}(Y_2, C)$ & & 1 & & 1 \\
  & $h_{AC}(Y_2, C)$ & & 1 & & 1 \\
    \\[-1em]
    \cline{2-2} \cline{4-4} \cline{6-6}
   \\[-1em]
  &  $h_{AB(r)}(Y_2, C)$ & & 2 & & 0 \\
  & $h_{ABC}(Y_2, C)$ & & 1 & & 1 \\
  & $h_{C}(Y_2, C)$ & & 1 & & 1 \\
    \\[-1em]
    \cline{1-2} \cline{4-4} \cline{6-6}
    \\[-1em]
  \multicolumn{2}{c}{$\sum_{l=1}^L \alpha_l\, h_l -
\sum_{r=1}^R \beta_r\, h_r$} & & 0 & & 2 \\
    \\[-1em]
  \hline
\end{tabular}
\caption{The quantities $h_l(Y_2, C)$ and $h_r(Y_2, C)$ derived from using either auxiliary geometry.}
\label{statichls}
\end{table}

\paragraph{Comment 6:} 
In compiling Table~\ref{statichls}, we never used the assumed staticity of the bulk spacetime and the geometry of its spatial slices. This assumption turned out to be irrelevant. Our proof happens entirely in kinematic space; it is oblivious of the metric properties of the bulk such as whether or not it is static.

\bigskip

We invite the reader to explore other examples---varying the inequalities, the compositions of regions $A, B, C, \ldots$, choices of phases on the left hand side and, finally, choices of auxiliary geometries.

\se{Multiple asymptotic boundaries and horizons}
\label{nonpure}
Up to now we have focused on cases where the CFT is in a pure state on a circle or line. However, our results hold more generally, including cases where the two-dimensional CFT is in a mixed state and/or lives on a disconnected spatial manifold. 

For a mixed state $\rho$, our reasoning carries over so long as the state can be purified:
\begin{equation}
\rho = {\rm tr}_Q |\Psi \rangle \langle \Psi|
\label{purify}
\end{equation}
such that the HRT formula holds in $|\Psi\rangle$. (We denote the purifying region with $Q$ in order to distinguish it from $O$, the $(n+1)^{\rm st}$ region complementary to the $n$ named regions.) One such purification glues in the bulk the entanglement wedge dual to $\rho$ \cite{dualofrho, subregionduality} with its CPT conjugate taken about the HRT surface where the entanglement wedge ends. This way of constructing a global spacetime dual to a pure state was employed for example in \cite{nettaaron,Dutta:2019gen}.

What remains is to verify that our proof extends to holographic pure states on disconnected spatial manifolds. Recall that such states are dual to bulk spacetimes with multiple asymptotic boundaries, which may be connected or disconnected. In the latter case, we would carry out our proof connected component by connected component. Therefore, in what follows, we will assume that the bulk geometry is connected, i.e. that it represents a multi-boundary wormhole with nontrivial topology. A key challenge in extending our proof to this setup is to enforce the homology condition for the HRT surfaces of various regions. We will do this by lifting the problem from the bulk geometry $\mathcal{B}$ to its universal cover.

\subsection{Proof in multi-boundary wormholes via the universal cover}

We will denote the universal cover of $\mathcal{B}$ with $\tilde{\mathcal{B}}$. Recall that $\mathcal{B} = \tilde{\mathcal{B}} / \pi_1(\mathcal{B})$, i.e., the bulk geometry is the quotient of its universal cover by its fundamental group.  
By definition, $\tilde{\mathcal{B}}$ is topologically trivial so the techniques of our proof of Sec.~\ref{strategy} should carry over to the universal cover. Only one step requires further justification and we comment on it presently. 

We shall lift every ingredient of the problem in $\mathcal{B}$ to $\pi_1(\mathcal{B})$-invariant objects in $\tilde{\mathcal{B}}$. Every geodesic in $\mathcal{B}$ is lifted to a $\pi_1(\mathcal{B})$-invariant set of geodesics in $\tilde{\mathcal{B}}$. (Note that the whole set is invariant, but it consists of geodesics that are not necessarily individually $\pi_1(\mathcal{B})$-invariant.) Furthermore, every boundary region $I_l \subset \mathcal{B}$ lifts to a $\pi_1(\mathcal{B})$-invariant region $\tilde{I_l} \subset \tilde{\mathcal{B}}$. Observe that if a collection of geodesics in $\mathcal{B}$ is homologous to $I_l$ then their lift in $\tilde{\mathcal{B}}$ is homologous to $\tilde{I_l}$. Roughly speaking, our tactic will be to prove inequality~(\ref{templateineq}) by proving
\begin{equation}
\sum_{l=1}^L \alpha_l S(\tilde{I_l}) 
\geq 
\sum_{r=1}^R \beta_r S(\tilde{J_r})\,.
\label{templateineq3}
\end{equation}

The subtlety alluded to earlier is that our proof relies on the strong subadditivity property of geodesics that subtend contiguous intervals. This is most evident in eq.~(\ref{ineqisssa}). The satisfaction of a similar inequality by geodesics in $\tilde{\mathcal{B}}$  cannot, on the face of it, be directly inferred from strong subadditivity in $\mathcal{B}$ because a (minimal) geodesic in $\tilde{\mathcal{B}}$ may well be a lift of a non-minimal geodesic in $\mathcal{B}$, i.e., one that does not compute an entanglement entropy in the CFT on the asymptotic boundary of $\mathcal{B}$. It turns out that this is in fact not a problem. In Sec.~\ref{strategy} we could assume that the strong subadditivity statement holds for minimal geodesics in $\mathcal{B}$ because Ref.~\cite{maximin} derived it from the null curvature condition in $\mathcal{B}$. But if the null curvature condition holds in $\mathcal{B}$ then it also holds in $\tilde{\mathcal{B}}$ and the derivation of \cite{maximin} can be carried out directly in $\tilde{\mathcal{B}}$. This will produce a strong subadditivity type statement for lengths of all geodesics in $\tilde{\mathcal{B}}$, regardless of their minimality in $\mathcal{B}$. 

As a first step in the proof, consider a collection of geodesics in $\mathcal{B}$, which together satisfy the homology conditions for regions $I_l$ on the left hand side of inequality~(\ref{templateineq}). This choice represents a phase of the left hand side and, equivalently, one possible coloring of the $I_l$s. We then lift this collection of geodesics to $\tilde{\mathcal{B}}$ to obtain a phase of the left hand side of (\ref{templateineq3}). The coloring of the $\tilde{I_l}$s  obtained in this way is not any generic coloring; it is a $\pi_1(\mathcal{B})$-invariant coloring defined below.

\paragraph{Definition} A coloring of $\td I$ is $\pi_1(\mathcal{B})$-invariant when the following holds: for any two basic intervals $Z_j, Z_k \subset \td I$ that have the same color and for every $g\in \pi_1(\mathcal{B}),$ the two basic intervals $g(Z_j)$, $g(Z_k)$ also have the same color.

It is straightforward to show that (as we asserted above) the lift $I_l \to \tilde{I_l}$ lifts a coloring of $I_l$ to a $\pi_1(\mathcal{B})$-invariant coloring of $\tilde{I_l}$. Observe that the converse is also true: a $\pi_1(\mathcal{B})$-invariant coloring of $\tilde{J_r}$ in $\tilde{\mathcal{B}}$ descends to a well-defined coloring of $J_r$ in $\mathcal{B}$.

\bigskip
With these preliminaries, a proof of inequality~(\ref{templateineq}) from inequality~(\ref{templateineq3}) reduces to demonstrating the following claim: For every $\pi_1(\mathcal{B})$-invariant coloring of the regions $\tilde{I_l} \subset \tilde{\mathcal{B}}$, we can find a $\pi_1(\mathcal{B})$-invariant coloring of the $\tilde J_r$s whose total geodesic length is no greater. In effect, we must prove inequality~(\ref{templateineq3}) with the extra proviso that the $S(\tilde{I_l})$s and $S(\tilde{J_r})$s are given by colorings that are $\pi_1(\mathcal{B})$-invariant. A proof of (\ref{templateineq3}) with this extra stipulation suffices because a $\pi_1(\mathcal{B})$-invariant coloring of the $\tilde{J_r}$s descends to a coloring of $J_r$s and, as a consequence, inequality~(\ref{templateineq3}) in $\tilde{\mathcal{B}}$ is nothing but $|\pi_1(\mathcal{B})|$ copies of the original inequality~(\ref{templateineq}) in $\mathcal{B}$. 

Working in the topologically trivial $\tilde{\mathcal{B}}$, we proceed with proving (\ref{templateineq3}) just as we did in Sec.~\ref{strategy}. The only place where our proof allowed an arbitrary choice was the selection of the auxiliary geometry. In order to obtain a $\pi_1(\mathcal{B})$-invariant winning coloring of the $\tilde{J_r}$s, we choose a $\pi_1(\mathcal{B})$-invariant auxiliary geometry. 

To confirm the $\pi_1(\mathcal{B})$-invariance of the winning strategy, observe that the regions $W_l$ in the auxiliary geometry of $\tilde{\mathcal{B}}$ are themselves $\pi_1(\mathcal{B})$-invariant, and so are their complements. Therefore, according to eqs.~\er{defwx} and~\er{defujaux}, $W(\vec x)$ and $U_r$ are also $\pi_1(\mathcal{B})$-invariant regions in the auxiliary geometry. Recall that two basic intervals $Z_j, Z_k \subset \td J_r$ have the same color in $\td J_r$ if and only if they belong to the same connected component of $U_r$. Since $U_r$ is $\pi_1(\mathcal{B})$-invariant, if $Z_j, Z_k \subset \td J_r$ have the same color then so must $g(Z_j)$ and $g(Z_k)$, for every $g \in \pi_1(\mathcal{B})$. This confirms that $U_r$ induces a $\pi_1(\mathcal{B})$-invariant coloring of $\tilde{J_r}$ and completes our proof. 

\subsection{Example: BTZ black hole}

As an illustration of the concepts used above, imagine proving an inequality~(\ref{templateineq}) in the one-sided non-rotating BTZ geometry, which is dual to a thermal state on a circle. Since the thermal state is mixed, as a first step we will purify the state. Gluing the single-sided BTZ geometry to its CPT conjugate in the bulk produces the two-sided BTZ black hole, which will be our geometry $\mathcal{B}$. In the CFT language, we are applying here the channel-state duality to obtain the thermofield double state $|\Psi\rangle$. (In the matrix notation, we are mapping rows of the density matrix $\rho$ to column vectors in another copy of the Hilbert space.) This second Hilbert space, which we denoted $Q$ in equation~(\ref{purify}), lives on the second asymptotic boundary of $\mathcal{B}$.  

\begin{figure}
        \centering
        \raisebox{0.08\textwidth}{\includegraphics[width=0.25\textwidth]{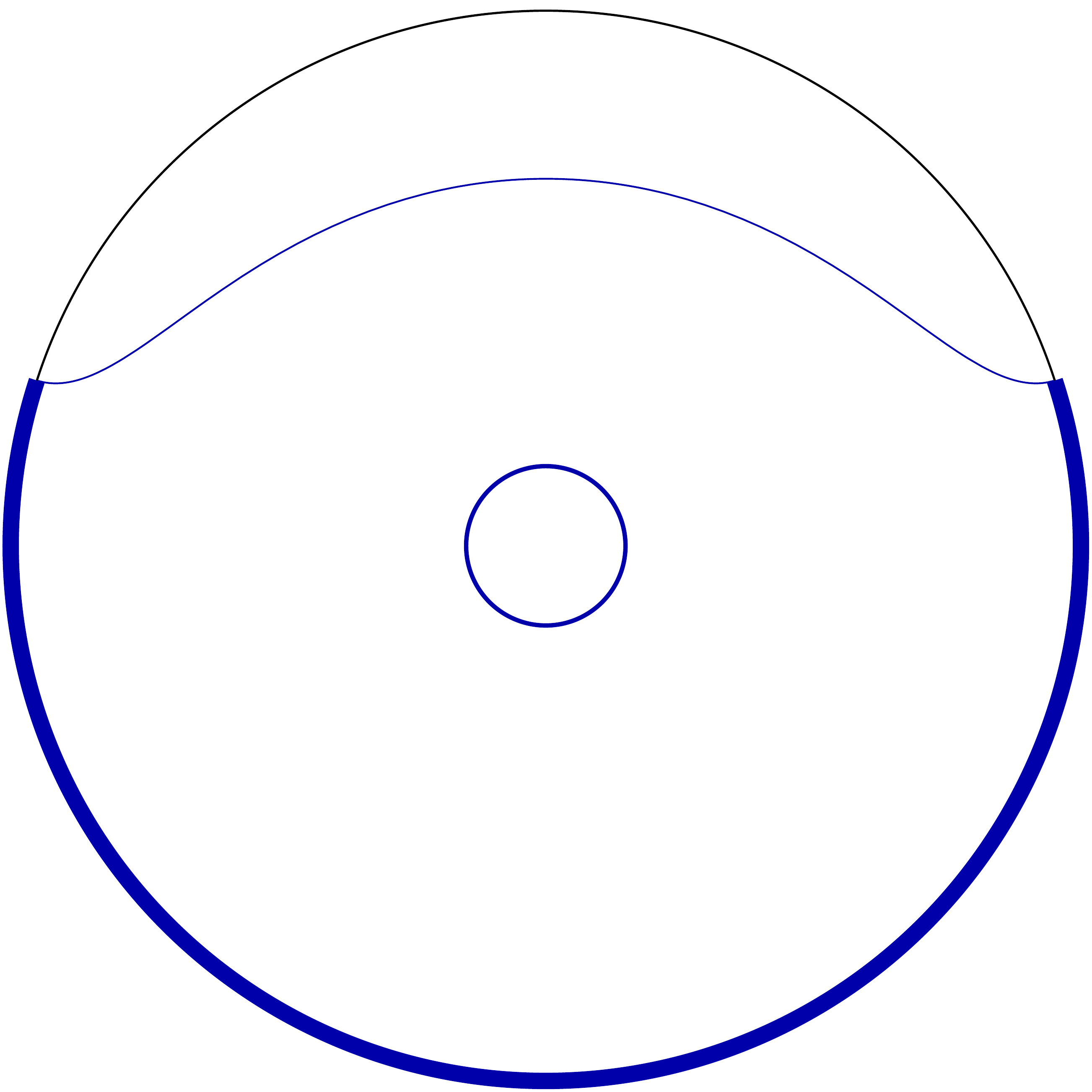}}
     \hfill
        \raisebox{0.05\textwidth}{\includegraphics[width=0.27\textwidth]{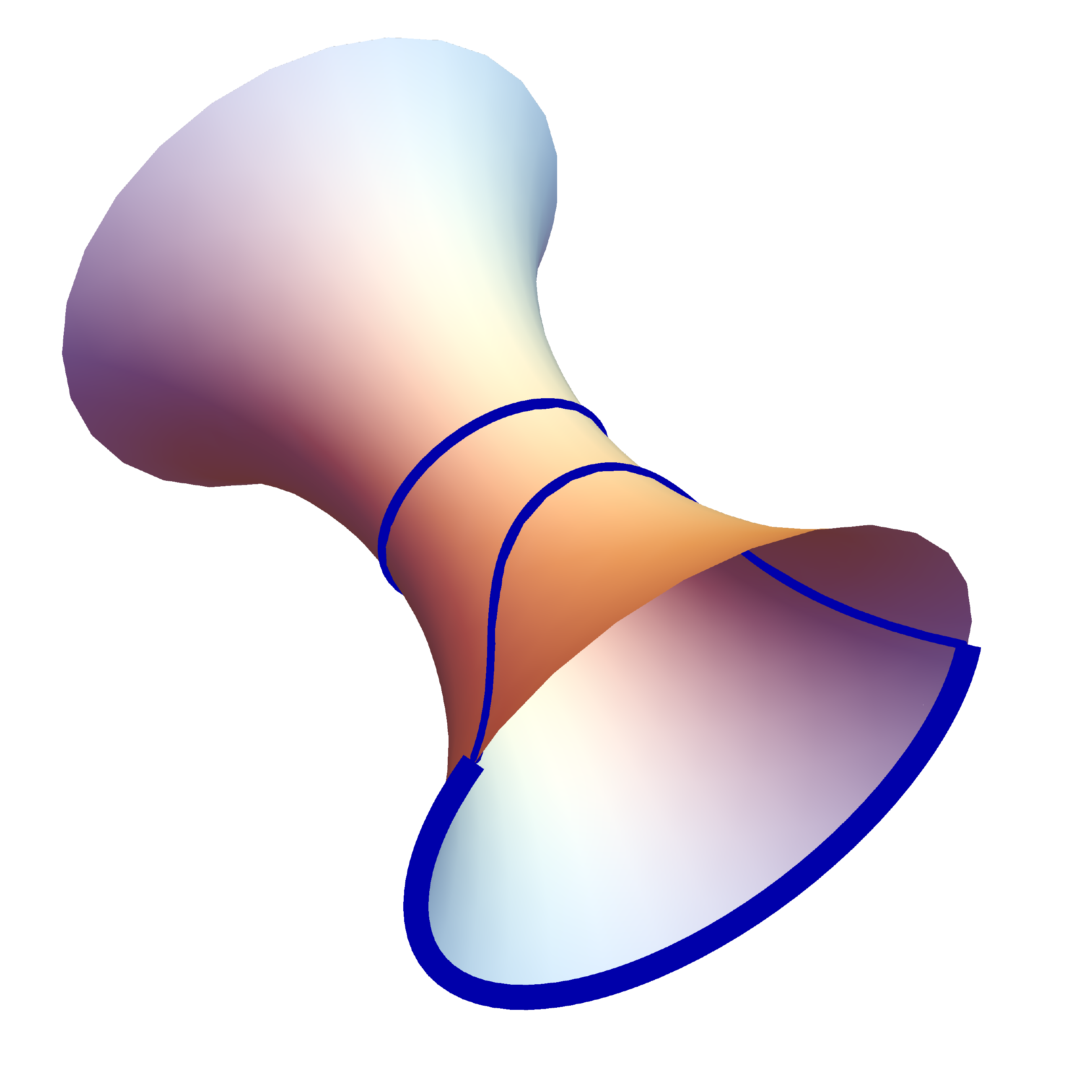}}
        \hfill
        \includegraphics[width=0.40\textwidth]{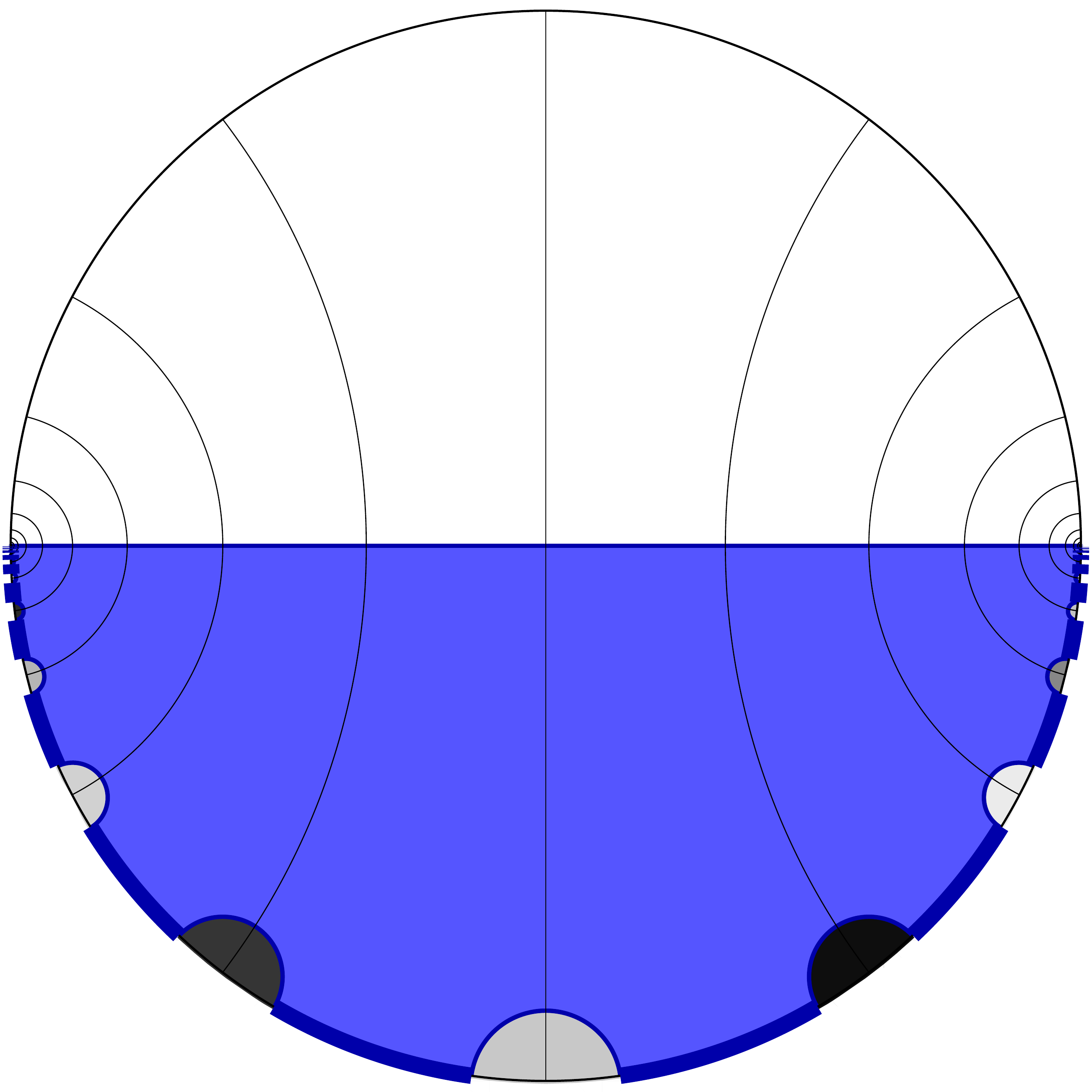} 
   \caption{Left: the one-sided BTZ geometry. There are two types of intervals, whose HRT surfaces either consist of a single geodesic (`small' intervals, here non-highlighted) or include the black hole horizon (`large' intervals, highlighted). In our proof, we first purify the state by gluing it with its CPT conjugate (the two-sided BTZ geometry, center panel) and then go to its universal cover (a spatial slice of AdS$_3$, right panel). The images of a small interval end up in the disconnected phase (colored with diverse shades of gray); the images of a large interval are in the connected phase, one of whose geodesics is the concatenation of the images of the black hole horizon.}
   \label{bhex}
\end{figure}

As we illustrate in Fig.~\ref{bhex}, intervals on the boundary of $\mathcal{B}$ come in two varieties. Their holographic entanglement entropies are realized either by a geodesic that is individually homologous to the interval or by a geodesic that wraps around the black hole horizon plus the horizon itself. Although the physically realized phase of the entanglement entropy is decided both by the size of the interval and by the relative boost between the interval and the static slicing, we shall refer to the two classes of intervals simply as `small' and `large.' 

While geometry $\mathcal{B}$ is dual to a pure state, it is not topologically trivial; its fundamental group is $\pi_1(\mathcal{B}) \cong \mathbb{Z}$. We will therefore go to its universal cover $\tilde{\mathcal{B}}$, which is the AdS$_3$ geometry. (In this case $\pi_1(\mathcal{B})$ is a subgroup of the group of continuous isometries of $\tilde{\mathcal{B}}$, but this is an artifact of the BTZ example.) 
Let us inspect how the small and large intervals, as well as their bulk extremal surfaces, lift to $\tilde{\mathcal{B}}$.

Each type of interval lifts to a discretely infinite family of intervals, which are labeled by elements of $\pi_1(\mathcal{B}) \cong \mathbb{Z}$. The difference between small and large intervals comes at the level of their coloring. For a small interval $Z$, each image $\tilde Z$ in $\tilde{\mathcal{B}}$ comes in its own color and the $\tilde{Z}$s are in the totally disconnected phase. Such a coloring is $\pi_1(\mathcal{B})$-invariant; the invariance of a coloring only stipulates that $Z_j \sim Z_k \Rightarrow g(Z_j) \sim g(Z_k)$ for all $g \in \pi_1(\mathcal{B})$ but not $g(Z_j) \sim h(Z_k)$ for all $g, h \in \pi_1(\mathcal{B})$.

In contrast, all images $\tilde{Z'}$ of a large interval $Z'$ end up in the same color in $\tilde{\mathcal{B}}$, i.e. the $\tilde{Z'}$s are in the totally connected phase. The pieces of the extremal surfaces in $S(Z')$ that run alongside the black hole horizon in $\mathcal{B}$ come together in $\tilde{\mathcal{B}}$ to form a single geodesic, which connects the two accumulation points of the infinitely many images of the $Z'$s in $\tilde{\mathcal{B}}$. 

As promised, the $\tilde{\mathcal{B}}$ lifts of the HRT surfaces are homologous to the lifts of the intervals. Therefore, we can prove inequality~(\ref{templateineq3}) by following the argument of Sec.~\ref{strategy}. This inequality will be a countably infinite multiple of the original inequality~(\ref{templateineq}) in $\mathcal{B}$, as should be evident from the right panel of Fig.~\ref{bhex}.


\se{Discussion}
\label{discussion}
Let us step back and ask why our proof works---and why it does not obviously generalize to higher dimensions. As the time-dependent problem mandates, we have not referenced any particular slice of the bulk geometry. Furthermore, by working with the discretized version of the kinematic space that bins together entire intervals (Fig.~\ref{fig:kspace} in Sec.~\ref{sec:ks}), we have essentially stripped the problem off geometric data and reduced it to a topological one. Indeed, the intersection numbers that we manipulated are topological quantities, which originate from the ordering of intervals on a circle or line. Because this structure is special to two boundary dimensions, our proof is unlikely to extend to higher-dimensional cases in its current form.

We can make this point a little more explicit.
To prove inequalities~(\ref{templateineq}), we equate the difference between the left hand side and the right hand side with a combination of conditional mutual information quantities. (This is most clearly visible from eq.~(\ref{ineqisssa}).) Evidently, in two boundary dimensions, the strong subadditivity of contiguous intervals $A, B, C$
\begin{equation}
I(A, C | B) = S(AB) + S(BC) - S(ABC) - S(B) \geq 0
\label{ssacont}
\end{equation}
determines the entire holographic entropy cone. In the CFT computation of entanglement entropies \cite{tomsentanglement},  inequality~(\ref{ssacont}) selects the channel in which the identity conformal block dominates the correlation function of four twist operators. In this language, repeated applications of strong subadditivity select a dominant channel in the calculation of a higher-point correlation function of twist operators. This is a combinatorial problem, which our proof solves by topological means. In higher dimensions, where entanglement entropies are not computed by correlation functions of local operators on an ordered space, the problem is not combinatorial in nature and it is unlikely to have a topological solution.

It is generally expected that the facets of the holographic entropy cone should correspond to special entanglement structures that afford holographic interpretations. From Refs.~\cite{bitthreadmonogamy, seperstab} we know that one such ingredient is the perfect tensor entanglement, which is associated with the monogamy of mutual information. The other holographic entropy inequalities should reveal further atomic ingredients from which holographic spacetimes are built \cite{entropyrelations}. This suggests that a general proof of the holographic entropy cone with time dependence may require a novel approach that is aware of these other entanglement structures without using special features of three bulk dimensions.

\section*{Acknowledgments}
We thank Ning Bao, Sa{\v s}o Grozdanov, Daniel Harlow, Sergio Hern\'andez Cuenca, Veronika Hubeny, Donald Marolf, Hirosi Ooguri, Mukund Rangamani, Massimiliano Rota, Wilke van der Schee, Bogdan Stoica, Gabriel Trevi{\~n}o Ver{\'a}stegui, Aron Wall, Michael Walter, and Wayne Weng for useful discussions.  BC is supported by the Young Thousand Talents Program and by a Start-up Fund from Tsinghua University. BC thanks the Institute for Advanced Study (Princeton), the University of Pennsylvania, MIT, the Weizmann Institute, Fudan University, and AEI Potsdam, where part of this work was completed. XD is supported in part by the National Science Foundation under Grant No.\ PHY-1820908 and by funds from the University of California.  XD is grateful to the Institute for Advanced Study and the KITP where part of this work was developed.  The KITP was supported in part by the National Science Foundation under Grant No.\ PHY-1748958.

\appendix
\se{Minimal order within a color respects the spatial ordering}
\label{apporder}

Let $X_i$ be a collection of intervals whose indices $i$ respect the spatial ordering on a line or circle (the CFT spatial slice). We claim that, in the notation of eq.~(\ref{phasescycles}), the total length of geodesics in the phase $(X_1 X_2 \cdots X_k)$ \emph{in a pure state} is no greater than the total length in any putative phase $(X_{\sigma_1} X_{\sigma_2} \cdots X_{\sigma_k})$, where $\sigma$ is a permutation of $k$ elements. Therefore, the latter phase can never be the one that determines the entanglement entropy of any CFT region in the HRT proposal (unless $\s$ is the identity). To avoid clutter, we will abuse the notation of eq.~(\ref{phasescycles}) slightly and simply write our claim as
\begin{equation}
(X_{\sigma_1} X_{\sigma_2} \cdots X_{\sigma_k}) \geq (X_1 X_2 \cdots X_k)\,.
\label{minorder}
\end{equation}
That is, in this appendix we will use the same cycle notation to mark the phase and the total length of geodesics in that phase.

This claim follows from the strong subadditivity property of entanglement entropies. For later convenience, we will write explicitly two special instances of strong subadditivity. Consider four points $x_i$ ($i = 1, 2, 3, 4$) on a line or circle and assume that the index $i$ respects their spatial ordering. For $i < j$, let $(x_i, x_j)$ denote the interval\footnote{We do not distinguish between open and closed intervals, so $(x_1, x_2) \cup (x_2, x_3) = (x_1, x_3)$.} whose left endpoint is $x_i$ and right endpoint is $x_j$. On a line, we will supplant this standard notation with a counterpart for complements of intervals: for $i > j$, let $(x_i, x_j) \equiv (x_i, +\infty) \cup (-\infty, x_j)$. This definition matches the notation on a circle. Then the following inequalities are both instances of strong subadditivity:
\begin{align}
I\big(x_2 (x_3 x_4)x_1\big) \equiv S\big( (x_2, x_4) \big) + S\big( (x_1, x_3) \big) - S\big( (x_2, x_3) \big) - S\big( (x_1, x_4) \big) \geq 0, \nonumber \\
I\big(x_1(x_2 x_3)x_4\big) \equiv S\big( (x_1, x_3) \big) + S\big( (x_4, x_2) \big) - S\big( (x_1, x_2) \big) - S\big( (x_4, x_3) \big) \geq 0.
\label{ssaexplicit}
\end{align}
They are both of the form
\begin{equation}
I(A, C | B) \equiv S(AB) + S(BC) - S(B) - S(ABC) \geq 0\,,
\end{equation}
i.e., both represent the positivity of conditional mutual information $I(A, C | B)$, with:
\begin{align}
A = (x_1, x_2) \quad B = (x_2, x_3) \quad C=(x_3, x_4) \qquad & \textrm{in the first case and} \\
A = (x_4, x_1) \quad B = (x_1, x_2) \quad C=(x_2, x_3) \qquad & \textrm{in the second case.}
\end{align}
On a circle, there is no difference between the two cases but on a line they differ in that the `interval' $A$ includes infinity in the second case. In summary, expression $I\big(x_i (x_j x_k) x_l\big)$ is non-negative whenever $x_i, x_j, x_k, x_l$ form an ordered sequence on a circle (i.e., modulo a cyclic re-mapping).

We will prove eq.~(\ref{minorder}) by induction on the number of intervals $k$. At $k=1$ and $k=2$, the claim holds trivially. 
Assume that inequality~(\ref{minorder}) is true on $k-1$ intervals. It suffices to prove that:
\be
(X_{\s_1} X_{\s_2} \cd X_{\s_k}) - (X_1 X_2 \cd X_k) \ge (X_{\s_1} \hat X_{\s_2} \cd X_{\s_k}) - (X_1 X_2 \cd \hat X_{\s_2} \cd X_k),
\ee
where $\hat .$ stands for omitting the interval from the cycle. Now, because we are working in a pure state,  we recognize that
\begin{align}
(X_{\s_1} X_{\s_2} \cd X_{\s_k}) \! - \! (X_{\s_1} \hat X_{\s_2} \cd X_{\s_k})
\! & = \! S\big( (X_{\sigma_1}^R, X_{\sigma_2}^L) \big) 
\! + \! S\big( (X_{\sigma_2}^R, X_{\sigma_3}^L) \big) 
\! - \! S\big( (X_{\sigma_1}^R, X_{\sigma_3}^L) \big),\nonumber\\
(X_1 X_2 \cd X_k)\! -\! (X_1 X_2 \cd \hat X_{\s_2} \cd X_k) 
\! & = \! S\big( (X_{\sigma_2-1}^R, X_{\sigma_2}^L) \big) 
\! + \! S\big( (X_{\sigma_2}^R, X_{\sigma_2+1}^L) \big) 
\! - \! S\big( (X_{\sigma_2-1}^R, X_{\sigma_2+1}^L) \big) 
\end{align}
where $X^L$ and $X^R$ denote the left and right endpoints of interval $X$. Thus, our task reduces to showing that:
\bm
S\big( (X_{\sigma_2}^R, X_{\sigma_3}^L) \big)
- S\big( (X_{\sigma_1}^R, X_{\sigma_3}^L) \big) 
+ S\big( (X_{\sigma_1}^R, X_{\sigma_2}^L) \big) \\
- \, S\big( (X_{\sigma_2}^R, X_{\sigma_2+1}^L) \big)
+ S\big( (X_{\sigma_2-1}^R, X_{\sigma_2+1}^L) \big) 
- S\big( (X_{\sigma_2-1}^R, X_{\sigma_2}^L) \big) \geq 0.
\label{finalcrosses}
\em
This quantity is a combination of conditional mutual information quantities, which we saw to be nonnegative in inequalities~(\ref{ssaexplicit}). To see this, we distinguish two cases which depend on the ordering of $X_{\sigma_1}, X_{\sigma_2}$, and $X_{\sigma_3}$.

If $\sigma_1 < \sigma_2 < \sigma_3$, $\sigma_2 < \sigma_3 < \sigma_1$, or $\sigma_3 < \sigma_1 < \sigma_2$, the left hand side of (\ref{finalcrosses}) equals
\begin{align}
& S\big( (X_{\sigma_2-1}^R, X_{\sigma_2+1}^L) \big) + S\big( (X_{\sigma_1}^R, X_{\sigma_2}^L) \big)
- S\big( (X_{\sigma_2-1}^R, X_{\sigma_2}^L) \big) - S\big( (X_{\sigma_1}^R, X_{\sigma_2+1}^L) \big) \nonumber \\
+ \, & S\big( (X_{\sigma_2}^R, X_{\sigma_3}^L ) \big) + S\big( (X_{\sigma_1}^R, X_{\sigma_2+1}^L) \big)
- S\big( (X_{\sigma_2}^R, X_{\sigma_2+1}^L \big) - S\big( (X_{\sigma_1}^R, X_{\sigma_3}^L) \big) \nonumber \\
= \, & I\big( X_{\sigma_2-1}^R  (X_{\sigma_2}^L X_{\sigma_2+1}^L) X_{\sigma_1}^R \big) 
+ I\big( X_{\sigma_2}^R (X_{\sigma_2+1}^L X_{\sigma_3}^L) X_{\sigma_1}^R \big) \geq 0.
\end{align}
If $\sigma_1 < \sigma_3 < \sigma_2$, $\sigma_3 < \sigma_2 < \sigma_1$, or $\sigma_2 < \sigma_1 < \sigma_3$, the left hand side of (\ref{finalcrosses}) equals
\begin{align}
& S\big( (X_{\sigma_1}^R, X_{\sigma_2}^L) \big) + S\big( (X_{\sigma_2+1}^L, X_{\sigma_2-1}^R) \big)
- S\big( (X_{\sigma_1}^R, X_{\sigma_2-1}^R) \big) - S\big( (X_{\sigma_2+1}^L, X_{\sigma_2}^L) \big) \nonumber \\
+ \, & S\big( (X_{\sigma_3}^L, X_{\sigma_2}^R ) \big) + S\big( (X_{\sigma_2+1}^L, X_{\sigma_2}^L) \big)
- S\big( (X_{\sigma_3}^L, X_{\sigma_2}^L) \big) - S\big( (X_{\sigma_2+1}^L, X_{\sigma_2}^R \big) \nonumber \\
+ \, & S\big( (X_{\sigma_1}^R, X_{\sigma_2-1}^R) \big) + S\big( (X_{\sigma_2}^L, X_{\sigma_3}^L) \big)
- S\big( (X_{\sigma_1}^R, X_{\sigma_3}^L) \big) - S\big( (X_{\sigma_2}^L, X_{\sigma_2-1}^R) \big) \nonumber \\
= \, & I\big( X_{\sigma_1}^R (X_{\sigma_2-1}^R X_{\sigma_2}^L) X_{\sigma_2+1}^L \big) 
+ I\big( X_{\sigma_3}^L (X_{\sigma_2}^L X_{\sigma_2}^R) X_{\sigma_2+1}^L \big) 
+ I\big( X_{\sigma_1}^R (X_{\sigma_3}^L X_{\sigma_2-1}^R) X_{\sigma_2}^L \big) \geq 0\,.
\end{align}

\bibliographystyle{JHEP}
\bibliography{entropyconebibliography}

\providecommand{\href}[2]{#2}\begingroup\raggedright\begin{thebibliography}{10}

\bibitem{rt}
S.~Ryu and T.~Takayanagi, \emph{{Holographic derivation of entanglement entropy
  from AdS/CFT}},
  \href{https://doi.org/10.1103/PhysRevLett.96.181602}{\emph{Phys. Rev. Lett.}
  {\bfseries 96} (2006) 181602}
  [\href{https://arxiv.org/abs/hep-th/0603001}{{\ttfamily hep-th/0603001}}].

\bibitem{rt2}
S.~Ryu and T.~Takayanagi, \emph{{Aspects of Holographic Entanglement Entropy}},
  \href{https://doi.org/10.1088/1126-6708/2006/08/045}{\emph{JHEP} {\bfseries
  08} (2006) 045} [\href{https://arxiv.org/abs/hep-th/0605073}{{\ttfamily
  hep-th/0605073}}].

\bibitem{dualofrho}
B.~Czech, J.~L. Karczmarek, F.~Nogueira and M.~Van~Raamsdonk, \emph{{The
  Gravity Dual of a Density Matrix}},
  \href{https://doi.org/10.1088/0264-9381/29/15/155009}{\emph{Class. Quant.
  Grav.} {\bfseries 29} (2012) 155009}
  [\href{https://arxiv.org/abs/1204.1330}{{\ttfamily 1204.1330}}].

\bibitem{subregionduality}
X.~Dong, D.~Harlow and A.~C. Wall, \emph{{Reconstruction of Bulk Operators
  within the Entanglement Wedge in Gauge-Gravity Duality}},
  \href{https://doi.org/10.1103/PhysRevLett.117.021601}{\emph{Phys. Rev. Lett.}
  {\bfseries 117} (2016) 021601}
  [\href{https://arxiv.org/abs/1601.05416}{{\ttfamily 1601.05416}}].

\bibitem{errorcorr}
A.~Almheiri, X.~Dong and D.~Harlow, \emph{{Bulk Locality and Quantum Error
  Correction in AdS/CFT}},
  \href{https://doi.org/10.1007/JHEP04(2015)163}{\emph{JHEP} {\bfseries 04}
  (2015) 163} [\href{https://arxiv.org/abs/1411.7041}{{\ttfamily 1411.7041}}].

\bibitem{hrt}
V.~E. Hubeny, M.~Rangamani and T.~Takayanagi, \emph{{A Covariant holographic
  entanglement entropy proposal}},
  \href{https://doi.org/10.1088/1126-6708/2007/07/062}{\emph{JHEP} {\bfseries
  07} (2007) 062} [\href{https://arxiv.org/abs/0705.0016}{{\ttfamily
  0705.0016}}].

\bibitem{maximin}
A.~C. Wall, \emph{{Maximin Surfaces, and the Strong Subadditivity of the
  Covariant Holographic Entanglement Entropy}},
  \href{https://doi.org/10.1088/0264-9381/31/22/225007}{\emph{Class. Quant.
  Grav.} {\bfseries 31} (2014) 225007}
  [\href{https://arxiv.org/abs/1211.3494}{{\ttfamily 1211.3494}}].

\bibitem{inviolable}
N.~Lashkari, C.~Rabideau, P.~Sabella-Garnier and M.~Van~Raamsdonk,
  \emph{{Inviolable energy conditions from entanglement inequalities}},
  \href{https://doi.org/10.1007/JHEP06(2015)067}{\emph{JHEP} {\bfseries 06}
  (2015) 067} [\href{https://arxiv.org/abs/1412.3514}{{\ttfamily 1412.3514}}].

\bibitem{jamiejoe}
I.~Heemskerk, J.~Penedones, J.~Polchinski and J.~Sully, \emph{{Holography from
  Conformal Field Theory}},
  \href{https://doi.org/10.1088/1126-6708/2009/10/079}{\emph{JHEP} {\bfseries
  10} (2009) 079} [\href{https://arxiv.org/abs/0907.0151}{{\ttfamily
  0907.0151}}].

\bibitem{mellin}
A.~L. Fitzpatrick, J.~Kaplan, J.~Penedones, S.~Raju and B.~C. van Rees,
  \emph{{A Natural Language for AdS/CFT Correlators}},
  \href{https://doi.org/10.1007/JHEP11(2011)095}{\emph{JHEP} {\bfseries 11}
  (2011) 095} [\href{https://arxiv.org/abs/1107.1499}{{\ttfamily 1107.1499}}].

\bibitem{adsfromcft}
A.~L. Fitzpatrick and J.~Kaplan, \emph{{AdS Field Theory from Conformal Field
  Theory}}, \href{https://doi.org/10.1007/JHEP02(2013)054}{\emph{JHEP}
  {\bfseries 02} (2013) 054} [\href{https://arxiv.org/abs/1208.0337}{{\ttfamily
  1208.0337}}].

\bibitem{erepr}
J.~Maldacena and L.~Susskind, \emph{{Cool horizons for entangled black holes}},
  \href{https://doi.org/10.1002/prop.201300020}{\emph{Fortsch. Phys.}
  {\bfseries 61} (2013) 781} [\href{https://arxiv.org/abs/1306.0533}{{\ttfamily
  1306.0533}}].

\bibitem{universalspectrum}
T.~Hartman, C.~A. Keller and B.~Stoica, \emph{{Universal Spectrum of 2d
  Conformal Field Theory in the Large c Limit}},
  \href{https://doi.org/10.1007/JHEP09(2014)118}{\emph{JHEP} {\bfseries 09}
  (2014) 118} [\href{https://arxiv.org/abs/1405.5137}{{\ttfamily 1405.5137}}].

\bibitem{bulkpoint}
J.~Maldacena, D.~Simmons-Duffin and A.~Zhiboedov, \emph{{Looking for a bulk
  point}}, \href{https://doi.org/10.1007/JHEP01(2017)013}{\emph{JHEP}
  {\bfseries 01} (2017) 013}
  [\href{https://arxiv.org/abs/1509.03612}{{\ttfamily 1509.03612}}].

\bibitem{changlin}
C.-M. Chang and Y.-H. Lin, \emph{{Bootstrap, universality and horizons}},
  \href{https://doi.org/10.1007/JHEP10(2016)068}{\emph{JHEP} {\bfseries 10}
  (2016) 068} [\href{https://arxiv.org/abs/1604.01774}{{\ttfamily
  1604.01774}}].

\bibitem{entholonomies}
B.~Czech, L.~Lamprou and L.~Susskind, \emph{{Entanglement Holonomies}},
  \href{https://arxiv.org/abs/1807.04276}{{\ttfamily 1807.04276}}.

\bibitem{kabatlifschytz}
D.~Kabat and G.~Lifschytz, \emph{{Emergence of spacetime from the algebra of
  total modular Hamiltonians}},
  \href{https://arxiv.org/abs/1812.02915}{{\ttfamily 1812.02915}}.

\bibitem{hec}
N.~Bao, S.~Nezami, H.~Ooguri, B.~Stoica, J.~Sully and M.~Walter, \emph{{The
  Holographic Entropy Cone}},
  \href{https://doi.org/10.1007/JHEP09(2015)130}{\emph{JHEP} {\bfseries 09}
  (2015) 130} [\href{https://arxiv.org/abs/1505.07839}{{\ttfamily
  1505.07839}}].

\bibitem{monogamy}
P.~Hayden, M.~Headrick and A.~Maloney, \emph{{Holographic Mutual Information is
  Monogamous}}, \href{https://doi.org/10.1103/PhysRevD.87.046003}{\emph{Phys.
  Rev.} {\bfseries D87} (2013) 046003}
  [\href{https://arxiv.org/abs/1107.2940}{{\ttfamily 1107.2940}}].

\bibitem{Cuenca:2019uzx}
S.~Hernández~Cuenca, \emph{{The Holographic Entropy Cone for Five Regions}},
  \href{https://arxiv.org/abs/1903.09148}{{\ttfamily 1903.09148}}.

\bibitem{bitthreads}
M.~Freedman and M.~Headrick, \emph{{Bit threads and holographic entanglement}},
  \href{https://doi.org/10.1007/s00220-016-2796-3}{\emph{Commun. Math. Phys.}
  {\bfseries 352} (2017) 407}
  [\href{https://arxiv.org/abs/1604.00354}{{\ttfamily 1604.00354}}].

\bibitem{bitthreads2}
M.~Headrick and V.~E. Hubeny, \emph{{Riemannian and Lorentzian flow-cut
  theorems}}, \href{https://doi.org/10.1088/1361-6382/aab83c}{\emph{Class.
  Quant. Grav.} {\bfseries 35} (2018) 10}
  [\href{https://arxiv.org/abs/1710.09516}{{\ttfamily 1710.09516}}].

\bibitem{intgeometry}
B.~Czech, L.~Lamprou, S.~McCandlish and J.~Sully, \emph{{Integral Geometry and
  Holography}}, \href{https://doi.org/10.1007/JHEP10(2015)175}{\emph{JHEP}
  {\bfseries 10} (2015) 175}
  [\href{https://arxiv.org/abs/1505.05515}{{\ttfamily 1505.05515}}].

\bibitem{bitthreadmonogamy}
S.~X. Cui, P.~Hayden, T.~He, M.~Headrick, B.~Stoica and M.~Walter, \emph{{Bit
  Threads and Holographic Monogamy}},
  \href{https://arxiv.org/abs/1808.05234}{{\ttfamily 1808.05234}}.

\bibitem{coopflows}
V.~E. Hubeny, \emph{{Bulk locality and cooperative flows}},
  \href{https://doi.org/10.1007/JHEP12(2018)068}{\emph{JHEP} {\bfseries 12}
  (2018) 068} [\href{https://arxiv.org/abs/1808.05313}{{\ttfamily
  1808.05313}}].

\bibitem{diffentropy}
V.~Balasubramanian, B.~D. Chowdhury, B.~Czech, J.~de~Boer and M.~P. Heller,
  \emph{{Bulk curves from boundary data in holography}},
  \href{https://doi.org/10.1103/PhysRevD.89.086004}{\emph{Phys. Rev.}
  {\bfseries D89} (2014) 086004}
  [\href{https://arxiv.org/abs/1310.4204}{{\ttfamily 1310.4204}}].

\bibitem{entwinement}
V.~Balasubramanian, B.~D. Chowdhury, B.~Czech and J.~de~Boer,
  \emph{{Entwinement and the emergence of spacetime}},
  \href{https://doi.org/10.1007/JHEP01(2015)048}{\emph{JHEP} {\bfseries 01}
  (2015) 048} [\href{https://arxiv.org/abs/1406.5859}{{\ttfamily 1406.5859}}].

\bibitem{stereoscopic}
B.~Czech, L.~Lamprou, S.~McCandlish, B.~Mosk and J.~Sully, \emph{{A
  Stereoscopic Look into the Bulk}},
  \href{https://doi.org/10.1007/JHEP07(2016)129}{\emph{JHEP} {\bfseries 07}
  (2016) 129} [\href{https://arxiv.org/abs/1604.03110}{{\ttfamily
  1604.03110}}].

\bibitem{robjanmichal}
J.~de~Boer, F.~M. Haehl, M.~P. Heller and R.~C. Myers, \emph{{Entanglement,
  holography and causal diamonds}},
  \href{https://doi.org/10.1007/JHEP08(2016)162}{\emph{JHEP} {\bfseries 08}
  (2016) 162} [\href{https://arxiv.org/abs/1606.03307}{{\ttfamily
  1606.03307}}].

\bibitem{geodesicwitten}
E.~Hijano, P.~Kraus, E.~Perlmutter and R.~Snively, \emph{{Witten Diagrams
  Revisited: The AdS Geometry of Conformal Blocks}},
  \href{https://doi.org/10.1007/JHEP01(2016)146}{\emph{JHEP} {\bfseries 01}
  (2016) 146} [\href{https://arxiv.org/abs/1508.00501}{{\ttfamily
  1508.00501}}].

\bibitem{sepintime}
Y.~Chen, X.~Dong, A.~Lewkowycz and X.-L. Qi, \emph{{Modular Flow as a
  Disentangler}}, \href{https://doi.org/10.1007/JHEP12(2018)083}{\emph{JHEP}
  {\bfseries 12} (2018) 083}
  [\href{https://arxiv.org/abs/1806.09622}{{\ttfamily 1806.09622}}].

\bibitem{phasetrans}
M.~Headrick, \emph{{Entanglement Renyi entropies in holographic theories}},
  \href{https://doi.org/10.1103/PhysRevD.82.126010}{\emph{Phys. Rev.}
  {\bfseries D82} (2010) 126010}
  [\href{https://arxiv.org/abs/1006.0047}{{\ttfamily 1006.0047}}].

\bibitem{ssaref}
E.~H. Lieb and M.~B. Ruskai, \emph{{Proof of the strong subadditivity of
  quantum-mechanical entropy}},
  \href{https://doi.org/10.1063/1.1666274}{\emph{J. Math. Phys.} {\bfseries 14}
  (1973) 1938}.

\bibitem{holstaticssa}
M.~Headrick and T.~Takayanagi, \emph{{A Holographic proof of the strong
  subadditivity of entanglement entropy}},
  \href{https://doi.org/10.1103/PhysRevD.76.106013}{\emph{Phys. Rev.}
  {\bfseries D76} (2007) 106013}
  [\href{https://arxiv.org/abs/0704.3719}{{\ttfamily 0704.3719}}].

\bibitem{nettaaron}
N.~Engelhardt and A.~C. Wall, \emph{{Decoding the Apparent Horizon:
  Coarse-Grained Holographic Entropy}},
  \href{https://doi.org/10.1103/PhysRevLett.121.211301}{\emph{Phys. Rev. Lett.}
  {\bfseries 121} (2018) 211301}
  [\href{https://arxiv.org/abs/1706.02038}{{\ttfamily 1706.02038}}].

\bibitem{Dutta:2019gen}
S.~Dutta and T.~Faulkner, \emph{{A canonical purification for the entanglement
  wedge cross-section}},  \href{https://arxiv.org/abs/1905.00577}{{\ttfamily
  1905.00577}}.

\bibitem{tomsentanglement}
T.~Hartman, \emph{{Entanglement Entropy at Large Central Charge}},
  \href{https://arxiv.org/abs/1303.6955}{{\ttfamily 1303.6955}}.

\bibitem{seperstab}
S.~Nezami and M.~Walter, \emph{{Multipartite Entanglement in Stabilizer Tensor
  Networks}},  \href{https://arxiv.org/abs/1608.02595}{{\ttfamily 1608.02595}}.

\bibitem{entropyrelations}
V.~E. Hubeny, M.~Rangamani and M.~Rota, \emph{{Holographic entropy relations}},
  \href{https://doi.org/10.1002/prop.201800067}{\emph{Fortsch. Phys.}
  {\bfseries 66} (2018) 1800067}
  [\href{https://arxiv.org/abs/1808.07871}{{\ttfamily 1808.07871}}].

\end{thebibliography}\endgroup

\end{document}